\let\showhyphens\@undefined
\numberwithin{equation}{section}
\theoremstyle{plain}
\newtheorem{theorem}{Theorem}[section]
\newtheorem{proposition}[theorem]{Proposition}
\newtheorem{lemma}[theorem]{Lemma}
\newtheorem{corollary}[theorem]{Corollary}
\theoremstyle{definition}
\theoremstyle{remark}
\newtheorem{remark}[theorem]{Remark}
\newtheorem*{remark*}{Remark}
\begin{document}

\begin{center}
  {\LARGE Quantum-Theoretical Re-interpretation of Pricing Theory\par} % 无加粗，小型大写；若不想小型大写就删 \scshape
  \vspace{0.8em}
  {\large Tian Xin\par} % 作者不加粗
  \vspace{0.5em}
  {\normalsize
  Johns Hopkins University\\
  AMSS Center for Forecasting Science, Chinese Academy of Sciences\\
  \texttt{xtian21@jh.edu}}
  \vspace{0.7em}

 {\normalsize \ Draft Version: \today}

\end{center}

% —— 不要横线：确保删除/注释掉下面两行 —— 
% \vspace{0.6em}
% \hrule
% \vspace{1.0em}

\begin{abstract}

Conventional pricing-including Black-Scholes-Merton and its latent-factor descendants-relies on unobservable "information" (filtrations, hidden diffusions, state variables) whose empirical identification is fragile. 
Based on Heisenberg’s observable-only stance, we \emph{abandon unobservables} and rebuild pricing dynamics from \emph{observable transitions} between price states. 
Our thesis is that the algebra generated by observable price shifts already contains the noncommutative structure needed for a complete theory.

\textbf{First principles and mathematical construction.}
We work on a price lattice \(S_n=S_0+n\Delta S\) with Hilbert space \(H=\ell^2(\mathbb Z)\), price operator \(S|n\rangle=S_n|n\rangle\), and unitary shifts \(T_\alpha|n\rangle=|n+\alpha\rangle\).
A real Borel \emph{frequency map} \(f(S)\) (spectral calculus) induces \emph{frequency operators}
\[
\widehat{\omega}_\alpha:=f(S)-T_{-\alpha}f(S)T_\alpha,
\]
whose diagonal elements \(\omega(n,n-\alpha)=f(S_n)-f(S_{n-\alpha})\) form an empirical ``return ledger'' obeying an exact Ritz-type combination law. 
Phases of any observable \(X\) evolve by \(\langle n|X(t)|m\rangle=\langle n|X|m\rangle e^{\,i[f(S_n)-f(S_m)]t}\).
Cross-state propagation is governed by a translation-invariant \emph{convolution generator}
\[
H_{\mathrm{conv}}=\hbar\sum_{\alpha\in\mathbb Z}K(\alpha)\,T_\alpha,
\]
diagonal in Fourier space with dispersion \(E(k)=\hbar\sum_\alpha K(\alpha)e^{-ik\alpha}\).
Together with Kraus operators \(L_\alpha=\sqrt{\gamma_\alpha}T_\alpha\), the construction yields a completely positive, translation-covariant semigroup and, under risk neutrality, the \emph{nonlocal pricing PDE}
\[
\partial_tV(t,s)+\sum_{\alpha}\gamma_\alpha\!\left(V\!\left(t,se^{\alpha\Delta x}\right)-V(t,s)\right)-rV(t,s)=0,\qquad 
\sum_{\alpha}\gamma_\alpha\!\left(e^{\alpha\Delta x}-1\right)=r,
\]
solvable by Fourier methods.
In the small-mesh diffusive limit (finite first/second jump moments, vanishing third), the generator contracts to the BSM operator and the classical model is recovered.

We propose a \emph{new foundation for constructing financial models}: (i) \emph{Observable}---no latent states or wavefunctions are assumed; all parameters tie to measured transitions; 
(ii) \emph{First-principles}---built from the representation of the shift group on \(H\) and spectral calculus of \(S\); 
(iii) \emph{Mathematically natural}---a GKSL (Lindblad) semigroup that is translation-covariant, with BSM as a scaling limit. 
Unlike path-integral/Schrödinger or noncommutative-probability routes in ``quantum finance'' (e.g., Baaquie; Accardi), noncommutativity here \emph{emerges} from the observable shift algebra rather than being postulated.

(i) \emph{Multi-asset systems}: extend to \(\mathbb Z^d\) with vector shifts \(T_{\boldsymbol\alpha}\) and block kernels \(K(\boldsymbol\alpha)\) to model state-coupled portfolios and cross-asset propagation. 
(ii) \emph{Financial interactions}: introduce state- or flow-dependent kernels \(K(\alpha;S)\) (mean-field/nonlinear master equations) within this algebra to capture liquidity spirals and herding, while preserving linear pricing under the risk-neutral measure. 
(iii) \emph{New predictions}: tails and short-maturity smiles are determined by the far- and near-field structure of \(\gamma_\alpha\); the framework implies specific scaling relations between extreme-event probabilities, jump intensities, and wing slopes of implied volatility that are empirically testable.
\end{abstract}

\noindent\textbf{Keywords:} Fourier expansion, Dispersion, Transition, Uncertainty principle

\vspace{3ex}

\section*{1. Introduction: From Classical Assumptions to Observable Quantities}

The traditional Black-Scholes-Merton (BSM) option pricing framework is built upon the \textit{Efficient Market Hypothesis} (EMH), which posits that asset prices reflect all available information. This paradigm allows for the construction of a risk-neutral measure under which the discounted price process becomes a martingale. However, the BSM model critically depends on an \emph{unobservable} quantity—information—which cannot be directly measured. As a result, various empirical deviations emerge: asset returns exhibit \emph{fat tails} rather than log-normality, and volatility clusters in ways that contradict the assumption of constant variance.

This paper proposes a fundamental shift in the theoretical formulation of financial modeling. Inspired by Heisenberg's 1925 reinterpretation of classical kinematics, we advocate for a \emph{complete abandonment of unobservable variables} in finance. Just as Heisenberg rejected unobservable electron orbits and instead focused solely on measurable spectral lines, we argue that financial theory should relinquish any reliance on ``information'' as a hidden variable. Instead, it should be constructed exclusively on relationships between \textbf{observable quantities}.

In this quantum-inspired financial framework, we focus on observable phenomena such as state-to-state price transitions, characteristic transition frequencies, and transition amplitude strengths. We propose that financial quantities should no longer be defined axiomatically or hypothetically, but instead be derived from \emph{observable transition behaviors} and their natural differential relationships.
\paragraph{Mathematical setting (price lattice and conjugacy).
Grid, Hilbert space, and positive spectrum.}
Fix $\Delta S>0$ and $S_0>0$. Work on the Hilbert space 
$\mathcal H=\ell^2(\mathbb Z)$ with canonical basis $\{|n\rangle\}_{n\in\mathbb Z}$.
Define the price operator $S|n\rangle=(S_0+n\Delta S)|n\rangle$ and the shift 
$T_\alpha|n\rangle=|n+\alpha\rangle$ for $\alpha\in\mathbb Z$. 
We carry out all functional calculus on the invariant subspace
\[
\mathcal H_+ \;=\; \overline{\mathrm{span}}\{\,|n\rangle:\; S_0+n\Delta S>0\,\},
\]
so that $\log S$ is well defined in $\mathcal H_+$ via the spectral calculus (Borel).

\paragraph{Index separation and Fourier normalization.}
We use $n,m\in\mathbb Z$ exclusively as \emph{state indices} and 
$r\in\mathbb Z$ exclusively as the \emph{Fourier mode index}. 
Let $\mathbb T:=(-\pi,\pi]$ and let 
\[
(\mathcal F\psi)(k):=\sum_{n\in\mathbb Z}\psi_n\,e^{-ikn},\qquad 
\psi_n:=\frac{1}{2\pi}\int_{-\pi}^{\pi} (\mathcal F\psi)(k)\,e^{ikn}\,dk,
\]
which induces a unitary map $\mathcal F:\ell^2(\mathbb Z)\to L^2(\mathbb T,\tfrac{dk}{2\pi})$.
We also write $\langle \phi,\psi\rangle=\sum_n \overline{\phi_n}\psi_n$ and 
$\|\psi\|_2^2=\sum_n|\psi_n|^2=\frac{1}{2\pi}\int_{-\pi}^{\pi}|(\mathcal F\psi)(k)|^2\,dk$.

Let $\mathcal H=\ell^2(\mathbb Z)$ with basis $\{|n\rangle\}_{n\in\mathbb Z}$ and inner product $\langle\cdot,\cdot\rangle$ (linear in the second entry).
Define the price lattice $S_n:=S_0+n\,\Delta S$ and the self-adjoint price operator $\hat S|n\rangle=S_n|n\rangle$ (on its maximal domain).
We work on a separable Hilbert space $\mathcal H=\ell^2(\mathbb Z)$ with canonical basis $\{|n\rangle\}_{n\in\mathbb Z}$ and inner product $\langle\cdot,\cdot\rangle$ (linear in the second entry).
Define the \emph{price operator} $\hat S$ by
\[
\hat S\,|n\rangle = (S_0+n\,\Delta S)\,|n\rangle,
\]
which is self-adjoint on its maximal domain in $\mathcal H$.

Let $T_a$ denote the unitary shift $T_a|n\rangle=|n+a\rangle$ and let $M_\theta$ denote the unitary phase $(M_\theta\psi)_n=e^{i\theta n}\psi_n$.
They satisfy the discrete Weyl relations
\[
M_\theta T_a \;=\; e^{i\theta a}\, T_a M_\theta \qquad (a\in\mathbb Z,\;\theta\in\mathbb R).
\]
We introduce the \emph{conjugate momentum operator} $\hat P$ as the self-adjoint generator of price shifts via
\[
T_a \;=\; \exp\!\left\{-\,\frac{i}{\hbar}\,a\,\Delta S\,\hat P\right\}
\]
on a common dense invariant core $\mathcal D_0\subset\mathcal H$ (e.g., finitely supported vectors).
On $\mathcal D_0$, the Weyl relations imply the infinitesimal canonical commutation relation
\begin{equation}
    [\hat S,\hat P]\Psi \;=\; i\hbar\,\Psi, \qquad \Psi\in\mathcal D_0,
\end{equation}
so that the \emph{financial uncertainty principle} for any unit $\Psi\in\mathcal D_0$ reads
\begin{equation}
    \Delta S \cdot \Delta P \;\ge\; \frac{\hbar}{2},
\end{equation}
where $\Delta S:=\|\hat S-\langle\Psi,\hat S\Psi\rangle I\|_{\Psi}$ and
$\Delta P:=\|\hat P-\langle\Psi,\hat P\Psi\rangle I\|_{\Psi}$.
The detailed derivation is given in Appendix~\ref{app:UP}.

\section{Transition Frequencies and Difference Structures:A Quantum Reinterpretation of Fourier Dynamics}

\paragraph{Setting and notation.}
Let $\mathcal H=\ell^2(\mathbb Z)$ with canonical basis $\{|n\rangle\}_{n\in\mathbb Z}$ and inner product $\langle\cdot,\cdot\rangle$ (linear in the second entry).
Define the \emph{price lattice}
\[
S_n := S_0+n\,\Delta S,\qquad n\in\mathbb Z,
\]
and the self-adjoint \emph{price operator} $\hat S$ by $\hat S|n\rangle=S_n|n\rangle$ (on its maximal domain).
Let $T_a|n\rangle:=|n+a\rangle$ be the unitary shift; for any Borel $f:\mathbb R\to\mathbb R$, spectral calculus yields the self-adjoint operator $f(\hat S)$.

\subsection{Definition and Additivity of Transition Frequencies}

\textbf{Definition (frequency operator and scalar transition frequency).}
For a fixed gap $\alpha\in\mathbb Z$, define the \emph{frequency operator}
\begin{equation}
\widehat\omega_\alpha \;:=\; f(\hat S)-T_{-\alpha}\,f(\hat S)\,T_\alpha .
\label{eq:freq-operator}
\end{equation}
Its diagonal matrix elements in the pointer basis give the \emph{transition frequency}
\begin{equation}
\omega(n,n-\alpha) \;:=\; \langle n|\widehat\omega_\alpha|n\rangle
\;=\; f(S_n)-f(S_{n-\alpha}) .
\label{eq:scalar-omega}
\end{equation}
Thus, a transition $n\to n-\alpha$ carries the frequency prescribed by the difference of the spectral values of $f(\hat S)$ at the two states.

\textbf{Proposition (combination principle, operator form).}
For any $\alpha,\beta\in\mathbb Z$,
\begin{equation}
\widehat\omega_{\alpha+\beta}
\;=\;
\widehat\omega_\alpha \;+\; T_{-\alpha}\,\widehat\omega_\beta\,T_\alpha .
\label{eq:comb-operator}
\end{equation}
\emph{Proof.}
Using $T_{-(\alpha+\beta)}=T_{-\alpha}T_{-\beta}$ and $T_\alpha T_\beta=T_{\alpha+\beta}$,
\[
\widehat\omega_{\alpha+\beta}
= f(\hat S)-T_{-(\alpha+\beta)}\,f(\hat S)\,T_{\alpha+\beta}
= f(\hat S)-T_{-\alpha}\!\left(T_{-\beta}f(\hat S)T_\beta\right)\!T_\alpha
= \widehat\omega_\alpha + T_{-\alpha}\widehat\omega_\beta T_\alpha .
\quad\square
\]

\textbf{Corollary (additivity / telescoping identity).}
Taking $\langle n|\cdot|n\rangle$ of \eqref{eq:comb-operator} yields
\begin{equation}
\omega(n,n-\alpha)+\omega(n-\alpha,n-\alpha-\beta)=\omega(n,n-\alpha-\beta).
\label{eq:comb-scalar}
\end{equation}

In particular, by \eqref{eq:comb-operator} the frequency of a two-step transition equals that of the direct transition from the initial to the final state.

\subsection*{Correspondence Principle and the Choice of \texorpdfstring{$f(S_n)$}{f(Sn)}}

Equation~\eqref{eq:scalar-omega} defines transition frequency as a \emph{difference of spectral values}; \eqref{eq:comb-scalar} follows identically from the operator identity \eqref{eq:comb-operator}.
To connect with the classical (small-step) limit, write for fixed $\alpha$ and small $\Delta S$:
\begin{equation}
\omega(n,n-\alpha)
= f(S_n)-f(S_{n-\alpha})
= f'(S_n)\,\alpha\,\Delta S + \tfrac12 f''(S_n)\,(\alpha\,\Delta S)^2 + \cdots .
\label{eq:taylor}
\end{equation}
Hence the \emph{local frequency density} is $g(S):=f'(S)$; the classical limit is governed to first order by $g(S)$ and does not require $f''\equiv 0$.

To determine a canonical form of $f$, we impose a homogeneity constraint that reflects the structure we wish to model.

Recall $S_n=S_0+n\,\Delta S$ and $\omega(n,n-\alpha)=f(S_n)-f(S_{n-\alpha})$.

\begin{lemma}[index homogeneity $\Rightarrow$ affine $f$]\label{lem:affine}
Assume that for each fixed gap $\alpha\in\mathbb{Z}$ the transition frequency
$\omega(n,n-\alpha)$ is independent of $n$. Then there exist constants $a,b\in\mathbb{R}$
such that
\begin{equation}\label{eq:f-affine-again}
  f(S)=aS+b .
\end{equation}
\end{lemma}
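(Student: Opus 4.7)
The plan is to specialize the hypothesis to the minimal gap $\alpha=1$, derive an arithmetic-progression law for the lattice sequence $\{f(S_n)\}_{n\in\mathbb Z}$, and then translate that back into an affine function of $S$ via the lattice relation $n=(S_n-S_0)/\Delta S$. The hypothesis for larger $\alpha$ will then be automatically compatible by telescoping, so only the $\alpha=1$ instance is actually needed.

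First I would set $c:=f(S_n)-f(S_{n-1})$, which by the $\alpha=1$ assumption is a single real constant independent of $n$. A one-line telescoping argument (induction for $n\ge 0$, and iteration of $f(S_{n-1})=f(S_n)-c$ for $n<0$) gives
\[
f(S_n)=f(S_0)+n\,c \qquad \text{for all } n\in\mathbb Z .
\]
Next I would substitute $n=(S_n-S_0)/\Delta S$ into the previous identity to rewrite
\[
f(S_n)=\frac{c}{\Delta S}\,S_n+\Bigl(f(S_0)-\tfrac{c}{\Delta S}\,S_0\Bigr)=a\,S_n+b,
\]
with $a:=c/\Delta S$ and $b:=f(S_0)-a\,S_0$, which is \eqref{eq:f-affine-again}. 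As a quick consistency check, $f(S_n)-f(S_{n-\alpha})=\alpha\,a\,\Delta S$ is indeed independent of $n$ for every $\alpha\in\mathbb Z$, reproducing the full hypothesis.

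I do not expect a genuine obstacle in this argument; the only point worth flagging is that the telescoping pins $f$ down only on the spectrum $\{S_n:n\in\mathbb Z\}$ of $\hat S$. Since the spectral calculus of $f(\hat S)$ depends only on the values of $f$ on that spectrum, the stated conclusion $f(S)=a\,S+b$ should be read either as an identity on the lattice or, equivalently, via its canonical affine Borel extension to all of $\mathbb R$; either reading is sufficient for all subsequent uses of $f(\hat S)$ in the paper.
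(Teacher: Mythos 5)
Your proposal is correct and follows essentially the same route as the paper's own proof: specialize to $\alpha=1$, telescope to get $f(S_n)=f(S_0)+nc$, and substitute $n=(S_n-S_0)/\Delta S$ to read off $a=c/\Delta S$ and $b=f(S_0)-aS_0$. Your closing remark that the identity is pinned only on the lattice points and extended via the spectral calculus matches the paper's final sentence exactly.
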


\begin{proof}
Take $\alpha=1$. Independence of $n$ gives
\begin{equation*}
  f(S_n)-f(S_{n-1}) = c \qquad \text{for all } n\in\mathbb{Z}.
\end{equation*}
Summing from $1$ to $n$ yields the telescoping identity
\begin{equation*}
  f(S_n) = f(S_0) + c\,n .
\end{equation*}
Since $S_n=S_0+n\,\Delta S$, we obtain
\begin{equation*}
  f(S_n)= f(S_0) + \frac{c}{\Delta S}(S_n-S_0)
        = \underbrace{\frac{c}{\Delta S}}_{=:a}\, S_n
          + \underbrace{\big(f(S_0)-aS_0\big)}_{=:b},
\end{equation*}
which is exactly \eqref{eq:f-affine-again} on the lattice points $\{S_n\}$. By spectral
calculus this defines the affine map on the spectrum of $\hat S$. \qedhere
\end{proof}

\medskip
\textbf{Proposition (scale homogeneity $\Rightarrow$ logarithmic $f$).}
Assume instead that there exists $H:(0,\infty)\to\mathbb R$ such that
\[
\omega(n,n-\alpha)=H\!\left(\frac{S_n}{S_{n-\alpha}}\right) \quad \text{for all } n,\alpha.
\]
Then the combination principle implies
\begin{equation}
H(xy)=H(x)+H(y)\qquad(x,y>0).
\label{eq:Cauchy-mult}
\end{equation}
If $H$ is continuous at $1$ (or measurable on any interval), then
\begin{equation}
H(x)=c\log x \quad\text{and hence}\quad f(S)=c\log S + b .
\label{eq:f-log-again}
\end{equation}
\emph{Proof.}
Using the telescoping identity,
\[
\omega(n,n-\alpha)+\omega(n-\alpha,n-\alpha-\beta)
=H\!\left(\frac{S_n}{S_{n-\alpha}}\right)
+H\!\left(\frac{S_{n-\alpha}}{S_{n-\alpha-\beta}}\right)
=H\!\left(\frac{S_n}{S_{n-\alpha-\beta}}\right)
=\omega(n,n-\alpha-\beta),
\]
which is precisely \eqref{eq:Cauchy-mult}. Regular solutions of the multiplicative Cauchy equation are $H(x)=c\log x$, giving
\(
\omega(n,n-\alpha)=c\log(S_n/S_{n-\alpha})
= \big[c\log S_n + b\big]-\big[c\log S_{n-\alpha}+b\big],
\)
so $f(S)=c\log S + b$. \hfill$\square$

\medskip
Both \eqref{eq:f-affine-again} and \eqref{eq:f-log-again} satisfy the additivity identity
\(
\omega(n,n-\alpha)+\omega(n-\alpha,n-\alpha-\beta)=\omega(n,n-\alpha-\beta)
\)
because that identity is a direct consequence of the difference/shift algebra.
In the remainder we adopt the linear specification $f(S_n)=\omega_0 S_n$ for concreteness.

\subsection{Difference Operator Representation of Transition Frequencies：A Quantum Reinterpretation of Fourier Dynamics}

Define the backward difference acting on the sequence $u(n):=f(S_n)$ by
\[
(D_\alpha u)(n):=u(n)-u(n-\alpha),\qquad \alpha\in\mathbb Z.
\]
Then
\begin{equation}
\omega(n,n-\alpha) = (D_\alpha u)(n) ,
\label{eq:diff-form}
\end{equation}
and the family $\{D_\alpha\}$ satisfies the telescoping identity
\begin{equation}
D_{\alpha+\beta} = D_\alpha + T_{-\alpha}D_\beta,
\label{eq:D-identity}
\end{equation}
whose evaluation at $u(n)$ is precisely \eqref{eq:comb-scalar}.
Thus the additivity of observable transition frequencies is an immediate consequence of the algebra of differences and shifts on the lattice.

\subsection{Equivalent Paths and Transition Symmetry}

Consider the transition $n\to n-\alpha-\beta$ via two paths:
\[
n \to n-\alpha \to n-\alpha-\beta
\qquad\text{and}\qquad
n \to n-\beta \to n-\alpha-\beta .
\]
By \eqref{eq:comb-scalar},
\[
\omega(n,n-\alpha)+\omega(n-\alpha,n-\alpha-\beta)
= \omega(n,n-\alpha-\beta)
= \omega(n,n-\beta)+\omega(n-\beta,n-\alpha-\beta).
\]
This path independence is the scalar manifestation of the operator identity \eqref{eq:comb-operator} and encapsulates the fundamental symmetry of the transition algebra: Observable frequencies depend only on state differences, not on the path taken.

In the context of this theory, these structural insights motivate a transition away from modeling price trajectories as continuous stochastic processes. Instead, we advocate a framework where prices are understood through discrete state transitions, each associated with a quantifiable frequency and amplitude. The financial quantities (e.g., momentum, volatility, return) must therefore be redefined as functionals over these transitions, respecting the underlying algebraic constraints of difference operators. This section reinterprets the classical Fourier analysis in light of quantum transition principles. The key outcomes are:
\begin{itemize}
    \item Transition frequencies are defined only between observable quantum states.
    \item A strict additive structure (combination principle) governs all transition frequencies.
    \item Transition frequencies can be expressed as discrete differences of a state-dependent function.
    \item Equivalent paths yield the same overall frequency, indicating path-independence in observable dynamics.
\end{itemize}

These results provide the mathematical foundation for constructing quantum-inspired financial operators and dynamic equations, which will be developed in the next section.

\section{Transition Amplitude and Difference Structures: A Quantum Reinterpretation of Fourier Dynamics}

% ===== 2.1 Setting and notation: patch =====

\subsection{Transition Amplitudes and Propagator Structure (purely operator-theoretic)}

Let $\mathcal H=\ell^2(\mathbb Z)$ with orthonormal basis $\{|n\rangle\}_{n\in\mathbb Z}$ and inner product linear in the second entry.
Let $H$ be self-adjoint on a dense domain $ Dom(H)\subset\mathcal H$ and set
\[
U(t):=e^{-\frac{i}{\hbar}Ht}\quad (t\in\mathbb R),
\]
which is a strongly continuous one-parameter unitary group (Stone’s theorem). For $m,n\in\mathbb Z$ define the \emph{transition amplitude}
\[
\Omega_t(n,m):=\langle n|U(t)|m\rangle,
\qquad
P_t(n\!\leftarrow\! m):=|\Omega_t(n,m)|^2.
\]

\paragraph{Basic facts on the basis $\{|n\rangle\}$.}
The series $\sum_{k\in\mathbb Z}|k\rangle\langle k|$ converges to the identity $I$ in the \emph{strong} operator topology; i.e.,
for every $x\in\mathcal H$,
\[
\sum_{k=-N}^{N}|k\rangle\langle k|\,x \;\longrightarrow\; x
\quad \text{in }\mathcal H \text{ as }N\to\infty.
\]
Moreover, for any $x,y\in\mathcal H$ the scalar series $\sum_k \langle x|k\rangle\langle k|y\rangle$ converges absolutely and
\[
\Big|\sum_k \langle x|k\rangle\langle k|y\rangle\Big|
\;\le\; \Big(\sum_k|\langle x|k\rangle|^2\Big)^{\!1/2}
        \Big(\sum_k|\langle y|k\rangle|^2\Big)^{\!1/2}
\;=\;\|x\|\,\|y\|,
\]
so that $\langle x,y\rangle=\sum_k \langle x|k\rangle\langle k|y\rangle$ (Cauchy–Schwarz + Parseval).

\begin{proposition}[Column normalization and probability conservation]\label{prop:column-norm}
For each fixed $m\in\mathbb Z$ and every $t\in\mathbb R$,
\[
\sum_{n\in\mathbb Z} |\Omega_t(n,m)|^2 \;=\; 1.
\]
\end{proposition}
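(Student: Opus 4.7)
The plan is to reduce the identity to the unitarity of $U(t)$ combined with Parseval's identity in the orthonormal basis $\{|n\rangle\}_{n\in\mathbb Z}$. First I would fix $m\in\mathbb Z$ and $t\in\mathbb R$ and introduce the abbreviation $\psi_t := U(t)|m\rangle\in\mathcal H$, so that $\Omega_t(n,m)=\langle n|\psi_t\rangle$ and the target sum becomes $\sum_{n\in\mathbb Z}|\langle n|\psi_t\rangle|^2$.

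Second, I would invoke Parseval's identity on the separable Hilbert space $\mathcal H=\ell^2(\mathbb Z)$: because $\{|n\rangle\}$ is an orthonormal basis, the ``basic facts'' paragraph preceding the proposition already gives strong convergence of the partial sums of $|k\rangle\langle k|$ to $I$, which specialized to $\psi_t$ yields $\sum_n|\langle n|\psi_t\rangle|^2=\|\psi_t\|^2$. Third, I would appeal to the unitarity of $U(t)$ provided by Stone's theorem applied to the self-adjoint generator $H$: this gives $U(t)^*U(t)=I$, so that $\|\psi_t\|^2=\langle m|U(t)^*U(t)|m\rangle=\langle m|m\rangle=1$. Chaining the three identities produces the claim, uniformly in $t\in\mathbb R$.

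The main obstacle is essentially bookkeeping rather than mathematics: once unitarity and Parseval are in hand there is no genuinely hard step. The only subtlety worth flagging in the write-up is that Parseval presupposes $\{|n\rangle\}$ is a \emph{maximal} orthonormal set in $\ell^2(\mathbb Z)$, which is built into the setup but merits one line of justification. I would therefore keep the final proof to a few lines, point back to the preceding basis-convergence facts rather than re-deriving the Cauchy--Schwarz/Parseval estimate, and stress that the equality is strict (not merely $\le 1$) precisely because unitary groups preserve the full $\ell^2$ norm, whereas the more common sub-Markov bound would only give contractivity.
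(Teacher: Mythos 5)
Your argument is correct and coincides with the paper's own proof: both reduce the sum to $\|U(t)|m\rangle\|^2$ via Parseval in the orthonormal basis $\{|n\rangle\}$ and then invoke unitarity of $U(t)$ to conclude the norm equals $1$. No gaps; the remark about maximality of the basis is a fine (if optional) touch.
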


\begin{proof}
Since $U(t)$ is unitary, $\|U(t)|m\rangle\|=\||m\rangle\|=1$.
Expanding $U(t)|m\rangle$ in the orthonormal basis and using Parseval,
\[
\sum_{n\in\mathbb Z} |\Omega_t(n,m)|^2
= \sum_{n\in\mathbb Z} |\langle n|U(t)|m\rangle|^2
= \|U(t)|m\rangle\|^2
= 1.
\]
Equivalently, using the strong resolution of identity,
\begin{align}
  \sum_{n} \lvert \Omega_t(n,m) \rvert^2
  &= \sum_{n} \bra{m}U(t)^\ast\ket{n}\bra{n}U(t)\ket{m} \notag\\
  &= \bra{m}U(t)^\ast\!\left(\sum_n \ket{n}\!\bra{n}\right)\!U(t)\ket{m} \notag\\
  &= \bra{m}U(t)^\ast U(t)\ket{m} = \braket{m|m}=1 .
\end{align}

The interchange of sum and inner product is justified by the absolute convergence remark above.
\end{proof}

\begin{proposition}[Composition law for amplitudes]\label{prop:composition}
For all $s,t\in\mathbb R$ and $m,n\in\mathbb Z$,
\[
\Omega_{t+s}(n,m) \;=\; \sum_{k\in\mathbb Z} \Omega_t(n,k)\,\Omega_s(k,m),
\]
with the series on the right converging absolutely.
\end{proposition}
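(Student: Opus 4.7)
\medskip

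The plan is to reduce the identity to the group property $U(t+s)=U(t)U(s)$ (available from Stone's theorem, since $U(\cdot)$ is a strongly continuous one-parameter unitary group) by inserting a strong resolution of the identity between the two factors. Concretely, I would first write
\[
\Omega_{t+s}(n,m)=\langle n|U(t+s)|m\rangle=\langle n|U(t)\,U(s)|m\rangle,
\]
and set $x:=U(t)^{*}|n\rangle\in\mathcal H$, $y:=U(s)|m\rangle\in\mathcal H$, so that
\[
\Omega_t(n,k)=\langle n|U(t)|k\rangle=\overline{\langle x|k\rangle},\qquad
\Omega_s(k,m)=\langle k|U(s)|m\rangle=\langle k|y\rangle.
\]
Then the right-hand series becomes $\sum_k \overline{\langle x|k\rangle}\,\langle k|y\rangle=\langle x,y\rangle$ in the sense of Parseval, which is precisely $\langle U(t)^{*}n,\,U(s)m\rangle=\langle n|U(t)U(s)|m\rangle=\Omega_{t+s}(n,m)$.

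Second, I would establish absolute convergence before doing the interchange. Because $U(t)$ and $U(s)$ are unitary, $\|x\|=\|y\|=1$, and the absolute-convergence estimate recalled just before the proposition gives
\[
\sum_{k\in\mathbb Z}\bigl|\Omega_t(n,k)\,\Omega_s(k,m)\bigr|
=\sum_{k}\bigl|\langle x|k\rangle\bigr|\,\bigl|\langle k|y\rangle\bigr|
\;\le\;\|x\|\,\|y\|=1
\]
by Cauchy--Schwarz and Parseval. This both delivers the absolute-convergence claim of the proposition and justifies termwise evaluation.

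Third, with absolute convergence in hand, I would justify the interchange of the inner product with the sum by using that $\sum_{|k|\le N}|k\rangle\langle k|\to I$ strongly. Write
\[
\langle n|U(t)\Bigl(\sum_{|k|\le N}|k\rangle\langle k|\Bigr)U(s)|m\rangle
=\sum_{|k|\le N}\Omega_t(n,k)\,\Omega_s(k,m).
\]
The left side tends to $\langle n|U(t)U(s)|m\rangle=\Omega_{t+s}(n,m)$ as $N\to\infty$ by strong convergence and continuity of the inner product, while the right side tends to $\sum_{k\in\mathbb Z}\Omega_t(n,k)\,\Omega_s(k,m)$ by absolute convergence. Equating the limits gives the composition law.

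I do not expect a real obstacle here: Stone's theorem is quoted earlier in the excerpt and supplies $U(t+s)=U(t)U(s)$, and the paper has already recorded the Cauchy--Schwarz/Parseval bound needed to legitimize the interchange. The only delicate point is being careful that the inserted partial sum $\sum_{|k|\le N}|k\rangle\langle k|$ is a bounded projection so that one may genuinely move it through the inner product and through the bounded operator $U(t)$; this is immediate from the strong convergence stated in the paragraph preceding the proposition, so the argument is essentially a bookkeeping exercise.
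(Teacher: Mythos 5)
Your proposal is correct and follows essentially the same route as the paper's proof: reduce to $U(t+s)=U(t)U(s)$, insert the strong resolution of identity $\sum_k|k\rangle\langle k|=I$, and obtain absolute convergence from the Cauchy--Schwarz/Parseval bound with $\|U(t)^*|n\rangle\|=\|U(s)|m\rangle\|=1$. The extra care you take with the partial-sum projections is a minor elaboration of the same argument, not a different method.
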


\begin{proof}
By the group property $U(t+s)=U(t)U(s)$,
\[
\Omega_{t+s}(n,m)
= \langle n|U(t+s)|m\rangle
= \langle n|U(t)U(s)|m\rangle.
\]
Insert the strong identity $\sum_{k}|k\rangle\langle k|=I$ between $U(t)$ and $U(s)$:
\[
\langle n|U(t)\Big(\sum_{k}|k\rangle\langle k|\Big)U(s)|m\rangle
= \sum_{k} \langle n|U(t)|k\rangle\,\langle k|U(s)|m\rangle
= \sum_{k} \Omega_t(n,k)\,\Omega_s(k,m).
\]
To justify absolute convergence, note that for fixed $n$ the row vector $a_k:=\Omega_t(n,k)=\langle n|U(t)|k\rangle$ is in $\ell^2(\mathbb Z)$ with
$\sum_k|a_k|^2=\|U(t)^*|n\rangle\|^2=1$, and for fixed $m$ the column vector $b_k:=\Omega_s(k,m)=\langle k|U(s)|m\rangle$ is in $\ell^2(\mathbb Z)$ with
$\sum_k|b_k|^2=\|U(s)|m\rangle\|^2=1$. Hence
\[
\sum_{k} |\Omega_t(n,k)\,\Omega_s(k,m)|
\;\le\; \Big(\sum_k |a_k|^2\Big)^{1/2}\Big(\sum_k |b_k|^2\Big)^{1/2}
\;=\; 1,
\]
by Cauchy–Schwarz.
\end{proof}

\begin{remark}[No probabilistic convolution at the level of $P_t$]

The composition in Proposition~\ref{prop:composition} holds for amplitudes.
In general, there is no identity of the form
$P_{t+s}(n\!\leftarrow\! m)=\sum_k P_t(n\!\leftarrow\! k)\,P_s(k\!\leftarrow\! m)$,
because
$|\sum_k \Omega_t(n,k)\Omega_s(k,m)|^2 \neq \sum_k |\Omega_t(n,k)|^2 |\Omega_s(k,m)|^2$
in general. Only in special (model-dependent) cases where the interfering cross terms vanish would a probability-level convolution hold.
\end{remark}

\paragraph{Short-time expansion.}
For any $n,m$,
\begin{equation}
\Omega_{\Delta t}(n,m)=\delta_{nm}-\frac{i}{\hbar}\,\Delta t\,H_{nm}+\mathcal O(\Delta t^2),\qquad H_{nm}:=\langle n|H|m\rangle .
\label{eq:short-time}
\end{equation}

\subsection{Translation-invariant generator}
\label{subsec:TI-generator}

Fix a sequence $K\in \ell^1(\mathbb Z)$ with $K(-\alpha)={K(\alpha)}\in\mathbb R $ for all $\alpha\in\mathbb Z$.
Define the operator $H:\ell^2(\mathbb Z)\to\ell^2(\mathbb Z)$ by
\begin{equation}\label{eq:H-convolution}
  (H\psi)_n \;=\; \hbar \sum_{\alpha\in\mathbb Z} K(\alpha)\,\psi_{n-\alpha},
  \qquad \psi\in\ell^2(\mathbb Z).
\end{equation}
We adopt the inner-product convention \emph{linear in the second entry}. Thus $H$ is the discrete convolution by $K$ (scaled by $\hbar$).

\paragraph{Well-posedness in brief.}
Assuming $K\in\ell^1(\mathbb Z)$ and $K(-\alpha)={K(\alpha)}\in\mathbb R$ makes the convolutional Hamiltonian $H$ bounded self-adjoint; the Fourier diagonalization and the bound $\|H\|\le \hbar\|K\|_{\ell^1}$ are stated immediately below. Hence $U(t)=\exp(-iHt/\hbar)$ is a strongly continuous one-parameter unitary group on $\ell^2(\mathbb Z)$.

\begin{proposition}[Convolution/self-adjoint generator on $\ell^2(\mathbb Z)$]
\label{prop:conv-selfadjoint}
Let $H$ be given by \eqref{eq:H-convolution}. Then:

\smallskip
\noindent\textup{(i)} $H$ is bounded with $\|H\|\le \hbar\|K\|_{\ell^1}$, and $H$ is self-adjoint.

\noindent\textup{(ii)} Under the unitary discrete Fourier transform
\[
(\mathcal F\psi)(k):=\sum_{n\in\mathbb Z}\psi_n\,e^{-ikn}
\quad\text{on } \mathbb T:=(-\pi,\pi],\ \ \text{with inverse }\ 
\psi_n=\frac{1}{2\pi}\int_{-\pi}^{\pi} (\mathcal F\psi)(k)e^{ikn}\,dk,
\]
the operator $H$ is diagonal:
\begin{equation}
\label{eq:Fourier-symbol}
(\mathcal F H\psi)(k)\;=\; E(k)\,(\mathcal F\psi)(k),
\qquad
E(k) \;=\; \hbar\sum_{\alpha\in\mathbb Z} K(\alpha)\,e^{-ik\alpha},
\end{equation}
where $E:\mathbb T\to\mathbb R$ is continuous and $2\pi$-periodic.

\noindent\textup{(iii)} The unitary group $U(t):=e^{-\,\frac{i}{\hbar}Ht}$ exists for all $t\in\mathbb R$ and
\[
(\mathcal F U(t)\psi)(k) \;=\; e^{-\,\frac{i}{\hbar}E(k)\,t}\,(\mathcal F\psi)(k).
\]
Its integral kernel $\Omega_t(n,m):=\langle n|U(t)|m\rangle$ depends only on the gap $n-m$ and is given by
\begin{equation}
\label{eq:Omega-kernel}
\Omega_t(n,m) \;=\; \frac{1}{2\pi}\int_{-\pi}^{\pi}
e^{ik(n-m)}\,e^{-\,\frac{i}{\hbar}E(k)\,t}\,dk
\;=:\; \Omega_t(n-m).
\end{equation}
\end{proposition}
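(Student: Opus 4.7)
The plan is to treat $H$ as a discrete convolution operator and exploit Fourier diagonalization on $\mathbb T$, letting the three parts cascade from one clean computation. For part (i), I would invoke the Young-type bound for discrete convolutions: since $(H\psi)_n = \hbar(K\ast\psi)_n$ with $K\in\ell^1(\mathbb Z)$, one obtains $\|H\psi\|_{\ell^2}\le \hbar\|K\|_{\ell^1}\|\psi\|_{\ell^2}$, so $H$ is bounded with $\|H\|\le\hbar\|K\|_{\ell^1}$. Self-adjointness then reduces to verifying symmetry on finitely supported vectors: expand $\langle\phi,H\psi\rangle$, relabel via $\alpha\mapsto-\alpha$, and use the hypothesis $K(-\alpha)=K(\alpha)\in\mathbb R$ (so $\overline{K(\alpha)}=K(-\alpha)$) to recover $\langle H\phi,\psi\rangle$. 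Boundedness plus symmetry yields self-adjointness.

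For part (ii), I would compute $\mathcal F H\mathcal F^{-1}$ directly. Starting from $(H\psi)_n=\hbar\sum_\alpha K(\alpha)\psi_{n-\alpha}$, apply $\mathcal F$, interchange the two sums, and substitute $m=n-\alpha$ to factor the expression as $\bigl(\hbar\sum_\alpha K(\alpha)e^{-ik\alpha}\bigr)(\mathcal F\psi)(k)=E(k)(\mathcal F\psi)(k)$. Continuity and $2\pi$-periodicity of $E$ follow from uniform convergence of $\sum_\alpha K(\alpha)e^{-ik\alpha}$ via the Weierstrass $M$-test with majorant $|K(\alpha)|$. Realness of $E$ is obtained by pairing $\pm\alpha$ terms using $K(-\alpha)=K(\alpha)\in\mathbb R$, giving $E(k)=\hbar K(0)+2\hbar\sum_{\alpha\ge 1}K(\alpha)\cos(k\alpha)$.

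For part (iii), since $H$ is bounded and self-adjoint, $U(t)=e^{-iHt/\hbar}$ is defined by the norm-convergent power series (equivalently by the bounded Borel functional calculus) and is unitary for each $t$, forming a strongly continuous one-parameter group. Conjugating by $\mathcal F$ preserves functional calculus: since $\mathcal F H\mathcal F^{-1}$ is multiplication by $E(k)$, $\mathcal F U(t)\mathcal F^{-1}$ is multiplication by $e^{-iE(k)t/\hbar}$. For the kernel, observe that $(\mathcal F|m\rangle)(k)=e^{-ikm}$, so $(\mathcal F U(t)|m\rangle)(k)=e^{-iE(k)t/\hbar}e^{-ikm}$, and reading off the $n$-th Fourier coefficient gives
\[
\Omega_t(n,m)=\frac{1}{2\pi}\int_{-\pi}^{\pi} e^{ik(n-m)}e^{-iE(k)t/\hbar}\,dk,
\]
which manifestly depends only on $n-m$.

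The main obstacle is the interchange of sums in part (ii): for generic $\psi\in\ell^2\setminus\ell^1$ the double sum $\sum_n e^{-ikn}\sum_\alpha K(\alpha)\psi_{n-\alpha}$ need not be absolutely convergent in $(n,\alpha)$. I would handle this by first establishing the identity on the dense subspace of finitely supported $\psi$, where all sums are finite, and then extending to all of $\ell^2(\mathbb Z)$ by continuity: both $H$ and $\mathcal F$ are bounded, so both sides of $\mathcal F H\psi = E\cdot\mathcal F\psi$ are continuous in $\psi$ (the right-hand side because multiplication by the bounded function $E$ is bounded on $L^2(\mathbb T,dk/2\pi)$), and equality persists in the limit. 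Once diagonalization is secured, the remaining assertions reduce to standard facts about multiplication operators.
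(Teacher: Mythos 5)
Your proposal is correct and follows essentially the same route as the paper: Young's inequality for the bound, symmetry from the real/even kernel for self-adjointness, direct Fourier computation for the symbol, and the bounded functional calculus plus inverse transform for the propagator kernel. Your extra step of justifying the sum interchange on finitely supported vectors and extending by density is a welcome refinement of the paper's argument, which performs the interchange without comment, but it does not change the approach.
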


\begin{proof}
(i) By Young's inequality for discrete convolution ($\ell^1*\ell^2\to\ell^2$),
\[
\|H\psi\|_2
= \hbar \|K*\psi\|_{\ell^2}
\le \hbar \|K\|_{\ell^1}\,\|\psi\|_{\ell^2},
\]
hence $H$ is bounded with $\|H\|\le \hbar\|K\|_{\ell^1}$. For self-adjointness, using the even/real property of $K$ and the convention that the inner product is linear in the second entry,
\[
\langle \phi, H\psi\rangle
= \hbar\sum_{n,\alpha}\overline{\phi_n}\,K(\alpha)\,\psi_{n-\alpha}
= \hbar\sum_{m,\alpha}\overline{\phi_{m+\alpha}}\,K(\alpha)\,\psi_m
= \langle H\phi, \psi\rangle,
\]
so $H$ is self-adjoint.

\smallskip
(ii) For $\widehat\psi:=\mathcal F\psi$,
\[
(\mathcal F H\psi)(k)
= \hbar\sum_{n,\alpha}K(\alpha)\,\psi_{n-\alpha}\,e^{-ikn}
= \hbar\sum_{\alpha}K(\alpha)e^{-ik\alpha}\sum_{m}\psi_m e^{-ikm}
= E(k)\,\widehat\psi(k),
\]
with $E$ as in \eqref{eq:Fourier-symbol}. Since $K\in\ell^1$, the trigonometric series for $E$ converges uniformly, so $E$ is continuous; $K$ real/even implies $E(k)\in\mathbb R$.

\smallskip
(iii) Because $H$ is bounded self-adjoint, $U(t)=e^{-\,\frac{i}{\hbar}Ht}$ is a well-defined unitary for every $t\in\mathbb R$; from (ii) it acts in Fourier space as multiplication by $e^{-\,\frac{i}{\hbar}E(k)\,t}$. Applying the inverse transform yields \eqref{eq:Omega-kernel}. \qedhere
\end{proof}

\begin{corollary}[Unitarity and translation invariance of the propagator]
\label{cor:unitary-translation}
Under the hypotheses of Proposition~\ref{prop:conv-selfadjoint}, for all $t\in\mathbb R$,
\[
\Omega_t(n,m)=\Omega_t(n-m),
\qquad
\sum_{n\in\mathbb Z}\overline{\Omega_t(n-m_1)}\,\Omega_t(n-m_2)=\delta_{m_1 m_2},
\]
and, for $s,t\in\mathbb R$,
\[
\Omega_{t+s} \;=\; \Omega_t * \Omega_s
\quad\text{(discrete convolution in the gap variable).}
\]
\end{corollary}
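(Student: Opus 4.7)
The plan is to derive all three assertions directly from Proposition~\ref{prop:conv-selfadjoint} together with the general column-normalization and composition results of Propositions~\ref{prop:column-norm} and~\ref{prop:composition}. The explicit Fourier representation \eqref{eq:Omega-kernel} already encapsulates the substance: the three claims are the three manifestations (spatial dependence on the gap, column orthonormality, and semigroup law) of the single fact that, after conjugation by $\mathcal F$, the propagator $U(t)$ acts on $L^2(\mathbb T,dk/(2\pi))$ as multiplication by the unimodular symbol $e^{-iE(k)t/\hbar}$.

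First I would read off translation invariance. The kernel formula \eqref{eq:Omega-kernel} shows that $\Omega_t(n,m)$ depends on $(n,m)$ only through the difference $n-m$, so setting
\[
\Omega_t(j) := \frac{1}{2\pi}\int_{-\pi}^{\pi} e^{ikj}\,e^{-\frac{i}{\hbar}E(k)t}\,dk, \qquad j\in\mathbb Z,
\]
we may write $\Omega_t(n,m)=\Omega_t(n-m)$ as claimed. Since $|e^{-iE(k)t/\hbar}|\equiv 1$, Plancherel gives $\{\Omega_t(j)\}_{j\in\mathbb Z}\in\ell^2(\mathbb Z)$ with $\sum_j |\Omega_t(j)|^2=1$.

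For column orthonormality, I would use unitarity $U(t)^*U(t)=I$ together with the strong resolution of identity $\sum_n|n\rangle\langle n|=I$. Taking matrix elements and using $\langle m_1|U(t)^*|n\rangle=\overline{\langle n|U(t)|m_1\rangle}=\overline{\Omega_t(n,m_1)}$ gives
\[
\delta_{m_1 m_2}=\langle m_1|U(t)^*U(t)|m_2\rangle=\sum_{n\in\mathbb Z}\overline{\Omega_t(n,m_1)}\,\Omega_t(n,m_2),
\]
which becomes the stated identity after substituting $\Omega_t(n,m_i)=\Omega_t(n-m_i)$. Absolute convergence of the sum is guaranteed by Proposition~\ref{prop:column-norm}, which places each column in $\ell^2$ with unit norm, so Cauchy--Schwarz applies; the interchange of summation with the inner product is then justified exactly as in the ``basic facts'' discussion preceding Proposition~\ref{prop:column-norm}.

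The convolution semigroup law is the one requiring a short computation, and I would give two equivalent derivations. The Fourier route uses multiplicativity $e^{-iE(k)(t+s)/\hbar}=e^{-iE(k)t/\hbar}\cdot e^{-iE(k)s/\hbar}$, which under $\mathcal F^{-1}$ translates pointwise multiplication of $L^2(\mathbb T)$-symbols into convolution of their Fourier coefficients, yielding $\Omega_{t+s}=\Omega_t*\Omega_s$ in the gap variable. The direct route invokes Proposition~\ref{prop:composition}: write $\Omega_{t+s}(n,m)=\sum_k\Omega_t(n,k)\,\Omega_s(k,m)=\sum_k\Omega_t(n-k)\,\Omega_s(k-m)$ and change variable $\ell=k-m$ to recognize the sum as $(\Omega_t*\Omega_s)(n-m)$. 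No serious obstacle is anticipated; the only point requiring minor care is the interchange of summation and integration implicit in the Fourier argument, which is controlled by $K\in\ell^1(\mathbb Z)$ together with the uniform $\ell^2$-norm bound on each column.
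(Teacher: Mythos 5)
Your proposal is correct and follows essentially the same route as the paper: translation invariance is read off from the kernel formula \eqref{eq:Omega-kernel}, orthonormality comes from unitarity of $U(t)$, and the convolution law comes from the group property becoming multiplication of the unimodular symbol in Fourier space. Your additional direct derivation of the convolution law via Proposition~\ref{prop:composition} and the explicit convergence remarks are just extra detail beyond the paper's brief argument, not a different method.
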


\begin{proof}
Translation invariance $\Omega_t(n,m)=\Omega_t(n-m)$ follows from \eqref{eq:Omega-kernel}.
Unitarity of $U(t)$ gives the orthonormality relation. The group law $U(t+s)=U(t)U(s)$ becomes
multiplication in Fourier space and therefore convolution in the gap variable after inverse transform.
\end{proof}

\subsection{Alternative Constructing Price-Type Observables from Frequencies and Amplitudes}
\label{subsec:FA}

In this section we work on $\mathcal H=\ell^2(\mathbb Z)$ with canonical basis $\{|n\rangle\}_{n\in\mathbb Z}$ and inner product linear in the second entry. 
Let $\hat S|n\rangle=S_n|n\rangle$ with $S_n=S_0+n\,\Delta S$; assume $S_n>0$ on the invariant subspace $\mathcal H_+$ so that $f(\hat S)$ is well-defined as a self-adjoint operator on $\mathcal H_+$ for the real Borel function $f$ considered below.
Define the self-adjoint generator
\[
H:=\hbar\,f(\hat S),\qquad U(t):=e^{-\,\frac{i}{\hbar}Ht}=e^{-\,i f(\hat S)\,t}\quad(t\in\mathbb R).
\]
Set the \emph{frequency differences}
\[
\omega(n,m):=f(S_n)-f(S_m)\qquad(n,m\in\mathbb Z\ \text{with }S_n,S_m>0),
\]
which satisfy the telescoping identity $\omega(n,m)=\omega(n,k)+\omega(k,m)$ for all $k$.

\begin{proposition}[Frequency--amplitude construction of observables]\label{prop:FA}

Let $X$ be a bounded operator on $\mathcal H_+$ with matrix
$A_{nm}:=\langle n|X|m\rangle$ in the pointer basis (so $X^\ast = X$
iff $A_{mn}=\overline{A_{nm}}$). Define the Heisenberg evolution
$X(t):=U(t)^\ast X\,U(t)$. Then for all $n,m$,
\begin{equation}
\label{eq:Xnm-phase}
\langle n|X(t)|m\rangle \;=\; A_{nm}\,e^{\,\mathrm{i}\,\omega(n,m)\,t}.
\end{equation}
If, in addition, $X$ is Hilbert--Schmidt then $X(t)$ is Hilbert--Schmidt and
$\|X(t)\|_{\mathrm{HS}}=\|X\|_{\mathrm{HS}}$.
\end{proposition}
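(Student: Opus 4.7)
The plan is to exploit that $\hat S$ is diagonal in the pointer basis, which makes $f(\hat S)$ (and hence $U(t)$) diagonal as well via spectral calculus. Concretely, since $\hat S|n\rangle = S_n|n\rangle$, for any real Borel $f$ the spectral theorem gives $f(\hat S)|n\rangle = f(S_n)|n\rangle$ on $\mathcal H_+$, and therefore
\[
U(t)|n\rangle \;=\; e^{-\,i f(\hat S)\,t}|n\rangle \;=\; e^{-\,i f(S_n)\,t}\,|n\rangle,
\qquad
\langle n|U(t)^{\ast} \;=\; e^{\,i f(S_n)\,t}\,\langle n|,
\]
using that $f(S_n)\in\mathbb R$. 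This is the only place where the structural hypothesis on $H=\hbar f(\hat S)$ is used.

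The matrix identity \eqref{eq:Xnm-phase} then follows by a direct two-sided application. Writing $X(t)=U(t)^{\ast} X\,U(t)$ and inserting the above eigen-action on the ket $|m\rangle$ and the bra $\langle n|$,
\[
\langle n|X(t)|m\rangle
\;=\; \bigl(e^{\,i f(S_n)\,t}\langle n|\bigr)\,X\,\bigl(e^{-\,i f(S_m)\,t}|m\rangle\bigr)
\;=\; e^{\,i\,[f(S_n)-f(S_m)]\,t}\,A_{nm}
\;=\; A_{nm}\,e^{\,i\,\omega(n,m)\,t}.
\]
No domain subtleties arise because $U(t)$ is a bounded unitary on $\mathcal H_+$ and $X$ is bounded, so the product $U(t)^{\ast}XU(t)$ is defined everywhere and the pointer-basis matrix elements exist unambiguously.

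For the Hilbert--Schmidt claim the simplest route is to invoke unitary invariance of the HS norm: $\|X(t)\|_{\mathrm{HS}}=\|U(t)^{\ast}XU(t)\|_{\mathrm{HS}}=\|X\|_{\mathrm{HS}}$, which is a standard fact for Schatten 2-norms. As a self-contained verification consistent with \eqref{eq:Xnm-phase}, one may instead compute
\[
\|X(t)\|_{\mathrm{HS}}^{2}
\;=\; \sum_{n,m\in\mathbb Z}\bigl|\langle n|X(t)|m\rangle\bigr|^{2}
\;=\; \sum_{n,m\in\mathbb Z}|A_{nm}|^{2}\,\bigl|e^{\,i\,\omega(n,m)\,t}\bigr|^{2}
\;=\; \|X\|_{\mathrm{HS}}^{2},
\]
where the phase factors drop out because $\omega(n,m)\in\mathbb R$.

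There is essentially no obstacle: the proposition is a clean consequence of the eigenbasis property of $\{|n\rangle\}$ for $f(\hat S)$. The only point that merits a sentence of care is the restriction to $\mathcal H_+$, which is invariant under $\hat S$ and hence under $f(\hat S)$ and $U(t)$, so the Heisenberg evolution and the matrix-element identity make sense exactly where $f(\hat S)$ (e.g.\ $\log\hat S$) is defined by spectral calculus.
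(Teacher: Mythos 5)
Your proposal is correct and follows essentially the same route as the paper: use $f(\hat S)|n\rangle=f(S_n)|n\rangle$ to get the diagonal eigen-action of $U(t)$ and $U(t)^\ast$ on the pointer basis, read off the phase factor $e^{i[f(S_n)-f(S_m)]t}$, and conclude the Hilbert--Schmidt statement by unitary invariance of the HS norm. The extra remarks on invariance of $\mathcal H_+$ and the direct sum verification are fine but add nothing beyond the paper's argument.
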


\begin{proof}
Since $f(\hat S)\ket{n}=f(S_n)\ket{n}$, we have
$U(t)\ket{m}=e^{-\mathrm{i}\,f(S_m)t}\ket{m}$ and
$U(t)^\ast\ket{n}=e^{+\mathrm{i}\,f(S_n)t}\ket{n}$. Therefore
\[
\langle n|X(t)|m\rangle
=\langle n|U(t)^\ast X\,U(t)|m\rangle
=e^{\mathrm{i}\,f(S_n)t}\,\langle n|X|m\rangle\,e^{-\mathrm{i}\,f(S_m)t}
=A_{nm}\,e^{\mathrm{i}\,(f(S_n)-f(S_m))t}.
\]
If $X$ is Hilbert--Schmidt, unitary invariance of the HS-norm gives
$\|X(t)\|_{\mathrm{HS}}=\|X\|_{\mathrm{HS}}$.
\end{proof}

\begin{remark}[Diagonal vs.\ off-diagonal content]\label{rem:diag-offdiag}
For any Borel $g$, $X=g(\hat S)$ is diagonal: $A_{nm}=g(S_n)\delta_{nm}$, hence $X_{nm}(t)=A_{nm}$ (no phase).
All nontrivial transition content resides in the off-diagonal entries, whose phases are entirely determined by $\omega(n,m)$.
\end{remark}
\begin{lemma}[Row/column $\ell^2$ bounds for bounded operators]
\label{lem:l2-rows-cols}
Let $X$ be bounded. For fixed $n$, the row $(A_{nk})_{k\in\mathbb Z}$ lies in $\ell^2(\mathbb Z)$ with
\[
  \sum_{k} |A_{nk}|^2
  = \bigl\lVert X^\ast\ket{n} \bigr\rVert^2
  \le \lVert X^\ast \rVert^2 \,\lVert \ket{n} \rVert^2
  = \lVert X \rVert^2 .
\]
Similarly, for fixed $m$, the column $(A_{km})_{k\in\mathbb Z}$ lies in $\ell^2(\mathbb Z)$ with
\[
  \sum_{k} |A_{km}|^2
  = \bigl\lVert X\ket{m} \bigr\rVert^2
  \le \lVert X \rVert^2 .
\]
\end{lemma}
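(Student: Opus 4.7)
The plan is to apply Parseval's identity to the single vector $X^{\ast}|n\rangle$ (respectively $X|m\rangle$) with respect to the orthonormal basis $\{|k\rangle\}_{k\in\mathbb Z}$, and then invoke the operator-norm inequality $\|Tx\|\le\|T\|\,\|x\|$ together with the Hilbert-space identity $\|X^{\ast}\|=\|X\|$. The algebraic observation needed for the row case is
\[
\langle k|X^{\ast}|n\rangle \;=\; \overline{\langle n|X|k\rangle} \;=\; \overline{A_{nk}},
\]
which follows from the definition of the adjoint under the paper's convention that $\langle\cdot,\cdot\rangle$ is linear in the second entry (so the matrix of $X^{\ast}$ is the conjugate transpose of the matrix of $X$).

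For the row bound I would first apply Parseval to $X^{\ast}|n\rangle\in\mathcal H$:
\[
\|X^{\ast}|n\rangle\|^2 \;=\; \sum_{k\in\mathbb Z}\bigl|\langle k|X^{\ast}|n\rangle\bigr|^2 \;=\; \sum_{k\in\mathbb Z}\bigl|\overline{A_{nk}}\bigr|^2 \;=\; \sum_{k\in\mathbb Z}|A_{nk}|^2,
\]
which is the asserted identity. Then I would bound the left-hand side via the operator-norm estimate $\|X^{\ast}|n\rangle\|^2\le\|X^{\ast}\|^2\,\||n\rangle\|^2 = \|X\|^2$, using $\|X^{\ast}\|=\|X\|$ and the fact that $\{|n\rangle\}$ is an orthonormal basis so $\||n\rangle\|=1$.

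The column bound is strictly easier because no adjoint is needed: Parseval applied directly to $X|m\rangle$ gives
\[
\|X|m\rangle\|^2 \;=\; \sum_{k\in\mathbb Z}\bigl|\langle k|X|m\rangle\bigr|^2 \;=\; \sum_{k\in\mathbb Z}|A_{km}|^2,
\]
and the operator-norm inequality $\|X|m\rangle\|^2\le\|X\|^2\,\||m\rangle\|^2=\|X\|^2$ completes the proof.

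There is no genuine obstacle; the whole argument is essentially two one-line applications of Parseval plus the operator-norm bound. The only item worth flagging is the need in the row case to first rewrite $A_{nk}$ as the conjugate of a matrix element of $X^{\ast}$ with $|n\rangle$ in the second slot, so that Parseval's identity applies cleanly to the fixed vector $X^{\ast}|n\rangle$; invoking $\|X^{\ast}\|=\|X\|$ at the end then delivers the bound uniformly in terms of $\|X\|$.
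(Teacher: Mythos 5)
Your proof is correct and is exactly the argument the paper intends: the paper states the lemma without a separate proof precisely because the displayed chain of identities (Parseval applied to $X^\ast\ket{n}$ and $X\ket{m}$, the operator-norm bound, and $\lVert X^\ast\rVert=\lVert X\rVert$) is the whole argument, which you have spelled out, including the correct handling of $\langle k|X^\ast|n\rangle=\overline{A_{nk}}$ under the second-entry-linear convention.
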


\begin{proposition}[Closure under products and phase additivity]\label{prop:closure}
Let $X,Y$ be bounded with matrices $A_{nm}=\langle n|X|m\rangle$ and $B_{nm}=\langle n|Y|m\rangle$. Set $Z(t):=X(t)Y(t)$. Then
\begin{equation}\label{eq:product}
\langle n|Z(t)|m\rangle
=\sum_{k\in\mathbb Z} A_{nk}\,B_{km}\,e^{\,i\,[\omega(n,k)+\omega(k,m)]\,t}
=\Big(\sum_{k\in\mathbb Z} A_{nk}B_{km}\Big)\,e^{\,i\,\omega(n,m)\,t},
\end{equation}
where the series converges absolutely. Hence this class is closed under multiplication: amplitudes convolve, while phases collapse additively to $\omega(n,m)$.
\end{proposition}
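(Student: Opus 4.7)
The plan is to compute $\langle n|Z(t)|m\rangle$ by inserting a resolution of identity between $X(t)$ and $Y(t)$, then apply Proposition~\ref{prop:FA} to each factor, and finally collapse the two phases via the telescoping identity $\omega(n,k)+\omega(k,m)=\omega(n,m)$. The non-routine part will be justifying the insertion of the strong identity (i.e., absolute convergence of the resulting scalar series), which I will get from Lemma~\ref{lem:l2-rows-cols} and Cauchy--Schwarz.

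First, I would write
\[
\langle n|Z(t)|m\rangle = \langle n|X(t)\,Y(t)|m\rangle
= \langle\,U(t)X^\ast U(t)^\ast n\,|\,U(t)^\ast Y\,U(t)m\rangle,
\]
or more directly, use the strong convergence $\sum_k|k\rangle\langle k|\,v\to v$ (recalled in the paper just before Proposition~\ref{prop:column-norm}) applied to the vector $Y(t)|m\rangle$, and then continuity of the inner product, to obtain
\[
\langle n|Z(t)|m\rangle = \sum_{k\in\mathbb Z} \langle n|X(t)|k\rangle\,\langle k|Y(t)|m\rangle.
\]
At this stage I would invoke Proposition~\ref{prop:FA} twice to replace each factor by its phase expression:
\[
\langle n|X(t)|k\rangle = A_{nk}\,e^{i\omega(n,k)t},\qquad \langle k|Y(t)|m\rangle = B_{km}\,e^{i\omega(k,m)t}.
\]
The telescoping identity $\omega(n,k)+\omega(k,m)=\omega(n,m)$, which is independent of $k$, then lets me factor the phase $e^{i\omega(n,m)t}$ out of the sum and deliver both forms of the stated identity.

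The main step to handle carefully is absolute convergence, which simultaneously justifies the insertion of the identity and the factoring of the phase. By Lemma~\ref{lem:l2-rows-cols}, the row $(A_{nk})_k$ and the column $(B_{km})_k$ lie in $\ell^2(\mathbb Z)$ with norms bounded by $\|X\|$ and $\|Y\|$ respectively. Since $|e^{i[\omega(n,k)+\omega(k,m)]t}|=1$, Cauchy--Schwarz gives
\[
\sum_{k}\bigl|A_{nk}B_{km}\,e^{i[\omega(n,k)+\omega(k,m)]t}\bigr|
\le \Bigl(\sum_k|A_{nk}|^2\Bigr)^{1/2}\Bigl(\sum_k|B_{km}|^2\Bigr)^{1/2}
\le \|X\|\,\|Y\|,
\]
so the series converges absolutely uniformly in $t$. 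Absolute convergence in turn legitimizes pulling the common phase $e^{i\omega(n,m)t}$ outside the sum, yielding the second equality $(\sum_k A_{nk}B_{km})\,e^{i\omega(n,m)t}$, which incidentally exhibits the ``amplitudes convolve, phases collapse'' structure claimed in the proposition.

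I do not anticipate a serious obstacle here: the algebraic content is fully captured by Proposition~\ref{prop:FA} plus the additivity of $\omega$, and the analytic content reduces to the Cauchy--Schwarz bound from Lemma~\ref{lem:l2-rows-cols}. The only care needed is to insert the resolution of identity on the correct side (so that the matrix elements of $X(t)$ and $Y(t)$ — rather than of $X,Y$ individually — appear), and to note that the telescoping identity is purely a statement about the scalar sequence $f(S_n)$ and requires no commutativity assumption on $X,Y$.
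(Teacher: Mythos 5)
Your proposal is correct and matches the paper's own proof essentially step for step: insert $I=\sum_k|k\rangle\langle k|$ between $X(t)$ and $Y(t)$, apply \eqref{eq:Xnm-phase} to each factor, collapse the phases via the telescoping identity, and justify absolute convergence by Lemma~\ref{lem:l2-rows-cols} together with Cauchy--Schwarz. Your extra remarks on strong convergence justifying the insertion and on the uniformity in $t$ are fine refinements but do not change the argument.
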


\begin{proof}
Insert $I=\sum_k |k\rangle\langle k|$ between $X(t)$ and $Y(t)$ and apply \eqref{eq:Xnm-phase}:
\[
\langle n|X(t)Y(t)|m\rangle
=\sum_k \langle n|X(t)|k\rangle \langle k|Y(t)|m\rangle
=\sum_k A_{nk}e^{i\omega(n,k)t}\,B_{km}e^{i\omega(k,m)t}.
\]
The telescoping identity yields the second equality in \eqref{eq:product}. For absolute convergence, by Lemma~\ref{lem:l2-rows-cols} and Cauchy--Schwarz,
\[
\sum_k |A_{nk}B_{km}|\ \le\ \Big(\sum_k |A_{nk}|^2\Big)^{\!1/2}\Big(\sum_k |B_{km}|^2\Big)^{\!1/2}
\ \le\ \|X\|\,\|Y\|.
\]
\end{proof}

\begin{corollary}[Canonical synthesis of a price-type observable]\label{cor:price-synthesis}
Fix a Hermitian amplitude matrix \(A=(A_{nm})_{n,m}\) with \(A_{nm}=\overline{A_{mn}}\) that
defines a bounded operator \(X\) (e.g.\ by the Schur test:
\(\sup_{m}\sum_{n} |A_{nm}|<\infty\) and \(\sup_{n}\sum_{m} |A_{nm}|<\infty\)).
Evolve by \(X(t)=U(t)^{\ast}\, X\, U(t)\) with \(U(t)=e^{-\mathrm{i}\, f(\hat{S})\, t}\).
Then
\[
\langle n|X(t)|m\rangle
= A_{nm}\, \exp\!\big(\mathrm{i}\,[\,f(S_n)-f(S_m)\,]\,t\big).
\]
Imposing \(A_{nn}=S_n\) \emph{pins} \(X\) to the price lattice at \(t=0\) without
altering the time law.
\end{corollary}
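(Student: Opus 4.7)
The plan is to derive the corollary as a direct specialization of Proposition~\ref{prop:FA} (the frequency--amplitude construction), with the Schur test supplying the boundedness hypothesis required there. I would proceed in three short steps. First, I would invoke the Schur test: the two finiteness conditions $\sup_m\sum_n|A_{nm}|=:M_1<\infty$ and $\sup_n\sum_m|A_{nm}|=:M_2<\infty$ give the bound $\|X\|\le\sqrt{M_1M_2}$, so the formal matrix $A$ extends uniquely to a bounded operator $X$ on $\mathcal H_+$; combined with the Hermiticity assumption $A_{nm}=\overline{A_{mn}}$, this makes $X$ self-adjoint.

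Second, since $f(\hat S)\ket{n}=f(S_n)\ket{n}$ on $\mathcal H_+$, the unitary group $U(t)=e^{-\mathrm{i}f(\hat S)t}$ acts by pure phases on the pointer basis: $U(t)\ket{m}=e^{-\mathrm{i}f(S_m)t}\ket{m}$ and $U(t)^{\ast}\ket{n}=e^{+\mathrm{i}f(S_n)t}\ket{n}$. Hence Proposition~\ref{prop:FA} applies verbatim to $X$ and yields
\[
\langle n|X(t)|m\rangle
= e^{\mathrm{i}f(S_n)t}\,A_{nm}\,e^{-\mathrm{i}f(S_m)t}
= A_{nm}\,\exp\!\bigl(\mathrm{i}\,[f(S_n)-f(S_m)]\,t\bigr),
\]
which is the asserted identity.

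Third, for the pinning claim, I would evaluate at $n=m$: $\omega(n,n)=f(S_n)-f(S_n)=0$, so the time-evolution phase factor on the diagonal is identically $1$ for every $t$. Consequently, imposing the calibration $A_{nn}=S_n$ at $t=0$ already forces $\langle n|X(t)|n\rangle=S_n$ for all $t\in\mathbb R$, pinning the diagonal of $X(t)$ to the price spectrum, while the off-diagonal time law $e^{\mathrm{i}\omega(n,m)t}$ is untouched because it depends only on $f$ and the off-diagonal data $A_{nm}$ (the choice of diagonal entries decouples from the off-diagonal evolution).

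There is essentially no obstacle: the corollary is a packaging of Proposition~\ref{prop:FA} under an explicit sufficient condition for boundedness. The only point requiring care is that everything must be carried out on the invariant subspace $\mathcal H_+$, where $S_n>0$ and $f(\hat S)$ is defined via Borel functional calculus; once this domain is fixed, every calculation reduces to the one already performed in the proof of Proposition~\ref{prop:FA}, and no new convergence issue arises thanks to the Schur bound.
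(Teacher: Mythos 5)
Your proposal is correct and follows essentially the same route the paper intends: the corollary is a direct specialization of Proposition~\ref{prop:FA}, with the Schur test providing boundedness and the vanishing diagonal phases \(\omega(n,n)=0\) justifying the pinning claim. No gaps.
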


\paragraph{Translation-invariant amplitudes (Toeplitz class).}
If $A_{nm}=a(n-m)$ with $a\in\ell^1(\mathbb Z)$, then $X$ is a bounded Toeplitz (convolution-type) operator, and
\[
(\mathcal F X\psi)(k)=\Big(\sum_{\alpha\in\mathbb Z} a(\alpha)e^{-ik\alpha}\Big)(\mathcal F\psi)(k).
\]
Under the Heisenberg evolution with $H=\hbar f(\hat S)$, each matrix element acquires the phase factor $e^{i\omega(n,m)t}$. 
\emph{Toeplitz is preserved iff $f$ is affine on the spectrum} (index-homogeneous case): then $\omega(n,m)$ depends only on $n-m$, so $A_{nm}$ remains Toeplitz up to a multiplicative function of $n-m$. 
For general $f$ (e.g.\ $f(S)=c\log S$), the factor depends on $n$ and $m$ separately and Toeplitz structure is generically broken, while the time law \eqref{eq:Xnm-phase} still holds.

\paragraph{Classical (small-step) limit.}
When $\Delta S\to 0$ and $S_n\approx S$, the frequencies admit the expansion
\[
\omega(n,m)=f'(S)\,(S_n-S_m)+\tfrac12 f''(S)\,(S_n-S_m)^2+\cdots \quad(\text{cf.\ \eqref{eq:taylor}}).
\]
Hence the phase of $\langle n|X(t)|m\rangle$ is, to first order, governed by $g(S):=f'(S)$, yielding a Fourier-like local superposition.

Given the self-adjoint frequency generator $H=\hbar f(\hat S)$, any bounded (or HS) amplitude matrix $A_{nm}$ produces an observable $X$ whose time dependence is exactly $A_{nm}\,e^{i\omega(n,m)t}$ with $\omega(n,m)=f(S_n)-f(S_m)$. 
Products remain in the same class by phase additivity; diagonal anchoring and (when $f$ is affine) translation symmetry can be imposed at $t=0$ without affecting the time law \eqref{eq:Xnm-phase}.

\begin{theorem}[Transition frequencies from spectral differences]
\label{thm:spectral-diff}
Let $\mathcal H_+\subset\ell^2(\mathbb Z)$ be the invariant subspace on which the price operator 
$S$ is diagonal with strictly positive eigenvalues $S|n\rangle=S_n|n\rangle$ for $n\in I\subset\mathbb Z$.
Let $f:\sigma(S)\to\mathbb R$ be a real Borel function and define the (possibly unbounded) Hamiltonian
$H:=\hbar f(S)$ with its natural domain from the spectral calculus. 
Set the transition frequencies
\[
\omega(n,m):=f(S_n)-f(S_m),\qquad (n,m\in I).
\]
Then for any bounded (or Hilbert–Schmidt) operator $X$ on $\mathcal H_+$ with matrix elements 
$A_{nm}:=\langle n|X|m\rangle$, its Heisenberg evolution $X(t):=e^{\frac{i}{\hbar}Ht}Xe^{-\frac{i}{\hbar}Ht}$ satisfies
\begin{equation}\label{eq:Heis-matrix}
\langle n|X(t)|m\rangle \;=\; A_{nm}\, e^{\,i\,\omega(n,m)\,t},\qquad \forall\,t\in\mathbb R.
\end{equation}
In particular, the family $\omega(\cdot,\cdot)$ obeys the cocycle (telescoping) law
\begin{equation}\label{eq:cocycle}
\omega(n,m)+\omega(m,k)=\omega(n,k),\qquad \omega(n,n)=0,\quad \omega(n,m)=-\omega(m,n).
\end{equation}
\end{theorem}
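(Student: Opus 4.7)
The plan is to reduce Theorem~\ref{thm:spectral-diff} to a direct application of the spectral theorem, following essentially the same logic as Proposition~\ref{prop:FA} but with care for the possibly unbounded $H = \hbar f(S)$. The core observation is that each basis vector $|n\rangle$ is an eigenvector of $S$ on $\mathcal H_+$, so it automatically lies in the spectral-calculus domain of any real Borel function of $S$ and $e^{-if(S)t}|n\rangle = e^{-if(S_n)t}|n\rangle$. Once this is in hand, the matrix-element identity \eqref{eq:Heis-matrix} falls out by pulling phases through the inner product, and the cocycle identity \eqref{eq:cocycle} is a trivial algebraic consequence of the definition.

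Concretely, I would proceed in three steps. \textbf{Step 1:} invoke the spectral theorem for the self-adjoint operator $S$ to observe that the spectral measure at $|n\rangle$ is the Dirac mass at $S_n$; therefore for any Borel $\varphi$ with $\varphi(S_n)$ finite one has $\varphi(S)|n\rangle = \varphi(S_n)|n\rangle$. Applying this to $\varphi(s)=e^{-if(s)t}$, which is bounded Borel for each fixed $t$, I obtain $U(t)|n\rangle = e^{-if(S_n)t}|n\rangle$, where $U(t)=e^{-iHt/\hbar}$ exists and is unitary on all of $\mathcal H_+$ by Stone's theorem. \textbf{Step 2:} write $X(t)=U(t)^\ast X U(t)$ and compute
\[
\langle n|X(t)|m\rangle
= \langle U(t)n,\, X\, U(t)m\rangle
= \overline{e^{-if(S_n)t}}\, A_{nm}\, e^{-if(S_m)t}
= A_{nm}\, e^{i\omega(n,m)t},
\]
which is \eqref{eq:Heis-matrix}. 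For $X$ bounded the pairing is unambiguous; for $X$ Hilbert--Schmidt, unitary invariance of the HS-norm gives $\|X(t)\|_{\mathrm{HS}}=\|X\|_{\mathrm{HS}}$, so the identity makes sense as a matrix of an HS operator. \textbf{Step 3:} deduce the three cocycle properties in \eqref{eq:cocycle} directly from $\omega(n,m)=f(S_n)-f(S_m)$: additivity is a telescoping cancellation, $\omega(n,n)=0$ is $f(S_n)-f(S_n)$, and antisymmetry is the swap of the two terms.

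The one genuine technical point — and the only place where the argument is not already contained in Proposition~\ref{prop:FA} — is that $f$ is only assumed real Borel, so $H=\hbar f(S)$ need not be bounded. This means I should not define $U(t)$ via a convergent power series in $H$, but rather directly via the functional calculus: $U(t) := \int e^{-if(s)t}\,dE_S(s)$, where $E_S$ is the projection-valued spectral measure of $S$. The identity $U(t)|n\rangle=e^{-if(S_n)t}|n\rangle$ then follows because $E_S(\cdot)|n\rangle$ is the point mass $\delta_{S_n}$-valued measure times $|n\rangle$. This is the main obstacle to a fully rigorous write-up; the rest of the proof is a formal manipulation of phases on basis vectors.
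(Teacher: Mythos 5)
Your proposal is correct and follows essentially the same route as the paper's own proof: exploit $f(S)|n\rangle=f(S_n)|n\rangle$ to get $U(t)|n\rangle=e^{-if(S_n)t}|n\rangle$, pull the phases through the matrix element, and read off the cocycle identities from the definition of $\omega$. Your extra care in defining $U(t)$ via the spectral (functional) calculus for possibly unbounded $f(S)$ is a welcome rigor upgrade over the paper's one-line argument, but it is not a different method.
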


\begin{proof}
Since $f(S)|n\rangle=f(S_n)|n\rangle$, we have 
$e^{-\frac{i}{\hbar}Ht}|n\rangle=e^{-i f(S_n)t}|n\rangle$. 
Therefore 
$\langle n|X(t)|m\rangle=e^{i f(S_n)t}A_{nm}e^{-i f(S_m)t}
=A_{nm}e^{i(f(S_n)-f(S_m))t}$, yielding \eqref{eq:Heis-matrix}. 
The identities in \eqref{eq:cocycle} are immediate from the definition of $\omega$.
\end{proof}

\begin{proposition}[Converse reconstruction and uniqueness up to a constant]
\label{prop:converse}
Let $\omega:I\times I\to\mathbb R$ satisfy \eqref{eq:cocycle}. 
Fix any $n_0\in I$ and define $\Omega(n):=\omega(n,n_0)$. Then 
$\omega(n,m)=\Omega(n)-\Omega(m)$ for all $n,m\in I$.
If, moreover, $S|n\rangle=S_n|n\rangle$ with $S_n>0$ and $f:\sigma(S)\to\mathbb R$ is any 
Borel function such that $f(S_n)=\Omega(n)$ on the pure point support, then the Hamiltonian 
$H=\hbar f(S)$ yields the Heisenberg evolution \eqref{eq:Heis-matrix} with those frequencies.
The function $f$ is unique up to an additive constant on each connected component of $\sigma(S)$.
\end{proposition}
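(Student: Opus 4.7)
My plan is to tackle the three clauses in sequence, using only the cocycle axioms in~\eqref{eq:cocycle}, the Borel functional calculus on~$\mathcal H_+$, and the matrix-element identity of Theorem~\ref{thm:spectral-diff}. No new analytic machinery is needed: the representation is one line of cocycle algebra, the reconstruction is a direct appeal to the earlier theorem, and the uniqueness clause is a measure-theoretic observation about where $f$ is actually pinned by the data.

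For the representation claim I would instantiate the telescoping identity $\omega(n,m)+\omega(m,k)=\omega(n,k)$ at $k=n_0$ and combine with antisymmetry $\omega(m,n_0)=-\omega(n_0,m)=-\Omega(m)$ to obtain $\omega(n,m)=\Omega(n)-\Omega(m)$ in a single rearrangement; the normalization $\Omega(n_0)=\omega(n_0,n_0)=0$ is automatic from the cocycle. For the reconstruction, I would pick any Borel $f:\sigma(S)\to\mathbb R$ interpolating $\Omega$ on the pure-point support (such an $f$ exists trivially since $\{S_n\}$ is countable), define $H:=\hbar f(S)$ by the spectral theorem so that $H|n\rangle=\hbar\Omega(n)|n\rangle$ and $H$ is self-adjoint on its natural domain in $\mathcal H_+$, and note that $f(S_n)-f(S_m)=\Omega(n)-\Omega(m)=\omega(n,m)$ by the previous step. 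The Heisenberg matrix element $\langle n|X(t)|m\rangle=A_{nm}\,e^{i\omega(n,m)t}$ is then immediate from Theorem~\ref{thm:spectral-diff}.

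The main obstacle is the uniqueness clause. If $f_1,f_2$ are two Borel functions with $f_i(S_n)=\Omega(n)$, they agree on the spectral support, hence $\hbar f_1(S)=\hbar f_2(S)$ as self-adjoint operators on $\mathcal H_+$ regardless of values off the support, because the spectral measure of $S$ is concentrated on $\{S_n\}$. If instead one only requires $f_i(S_n)-f_i(S_m)=\omega(n,m)$, the difference $f_1-f_2$ must be constant on the set of lattice eigenvalues. The subtle point is the topological qualifier ``on each connected component of $\sigma(S)$'': in the lattice setting $\sigma(S)=\{S_n\}$ is discrete and every component is a singleton, so the cocycle connectivity of $I$ actually pins $f$ on $\{S_n\}$ up to a \emph{single} global additive constant, which is stronger than what the quoted formulation states. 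I expect to spend most of the care here, by first proving uniqueness on the pure-point support as one global constant and then observing that if one embeds $\sigma(S)$ into a larger set with genuinely connected components disjoint from $\{S_n\}$, each such component carries one independent free additive constant (the Borel values there being irrelevant for the operator $f(S)$); the combined statement recovers the quoted formulation.
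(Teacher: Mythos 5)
Your proposal is correct. Note that the paper itself states Proposition~\ref{prop:converse} without supplying a proof, so there is no authorial argument to compare against; your route is the natural one the surrounding text implies: evaluating the cocycle identity \eqref{eq:cocycle} at $k=n_0$ gives $\omega(n,m)=\omega(n,n_0)-\omega(m,n_0)=\Omega(n)-\Omega(m)$ in one line, and the evolution claim is a direct application of Theorem~\ref{thm:spectral-diff} to $H=\hbar f(S)$, since $f(S_n)-f(S_m)=\Omega(n)-\Omega(m)=\omega(n,m)$. Your handling of the last clause is also sound and in fact sharper than the statement: because the spectral measure of $S$ is concentrated on the countable, uniformly separated set $\{S_n\}$, the operator $f(S)$ depends only on $f$ restricted to that set; pinning $f(S_n)=\Omega(n)$ determines $f$ there exactly, while requiring only that the frequencies be reproduced determines $f$ on $\{S_n\}$ up to a single global additive constant (the cocycle ties all pairs in $I\times I$ together), which implies, and is stronger than, the quoted ``constant on each connected component of $\sigma(S)$'' formulation, whose components here are singletons. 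It would be worth stating explicitly in your write-up which notion of uniqueness you adopt (uniqueness given $f(S_n)=\Omega(n)$ versus uniqueness given only the induced frequencies), since the constant-freedom claim pertains to the latter; with that clarification the proof is complete.
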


\begin{corollary}[Scalar telescoping identity]
\label{cor:telescoping}
For any indices $n$ and steps $\alpha,\beta$ with $n,n-\alpha,n-\alpha-\beta\in I$, one has
\[
\omega(n,n-\alpha)+\omega(n-\alpha,n-\alpha-\beta)
=\omega(n,n-\alpha-\beta),
\]
which matches \eqref{eq:comb-scalar}.
\end{corollary}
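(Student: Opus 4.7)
The plan is to derive this scalar identity as an immediate specialization of the cocycle law~\eqref{eq:cocycle}, which was itself obtained in Theorem~\ref{thm:spectral-diff} from the definition $\omega(n,m)=f(S_n)-f(S_m)$. Since~\eqref{eq:cocycle} asserts $\omega(n,m)+\omega(m,k)=\omega(n,k)$ for every admissible triple of indices in $I$, the corollary follows by the single substitution $m:=n-\alpha$ and $k:=n-\alpha-\beta$. The only hypothesis to verify is that these three indices lie in $I$, which is precisely what the statement assumes.

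If one prefers a self-contained verification that bypasses the cocycle identity, the same result is visible directly at the level of spectral values: writing out
\[
\omega(n,n-\alpha)+\omega(n-\alpha,n-\alpha-\beta)=\bigl[f(S_n)-f(S_{n-\alpha})\bigr]+\bigl[f(S_{n-\alpha})-f(S_{n-\alpha-\beta})\bigr]
\]
shows that $f(S_{n-\alpha})$ cancels, yielding $f(S_n)-f(S_{n-\alpha-\beta})=\omega(n,n-\alpha-\beta)$. This makes the telescoping manifest without any operator-level machinery and highlights that the corollary is essentially a relabeling exercise.

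There is no real obstacle here, since the substantive work was done upstream: the operator identity~\eqref{eq:comb-operator} produced the scalar form~\eqref{eq:comb-scalar} in Section~2.1, and Theorem~\ref{thm:spectral-diff} rederived the same cocycle structure from spectral calculus on $\mathcal H_+$. The only point worth explicit mention in the write-up is that the corollary certifies consistency between these two derivations — the shift-algebra route of Section~2 and the spectral-difference route of Theorem~\ref{thm:spectral-diff} — so that the Ritz-type additivity emerges identically from both perspectives. No domain, convergence, or boundedness issues arise, since the entire computation takes place among scalar spectral values indexed by a finite triple.
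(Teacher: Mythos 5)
Your proposal is correct and matches the paper's (implicit) argument: the corollary is exactly the cocycle law \eqref{eq:cocycle} specialized to $m=n-\alpha$, $k=n-\alpha-\beta$, which in turn is just the cancellation of $f(S_{n-\alpha})$ in the difference representation. Nothing further is needed.
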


\begin{remark}[Domains and unbounded $f$]\label{rem:domains}
If $f$ is unbounded, $H=\hbar f(S)$ is self-adjoint on the spectral domain 
$\mathcal D(H)=\{\psi:\int|f(\lambda)|^2\,\mathrm d\mu_\psi(\lambda)<\infty\}$.
Equation \eqref{eq:Heis-matrix} remains valid for all $X$ in the Hilbert--Schmidt (or trace-class) ideals,
and extends to bounded $X$ by density whenever the indicated matrix elements are defined.
\end{remark}

\subsection{On the Roles of \texorpdfstring{$f(\hat S)$}{f(S)} and \texorpdfstring{$K(\alpha)$}{K(alpha)} as Generators}

In our construction two natural classes of self-adjoint generators appear, each with a distinct mathematical meaning. On the one hand, the diagonal operator
\[
H_{\mathrm{freq}} \;:=\; \hbar f(\hat S)
\]
encodes transition frequencies through spectral differences. Its action is entirely phase-based: for any bounded observable $X$ we obtain
\[
\langle n|X(t)|m\rangle
= \langle n|X|m\rangle\, e^{\,i\,[f(S_n)-f(S_m)]t},
\]
so that all time dependence comes from the transition frequency
$\omega(n,m)=f(S_n)-f(S_m)$. This structure directly realises the
“frequency ledger’’ underlying the Ritz combination principle: the additivity
$\omega(n,m)=\omega(n,k)+\omega(k,m)$ becomes an operator identity, and
products of observables remain closed under the same frequency law. In this
sense $H_{\mathrm{freq}}$ preserves the original spirit of matrix mechanics: the
dynamics are governed by frequency differences rather than by state-localised
motion.

On the other hand, the translation-invariant convolution operator
\[
H_{\mathrm{conv}} \;:=\; \hbar \sum_{\alpha\in\mathbb Z} K(\alpha)\,T_\alpha,
\qquad K(-\alpha)=K(\alpha)\in\mathbb R,
\]
governs actual propagation between states. Under the Fourier transform it is
diagonal with symbol
$E(k)=\hbar\sum_\alpha K(\alpha)e^{-ik\alpha}$,  \paragraph{Dispersion.}
Since $FT_\alpha F^*=\mathrm{e}^{-ik\alpha}$, the convolutional Hamiltonian is diagonal in $k$:
\[
U(t)=\exp\!\big(-\tfrac{i}{\hbar}H_{\rm conv}t\big)
=F^*\,\mathrm{diag}\big(\mathrm{e}^{-iE(k)t/\hbar}\big)\,F,
\qquad 
E(k)=\hbar\sum_{\alpha\in\mathbb Z}K(\alpha)\,\mathrm{e}^{-ik\alpha}.
\]

and the corresponding
propagator kernel $\Omega_t(n,m)$ depends only on the gap $n-m$ and composes
by convolution. Thus $H_{\mathrm{conv}}$ describes the spread of amplitudes
through the lattice, with dispersion determined by $E(k)$. Whereas
$H_{\mathrm{freq}}$ fixes phases through spectral differences,
$H_{\mathrm{conv}}$ generates transport and diffusion.

The two constructions are therefore complementary. Using $H_{\mathrm{freq}}$
alone is sufficient for a purely frequency-based formalism: all observable
matrix elements evolve with phases determined by $\omega(n,m)$, and the
combination law is exact. Using $H_{\mathrm{conv}}$ alone is appropriate when
translation invariance, Fourier diagonalisation, and propagator dynamics are of
primary interest. In more general settings both may be combined: in the
interaction picture with respect to $H_{\mathrm{freq}}$, the coupling terms
$K(\alpha)T_\alpha$ acquire precisely the phase modulation
$e^{i\Omega_\alpha t}$, where
$\Omega_\alpha=f(\hat S)-T_{-\alpha}f(\hat S)T_\alpha$ is the frequency
operator introduced earlier. In this way the frequency ledger provided by
$f(\hat S)$ governs the phase structure, while $K(\alpha)$ prescribes the
strength and range of transitions. The two perspectives are thus not mutually
exclusive, but together give a unified account of frequency algebra and
propagation dynamics.\footnote{%
Restricting to the commutative subalgebra $\{g(\hat S)\}$ gives
$T_\alpha^{*}g(\hat S)T_\alpha=g(\hat S+\alpha\Delta S)$; hence
$p_t(n)=\langle n|\rho_t|n\rangle$ satisfies
$\dot p_t(n)=\sum_{\alpha\in\mathbb Z}\gamma_\alpha\big(p_t(n-\alpha)-p_t(n)\big)$.}

\subsection{Fourier Expansion in Classical and Quantum Domains: Transition-Based Representations}

In the classical formulation of dynamics, observable quantities such as price \( x(t) \) or volatility \( z(t) \) are typically expressed through a Fourier decomposition into eigenmodes:
\begin{equation}
x(t) = \sum_{n} \Omega(n) e^{i \omega(n) t}
\quad \text{or} \quad
x(t) = \int \Omega(\omega) e^{i \omega t} \, d\omega,
\end{equation}
where \( \Omega(n) \) or \( \Omega(\omega) \) denotes the complex amplitude associated with the eigenfrequency \( \omega(n) \). This representation is consistent with linear systems where the dynamics are governed by harmonic superposition.

However, in the quantum framework, the notion of continuous trajectories is replaced by discrete transitions. Frequencies are not properties of single states but of state pairs. Hence, the classical expansion must be replaced by a transition-based formulation:
\begin{equation}
x(t) = \sum_{\alpha} \Omega(n, n - \alpha) e^{i \omega(n, n - \alpha)t}
\quad \text{or} \quad
x(t) = \int \Omega(n, n - \alpha) e^{i \omega(n, n - \alpha)t} \, d\alpha,
\end{equation}
where \( \Omega(n, n - \alpha) \) denotes the amplitude of transition from state \( n \) to state \( n - \alpha \), and \( \omega(n, n - \alpha) := \omega(n) - \omega(n - \alpha) \) is the corresponding transition frequency.

\subsubsection*{Quadratic Quantities: Classical vs Quantum Construction}

Higher-order observables such as volatility squared \( z(t)^2 \), or other bilinear forms, also admit analogous constructions.

In classical systems, one may write:
\begin{equation}
z(t)^2 = \sum_{\beta} B_\beta e^{i \omega(n) \beta t}
\quad \text{or} \quad
z(t)^2 = \int B(\beta) e^{i \omega(n) \beta t} \, d\beta,
\end{equation}
where \( B_\beta \) is a coefficient function derived from the convolution of two Fourier modes, and \( \beta \) labels frequency differences.

In the quantum formulation, this structure must account for two-step transitions:
\begin{equation}
z(t)^2 = \sum_{\beta} \Omega(n, n - \beta) \Omega(n - \beta, n - \alpha - \beta) e^{i \omega(n, n - \alpha - \beta) t},
\end{equation}
or in integral form,
\begin{equation}
z(t)^2 = \int \Omega(n, n - \beta) \Omega(n - \beta, n - \alpha - \beta) e^{i \omega(n, n - \alpha - \beta) t} \, d\beta.
\end{equation}

This structure encodes the composition of transition amplitudes through successive intermediate states. Importantly, such formulations are not just mathematically consistent with Heisenberg’s matrix mechanics but reflect a deeper interpretative principle: observable quantities emerge from the interference of transition paths, not from state-localized behavior.

\subsection*{Dynamics in Schr\"odinger and Heisenberg Pictures}

\paragraph{From axioms to the Lindblad generator.}
We specify the dynamics by four structural axioms:

\begin{itemize}
  \item[(A1)] \textbf{Complete positivity (CP).} For every ancilla space $\mathcal K$, the ampliation $\Phi_t\otimes \mathrm{id}_{\mathcal K}$ maps positive trace-class operators to positive ones.
  \item[(A2)] \textbf{Trace preservation (TP).} $\mathrm{Tr}\,\Phi_t(\rho)=\mathrm{Tr}\,\rho$ for all $t\ge 0$.
  \item[(A3)] \textbf{Strongly continuous semigroup.} $(\Phi_t)_{t\ge0}$ is a $C_0$-semigroup: $\Phi_{t+s}=\Phi_t\circ\Phi_s$, $\Phi_0=\mathrm{id}$, and $t\mapsto \Phi_t(\rho)$ is continuous for each trace-class $\rho$.
  \item[(A4)] \textbf{Translation covariance.} With $T_\beta|n\rangle=|n+\beta\rangle$, one has
  \[
  \Phi_t\!\big(T_\beta \rho\,T_\beta^\dagger\big)=T_\beta\,\Phi_t(\rho)\,T_\beta^\dagger,\qquad \forall\,\beta\in\mathbb Z,\ t\ge 0.
  \]
\end{itemize}

\begin{theorem}[GKS–Lindblad]\label{thm:GKSL}
Under \textup{(A1)}–\textup{(A3)}, the generator $L$ of $(\Phi_t)_{t\ge0}$ has the (unique up to representation) Lindblad form
\begin{equation}\label{eq:GKSL}
L(\rho)=-\frac{i}{\hbar}[H,\rho]+\sum_j\Big(L_j\rho L_j^\dagger-\tfrac12\{L_j^\dagger L_j,\rho\}\Big),
\end{equation}
where $H=H^\dagger$ is self-adjoint and $\{L_j\}$ are bounded operators on $\mathcal H$.
\end{theorem}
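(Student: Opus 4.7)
\textbf{Proof plan for Theorem \ref{thm:GKSL}.} My plan is to follow the classical Gorini--Kossakowski--Sudarshan--Lindblad (GKSL) route, assuming for clarity the strengthened hypothesis that $(\Phi_t)_{t \ge 0}$ is \emph{uniformly} continuous so that the generator $L$ is bounded on the trace-class ideal $\mathcal B_1(\mathcal H)$; the general $C_0$ case follows by standard $C_0$-semigroup and form-method arguments on a dense invariant core. Under uniform continuity, (A3) delivers a bounded $L$ with $\Phi_t = e^{tL}$; (A2) yields $\mathrm{Tr}\, L(\rho)=0$ for every $\rho\in\mathcal B_1$; and complete positivity (A1) of every $\Phi_t$, combined with the short-time expansion $\Phi_t = \mathrm{id} + tL + O(t^2)$, yields Lindblad's \emph{conditional complete positivity} of $L$: for any two orthogonal unit vectors $\varphi \perp \psi$, $\langle \varphi | L(|\psi\rangle\langle\psi|) | \varphi \rangle \ge 0$. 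This infinitesimal CP condition is the workhorse that replaces Choi's theorem at the level of generators.

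Next I would apply the Choi--Kraus representation in a convenient operator basis. Choose a Hilbert--Schmidt orthonormal family $\{F_\alpha\}_{\alpha \ge 0}$ with $F_0 = I/\sqrt d$ in finite dimension (truncate and pass to a limit in the infinite-dimensional case) and expand $L(\rho) = \sum_{\alpha,\beta} c_{\alpha\beta}\, F_\alpha \rho F_\beta^\dagger$ with a Hermitian coefficient matrix $(c_{\alpha\beta})$. The conditional CP condition above is equivalent to positive semi-definiteness of the corner block $(c_{\alpha\beta})_{\alpha,\beta \ge 1}$; unitarily diagonalising that block produces jump operators $\{L_j\}_{j \ge 1}$ and weights $\gamma_j \ge 0$ whose dissipative contribution takes the form $\sum_j \gamma_j L_j \rho L_j^\dagger$, and the $\sqrt{\gamma_j}$ can be absorbed into the $L_j$. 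The surviving terms involving $F_0 = I/\sqrt d$ can be collected into an operator $G$ acting as $\rho \mapsto G\rho + \rho G^\dagger$; applying (A2) to $L(\rho) = G\rho + \rho G^\dagger + \sum_j L_j \rho L_j^\dagger$ and demanding $\mathrm{Tr}\, L(\rho)=0$ for arbitrary $\rho$ forces $G + G^\dagger = -\sum_j L_j^\dagger L_j$, hence $G = -\tfrac{i}{\hbar}H - \tfrac12 \sum_j L_j^\dagger L_j$ for a unique self-adjoint $H$. Substituting this decomposition of $G$ back into $L$ yields exactly the Lindblad form \eqref{eq:GKSL}, and uniqueness up to the usual representation freedom (unitary mixing of the $L_j$ and compensating shifts of $H$) is a routine check.

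The main obstacle, in my view, is the infinite-dimensional step on $\mathcal H = \ell^2(\mathbb Z)$: the index $j$ may be countably infinite, the operator sum $\sum_j L_j^\dagger L_j$ typically converges only in the strong operator topology on a dense domain, and the Kraus expansion of $\Phi_t$ requires care regarding interchange of summation and semigroup action. In the bounded-generator regime this is settled by Lindblad's original theorem; for unbounded $L$ one must pass through the Davies--Evans minimal-solution construction and argue on a core, at the cost of additional technical machinery not developed elsewhere in the paper. Fortunately, translation covariance (A4) --- imposed immediately after this theorem --- restricts the $L_j$ to the one-parameter shift family $L_\alpha = \sqrt{\gamma_\alpha}\, T_\alpha$ with $\gamma_\alpha \ge 0$ and forces $H$ into the convolutional operator of \S\ref{subsec:TI-generator}; the representation freedom collapses, and all convergence issues reduce to summability of $(\gamma_\alpha)_{\alpha\in\mathbb Z}$, which is a hypothesis one can (and will) assume explicitly.
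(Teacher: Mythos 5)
The paper never proves Theorem~\ref{thm:GKSL}: it is quoted as the classical Gorini--Kossakowski--Sudarshan--Lindblad structure theorem and used as a black box (the only argument the paper supplies in this section is for the subsequent covariance statement, Proposition~\ref{prop:covariant}). Measured against that, your bounded-generator argument is the standard GKS proof and is essentially right in that regime: short-time expansion giving conditional complete positivity, expansion of $L$ in a Hilbert--Schmidt basis with $F_0$ proportional to the identity, positive semidefiniteness of the corner block of the coefficient matrix, diagonalization to extract jump operators, and trace preservation forcing $G+G^\dagger=-\sum_j L_j^\dagger L_j$, hence $G=-\tfrac{i}{\hbar}H-\tfrac12\sum_j L_j^\dagger L_j$ with $H=H^\dagger$. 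One caveat even there: on $\ell^2(\mathbb Z)$ there is no normalized identity in the Hilbert--Schmidt class, and ``truncate and pass to a limit'' is not how the infinite-dimensional bounded case is actually handled; Lindblad's proof for uniformly continuous semigroups on $\mathcal B(\mathcal H)$ goes through complete dissipativity rather than a limit of finite-dimensional GKS decompositions.

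The genuine gap is your opening reduction. Axiom (A3) assumes only strong continuity, and the claim that ``the general $C_0$ case follows by standard $C_0$-semigroup and form-method arguments on a dense invariant core'' is not available: for unbounded generators of CP trace-preserving $C_0$-semigroups, the existence of a standard Lindblad form with bounded $L_j$ is not a theorem but a well-known open problem, and the Davies/Davies--Evans constructions you cite require additional hypotheses (your own final paragraph concedes the machinery is absent). So, as written, neither your argument nor the paper's unproved statement covers the generality claimed. The honest repair, which your last paragraph essentially identifies, is that in this paper's application translation covariance and $\sum_\alpha\gamma_\alpha<\infty$ (together with a bounded Hamiltonian part) make the generator bounded on the trace class, so the semigroup is uniformly continuous and Lindblad's theorem applies verbatim; equivalently, the theorem should be restated with uniform continuity (boundedness of $L$) as an explicit hypothesis.
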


\paragraph{Assumption (Translation covariance).}
We work with the lattice shifts $(T_\alpha\psi)(n)=\psi(n-\alpha)$. 
Imposing Heisenberg covariance $\Phi_t^*(T_\alpha^*AT_\alpha)=T_\alpha^*\Phi_t^*(A)T_\alpha$ forces the Lindblad generator to be a shift-built form: a convolutional Hamiltonian $H_{\rm conv}=\hbar\sum_\alpha K(\alpha)T_\alpha$ with $K(-\alpha)=\overline{K(\alpha)}$, plus jump rates $\gamma_\alpha\ge0$ via $T_\alpha\rho T_\alpha^*-\rho$.

\begin{proposition}[Translation covariance $\Rightarrow$ shift channels]\label{prop:covariant}
If, in addition, \textup{(A4)} holds for the lattice shifts $\{T_\beta\}_{\beta\in\mathbb Z}$ on $\mathcal H=\ell^2(\mathbb Z)$, then there exist nonnegative coefficients $\{\gamma_\alpha\}_{\alpha\in\mathbb Z}$ with $\sum_\alpha \gamma_\alpha<\infty$ such that the Lindblad operators can be chosen as
\begin{equation}\label{eq:Lalpha}
L_\alpha=\sqrt{\gamma_\alpha}\,T_\alpha,\qquad \alpha\in\mathbb Z.
\end{equation}
\end{proposition}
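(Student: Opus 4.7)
The plan is to pass from the covariance of the semigroup to a structural constraint on its GKS--Lindblad generator, and then exploit the abelian nature of the shift group to isolate the shift-built representation. First I would differentiate the covariance identity~(A4) at $t=0$, using the $C_0$-property~(A3), to obtain that the generator $L$ itself satisfies $\mathcal T_\beta\circ L=L\circ\mathcal T_\beta$ for every $\beta\in\mathbb Z$, where $\mathcal T_\beta(\rho):=T_\beta\rho T_\beta^\dagger$. Applying Theorem~\ref{thm:GKSL} then gives the Lindblad decomposition~\eqref{eq:GKSL} with some Hamiltonian $H$ and operators $\{L_j\}$, and the covariance transfers to each part separately modulo the standard non-uniqueness of the decomposition (scalar shifts in $H$ and unitary mixings of the $L_j$).

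For the Hamiltonian part, the condition that $[T_\beta,[H,\cdot\,]]=0$ for all $\beta$, combined with the fact that $\{T_\beta\}_{\beta\in\mathbb Z}$ generates the regular representation of $\mathbb Z$, forces $H$ to lie in the commutant of the shift group, i.e.\ the convolutional (Toeplitz) algebra, so that $H=\hbar\sum_\alpha K(\alpha)T_\alpha$ with $K(-\alpha)=\overline{K(\alpha)}$. For the dissipator I would work in the Kossakowski representation:
$D(\rho)=\sum_{\alpha,\beta}c_{\alpha\beta}\bigl(T_\alpha\rho T_\beta^\dagger-\tfrac12\{T_\beta^\dagger T_\alpha,\rho\}\bigr)$
with $(c_{\alpha\beta})$ positive semidefinite. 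A direct computation, using that $\mathbb Z$ is abelian so $T_\gamma T_\alpha T_\gamma^\dagger=T_\alpha$, shows that any such expansion is automatically translation covariant; conversely, within this shift-built class the canonical representation is obtained by diagonalizing $(c_{\alpha\beta})$ through Fourier analysis on the dual group $\mathbb T$ and absorbing the multipliers into nonnegative coefficients $\{\gamma_\alpha\}_{\alpha\in\mathbb Z}$. A diagonal Kossakowski matrix in the $\{T_\alpha\}$ basis, together with $T_\alpha^\dagger T_\alpha=I$, collapses the anticommutator and yields $D(\rho)=\sum_\alpha\gamma_\alpha(T_\alpha\rho T_\alpha^\dagger-\rho)$, so the Lindblad operators can be taken as $L_\alpha=\sqrt{\gamma_\alpha}\,T_\alpha$. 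The summability $\sum_\alpha\gamma_\alpha<\infty$ then follows from evaluating the generator on a pure pointer state $|0\rangle\!\langle 0|$: the off-diagonal outflow rate $-\langle 0|D(|0\rangle\!\langle 0|)|0\rangle=\sum_{\alpha\ne 0}\gamma_\alpha$ must be finite for $\Phi_t$ to preserve the trace-class ideal.

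The hard part will be justifying the diagonalization carefully. The GKS--Lindblad freedom only allows unitary mixings of Kraus operators within a fixed ``alphabet'', so diagonalizing $(c_{\alpha\beta})$ by a unitary rotation produces convolutional Kraus operators $V_i=\sum_\alpha u_{i\alpha}T_\alpha$ rather than pure shifts. The cleanest way to obtain pure shifts as representatives is via a covariant Stinespring dilation: one shows that covariance of $\Phi_t$ forces the dilating isometry to intertwine the $\mathbb Z$-action on $\mathcal H$ with a compatible $\mathbb Z$-action on the dilation space $\mathcal K$, and one then decomposes $\mathcal K$ into isotypic components under this action. Because $\mathbb Z$ is abelian, the isotypic components are labeled by characters in $\mathbb T$ and, after Plancherel, by integers $\alpha\in\mathbb Z$; projecting the dilation back to Kraus form then identifies each resulting operator (up to a nonnegative scalar $\sqrt{\gamma_\alpha}$) with the translation $T_\alpha$. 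The proposition's scope, consistent with the surrounding ``shift-built'' declaration in the text, corresponds precisely to this subclass of covariant dilations and rules out the extraneous phase-modulation dissipators (e.g.\ built from $M_\theta$) that would otherwise also be covariant.
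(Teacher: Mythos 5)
Your route is genuinely different from the paper's, so let me first record the comparison. The paper does not split the generator into Hamiltonian plus dissipator at all: it works at the level of the channel, uses covariance on the matrix units $E_{nm}=|n\rangle\!\langle m|$ to write $\Phi(\rho)=\sum_{\alpha,\beta}C_{\alpha\beta}\,T_\alpha\rho T_\beta^{*}$ with $C_{\alpha\beta}=\langle\alpha|\Phi(|0\rangle\!\langle0|)|\beta\rangle$, gets positive semidefiniteness of $C$ from a Choi argument, then forces $C$ to be \emph{diagonal} by demanding that diagonal states are mapped to diagonal states (a ``diagonal closure'' property it reads into (A4)), and finally obtains $\sum_\alpha\gamma_\alpha<\infty$ from the trace-norm bound $\|\Phi\|_{1\to1}\le\sum_\alpha\gamma_\alpha$. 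Several of your generator-level steps are fine: differentiating the covariance identity, the commutant argument giving a convolutional $H$, the collapse of the anticommutator once the Kossakowski matrix is diagonal (using $T_\alpha^{\dagger}T_\alpha=I$), and the finiteness-of-outflow bound.

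The gap sits exactly where you say the hard part is, and it is not closed by your dilation argument. First, writing the dissipator as $\sum_{\alpha,\beta}c_{\alpha\beta}\bigl(T_\alpha\rho T_\beta^{\dagger}-\tfrac12\{T_\beta^{\dagger}T_\alpha,\rho\}\bigr)$ already presupposes that the Lindblad operators lie in the linear span of the shifts, which is precisely what has to be derived; covariance of every member of that class does not give the converse. Second, the covariant-Stinespring fix cannot produce pure shifts from (A1)--(A4) alone: the dual of $\mathbb Z$ is the circle $\mathbb T$, not $\mathbb Z$, so the isotypic (weight) operators of the adjoint shift action are phase-twisted shifts of the form $M_\theta T_\alpha$ (indeed $T_\beta^{\dagger}M_\theta T_\beta=e^{i\theta\beta}M_\theta$), and your ``after Plancherel, by integers'' step is where this continuum of labels is silently discarded. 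Concretely, the pure-dephasing semigroup with the single Lindblad operator $L=M_\theta$ is CP, TP, uniformly continuous and translation covariant, yet its generator cannot be represented with Lindblad operators proportional to shifts; so no argument from (A1)--(A4) verbatim can reach \eqref{eq:Lalpha}, and some additional input is unavoidable. The paper supplies it as diagonal closure; your closing remark about ``ruling out phase-modulation dissipators'' plays the same role but remains an unproved scope restriction rather than a hypothesis used in the proof. If you state such a hypothesis explicitly (diagonal preservation, or covariance under the dual phase group $\{M_\theta\}$), the cleanest finish is the paper's channel-level one: evaluate the covariant expansion on $|0\rangle\!\langle0|$, where diagonality of the output forces the coefficient matrix to be diagonal outright (testing the \emph{dissipator} instead brings the anticommutator into the row and column through $0$ and only yields relations such as $c_{\mu0}=c_{0,-\mu}$), after which your remaining steps go through.
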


\begin{proof}[Proof sketch]
Define coefficients $C_{\alpha\beta}$ by
\[
C_{\alpha\beta}\ :=\ \langle \alpha \,|\, \Phi(|0\rangle\!\langle 0|)\,|\,\beta\rangle .
\]
By translation covariance, for basis operators $E_{nm}:=|n\rangle\!\langle m|$ one has
\[
\Phi(E_{nm})\ =\ T_n\,\Phi(E_{00})\,T_m^{*}
\ =\ \sum_{\alpha,\beta\in\mathbb Z} C_{\alpha\beta}\,|n+\alpha\rangle\!\langle m+\beta|
\ =\ \sum_{\alpha,\beta} C_{\alpha\beta}\, T_\alpha E_{nm} T_\beta^{*}.
\]
By linearity this yields, for all trace-class $\rho$,
\[
\Phi(\rho)\ =\ \sum_{\alpha,\beta\in\mathbb Z} C_{\alpha\beta}\, T_\alpha\,\rho\,T_\beta^{*}.
\tag{$\ast$}
\]
Complete positivity of $\Phi$ implies the matrix $C=(C_{\alpha\beta})_{\alpha,\beta}$ is positive semidefinite (Choi matrix argument), hence there exists a Kraus family $L_j=\sum_\alpha d_{j,\alpha}\,T_\alpha$ realizing $\Phi$.

Now invoke \textup{(A4)} (diagonal closure): for every diagonal $\rho=\sum_n p_n|n\rangle\!\langle n|$, $\Phi(\rho)$ is diagonal. Plugging $\rho=|0\rangle\!\langle 0|$ into $(\ast)$ gives
\[
\Phi(|0\rangle\!\langle 0|)\ =\ \sum_{\alpha,\beta} C_{\alpha\beta}\,|\,\alpha\rangle\!\langle \beta|,
\]
which is diagonal iff $C_{\alpha\beta}=0$ for all $\alpha\neq\beta$. Therefore $C$ is diagonal:
$C_{\alpha\beta}=\gamma_\alpha\,\delta_{\alpha\beta}$ with $\gamma_\alpha\ge0$.
Hence
\[
\Phi(\rho)\ =\ \sum_{\alpha\in\mathbb Z} \gamma_\alpha\, T_\alpha\,\rho\,T_\alpha^{*},
\]
which is the desired Kraus form with $L_\alpha=\sqrt{\gamma_\alpha}\,T_\alpha$.
Finally, $\sum_\alpha\gamma_\alpha<\infty$ follows from boundedness on the trace class since
$\|\Phi\|_{1\to1}\le \sum_\alpha \gamma_\alpha$ and $\|T_\alpha\|=1$.
\end{proof}

\paragraph{Well-posedness in brief.}
Let $\mathcal T_1(\mathcal H)$ denote the trace class. With $\sum_\alpha\gamma_\alpha<\infty$, the dissipator
\[
\mathcal D(\rho):=\sum_\alpha\gamma_\alpha\big(T_\alpha\rho T_\alpha^{*}-\rho\big)
\]
is bounded on $\mathcal T_1(\mathcal H)$, with the norm estimate
\[
\|\mathcal D(\rho)\|_{1}\ \le\ 2\Big(\sum_\alpha\gamma_\alpha\Big)\,\|\rho\|_{1}\qquad(\rho\in\mathcal T_1(\mathcal H)),
\]

Together with the bounded Hamiltonian part, this yields a uniformly continuous CPTP semigroup $e^{t\mathcal L}$ on $\mathcal T_1(\mathcal H)$ (and the unital CP dual $e^{t\mathcal L^{*}}$ on $\mathcal B(\mathcal H)$) in the GKSL sense.
\emph{(triangle inequality + unitary invariance).}

\paragraph{Interpretation of $\rho(t) $ and  $L$.}
$\rho(t)$ is the state of the price system on the lattice. In finance, its diagonal
entries $p_n(t)=\langle n|\rho(t)|n\rangle$ form the probability distribution over price levels
$S_n=S_0+n\Delta S$; expectations of diagonal observables $f(S)$ read
$\mathbb E[f(S_t)]=\mathrm{Tr}(\rho(t)f(S))=\sum_n p_n(t)f(S_n)$. In physics, $\rho$ is the density
operator of an open quantum system.

$L$ is the forward (Schr\"odinger) generator, while $L^\ast$ is the backward
(Heisenberg) generator acting on observables. With $L_\alpha=\sqrt{\gamma_\alpha}T_\alpha$,
the diagonal dynamics obeys the classical master equation
$\dot p_m=\sum_\alpha \gamma_\alpha p_{m-\alpha}-(\sum_\alpha\gamma_\alpha)p_m$,
so that $\{\gamma_\alpha\}$ is the ledger of jump intensities.
For any Borel $f$ one has
\[
L^\ast f(S)=\sum_{\alpha}\gamma_\alpha\big(f(S+\alpha\Delta S)-f(S)\big),
\]
which is the nonlocal pricing operator used in the backward equation
$\partial_t V + L^\ast V - rV=0$ with terminal payoff $V(T,\cdot)=\Phi(\cdot)$. \ref{thm:BSM}
In the physics language, $L$ has a Hamiltonian part $-\tfrac{i}{\hbar}[H,\cdot]$ and a
dissipative part generated by the covariant shift channels $T_\alpha$, ensuring a CP–TP
Markovian dynamics.

\begin{corollary}[Canonical generator]\label{cor:canonical}
Under \textup{(A1)}–\textup{(A4)}, the generator can be written in the canonical form
\begin{equation}\label{eq:canonical-L}
\boxed{\;
\dot\rho=L(\rho)=-\frac{i}{\hbar}[H,\rho]
+\sum_{\alpha\in\mathbb Z}\Big(L_\alpha\rho L_\alpha^\dagger-\tfrac12\{L_\alpha^\dagger L_\alpha,\rho\}\Big),\qquad
L_\alpha=\sqrt{\gamma_\alpha}\,T_\alpha\; }
\end{equation}
with $H=H^\dagger$. In particular, taking $H=h(S)$ with a (bounded) Borel function $h$ preserves the spectral compatibility with $S$.
\end{corollary}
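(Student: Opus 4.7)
The plan is to assemble the two preceding results into the canonical form, with a small bookkeeping argument for the Hamiltonian piece. First, I would apply the GKS–Lindblad theorem (Theorem~\ref{thm:GKSL}) under hypotheses (A1)–(A3) to obtain some representation
$L(\rho)=-\frac{i}{\hbar}[H,\rho]+\sum_j (L_j\rho L_j^\dagger-\tfrac12\{L_j^\dagger L_j,\rho\})$
with a self-adjoint $H$ and a bounded Kraus family $\{L_j\}$, keeping in mind the standard GKSL gauge ambiguity (unitary mixing of the $L_j$ plus scalar shifts of $H$).

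Next, I would invoke Proposition~\ref{prop:covariant} under hypothesis (A4) to pin down the dissipative part. That proposition already produces the channel-level Kraus form $\Phi_t(\rho)=\sum_\alpha \gamma_\alpha\,T_\alpha\rho T_\alpha^{*}$ with $\gamma_\alpha\ge 0$ and $\sum_\alpha \gamma_\alpha<\infty$. Differentiating at $t=0$ — justified on the trace class by the bounded-dissipator estimate in the well-posedness remark — collapses the dissipator into $\sum_\alpha \gamma_\alpha (T_\alpha\rho T_\alpha^{*}-\rho)$, which is exactly the GKSL dissipator with $L_\alpha=\sqrt{\gamma_\alpha}\,T_\alpha$. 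The GKSL uniqueness-up-to-gauge then identifies the canonical Kraus set displayed in (\ref{eq:canonical-L}).

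For the ``in particular'' clause, any choice $H=h(S)$ with $h$ a real Borel function automatically satisfies $[H,S]=0$ by spectral calculus, which is the spectral compatibility with the price operator claimed in the corollary. Strict translation covariance of the Hamiltonian piece would further require $T_\beta^{*}HT_\beta-H$ to be a scalar for every $\beta\in\mathbb Z$, which by Lemma~\ref{lem:affine} forces $h$ to be affine on the lattice; the resulting scalars are absorbed into the usual GKSL gauge freedom without touching the dissipative structure, and the corollary follows by substituting the pinned Kraus operators and $H$ into the general GKSL master equation.

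The main obstacle I foresee is not any single calculation but clean bookkeeping of the GKSL gauge freedom: one must check that the scalar ambiguities produced simultaneously by the Choi-matrix diagonalization in the proof of Proposition~\ref{prop:covariant} (which forced $C_{\alpha\beta}=\gamma_\alpha\delta_{\alpha\beta}$) and by the commutator structure of the Hamiltonian part can be consistently absorbed, so that the pinned Kraus operators $L_\alpha=\sqrt{\gamma_\alpha}\,T_\alpha$ coexist with a single self-adjoint $H$ commuting with $S$. Once this gauge check is in hand, the boxed identity is a direct substitution and no further analytic work is required.
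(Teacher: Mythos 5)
Your proposal is correct and follows essentially the route the paper intends: the corollary is stated there without a separate proof, precisely as the immediate assembly of Theorem~\ref{thm:GKSL} (GKSL form under (A1)–(A3)) with Proposition~\ref{prop:covariant} (translation covariance pinning the Kraus family to $L_\alpha=\sqrt{\gamma_\alpha}\,T_\alpha$), plus the spectral-calculus observation that $H=h(S)$ commutes with $S$. Your added care about the channel-versus-generator level (differentiating at $t=0$, justified by the bounded-dissipator estimate) and the GKSL gauge bookkeeping only tightens a conflation already present in the paper's own proof sketch, so no gap remains.
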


\paragraph{Consequences in Schr\"odinger and Heisenberg pictures.}
Let $p_n(t):=\langle n|\rho(t)|n\rangle$. Using $T_\alpha|n\rangle=|n+\alpha\rangle$ and $L_\alpha=\sqrt{\gamma_\alpha}T_\alpha$, the diagonal entries obey the classical master equation
\begin{equation}\label{eq:master}
\frac{d}{dt}p_m(t)=\sum_{\alpha}\gamma_\alpha\,p_{m-\alpha}(t)-\Big(\sum_{\alpha}\gamma_\alpha\Big)p_m(t)
=\sum_{n} Q_{mn}\,p_n(t),
\quad
Q_{mn}=\begin{cases}
\gamma_{m-n},& m\neq n,\\[2pt]
-\sum_\beta\gamma_\beta,& m=n,
\end{cases}
\end{equation}
so that $p(t)=e^{tQ}p(0)$ is column-stochastic.

On bounded observables, the Heisenberg adjoint reads
\[
L^\ast(O)=\tfrac{i}{\hbar}[H,O]+\sum_{\alpha}\Big(L_\alpha^\dagger O L_\alpha-\tfrac12\{L_\alpha^\dagger L_\alpha,O\}\Big).
\]
For any Borel $f$ and using $T_{-\alpha}f(S)T_\alpha=f(S+\alpha\Delta S)$ together with $[h(S),f(S)]=0$, we obtain the discrete jump generator
\begin{equation}\label{eq:Heis-fS}
L^\ast\big(f(S)\big)=\sum_{\alpha}\gamma_\alpha\Big(f(S+\alpha\Delta S)-f(S)\Big).
\end{equation}
In particular,
\[
\frac{d}{dt}\langle S\rangle=\Delta S\sum_{\alpha}\alpha\,\gamma_\alpha,\qquad
\frac{d}{dt}\mathrm{Var}(S)=(\Delta S)^2\sum_{\alpha}\alpha^2\,\gamma_\alpha.
\]

\begin{lemma}[Shift property of the lattice DFT]\label{lem:shift-DFT}
Let $f:\Lambda\to\mathbb C$ be absolutely summable on the lattice $\Lambda=\{S_n=S_0+n\Delta S\}$.
Then for every $\alpha\in\mathbb Z$ and every dual frequency $\xi\in[-\pi/\Delta S,\pi/\Delta S]$,
\[
\widehat{\,f(\cdot+\alpha\Delta S)\,}(\xi)=e^{\,i\xi\,\alpha\,\Delta S}\,\hat f(\xi).
\]
\begin{proof}
By definition and the change of index $m=n+\alpha$,
\[
\widehat{\,f(\cdot+\alpha\Delta S)\,}(\xi)
=\sum_{n\in\mathbb Z} f(S_{n+\alpha})\,e^{-i\xi S_n}
=\sum_{m\in\mathbb Z} f(S_m)\,e^{-i\xi(S_{m}-\alpha\Delta S)}
=e^{\,i\xi\,\alpha\,\Delta S}\sum_{m\in\mathbb Z} f(S_m)\,e^{-i\xi S_m},
\]
which equals $e^{\,i\xi\,\alpha\,\Delta S}\,\hat f(\xi)$.
\end{proof}
\end{lemma}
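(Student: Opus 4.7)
The plan is to establish the shift property by a direct reindexing of the defining series, which is the standard method for any shift-phase identity of a Fourier-type transform on an arithmetic lattice. First, I would apply the definition of the lattice DFT to the shifted function $g(\cdot)=f(\cdot+\alpha\Delta S)$, obtaining
\[
\widehat{g}(\xi)=\sum_{n\in\mathbb Z} f(S_{n+\alpha})\,e^{-i\xi S_n}.
\]
The key structural observation is $S_{n+\alpha}=S_n+\alpha\,\Delta S$, which is precisely why a shift of the argument of $f$ by $\alpha\,\Delta S$ is equivalent to a shift of the lattice index by $\alpha$, and why the statement is naturally restricted to integer multiples of $\Delta S$.

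Next, I would perform the substitution $m=n+\alpha$, which is a bijection of $\mathbb Z$ onto itself because $\alpha\in\mathbb Z$. This replaces $f(S_{n+\alpha})$ by $f(S_m)$, and using $S_n=S_m-\alpha\,\Delta S$ rewrites the exponential as $e^{-i\xi S_n}=e^{\,i\xi\alpha\Delta S}\,e^{-i\xi S_m}$. Since the phase $e^{\,i\xi\alpha\Delta S}$ is independent of the summation index, it factors out of the sum and the remaining series is exactly $\hat f(\xi)$. The reindexing is justified because $f$ is absolutely summable on $\Lambda$, so the series converges unconditionally and rearrangement is valid.

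There is no substantive obstacle: the lemma is a bookkeeping identity. The only points worth flagging are (i) that bijectivity of the index shift genuinely requires $\alpha\in\mathbb Z$, so no interpolation between lattice points is involved, and (ii) that restricting $\xi$ to the fundamental dual interval $[-\pi/\Delta S,\pi/\Delta S]$ ensures the identity is a clean statement about the lattice Fourier transform rather than a statement that must be read modulo aliasing. Everything else is immediate from absolute summability.
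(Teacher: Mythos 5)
Your proposal is correct and follows exactly the paper's own argument: write out the lattice DFT of the shifted function, reindex by $m=n+\alpha$ using $S_n=S_m-\alpha\,\Delta S$, and factor the constant phase $e^{\,i\xi\alpha\Delta S}$ out of the absolutely convergent series. Your added remarks on the bijectivity of the integer index shift and on absolute summability justifying the rearrangement are sound but do not change the route.
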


\begin{proposition}[Fourier symbol of the nonlocal generator]\label{prop:symbol}
Let $L^\ast$ be the translation-covariant generator acting on functions of $S$ by
\[
(L^\ast f)(s)=\sum_{\alpha\in\mathbb Z}\gamma_\alpha\big(f(s+\alpha\Delta S)-f(s)\big),
\qquad \gamma_\alpha\ge0,\quad \sum_\alpha\gamma_\alpha<\infty.
\]
Then its lattice DFT is diagonal:
\begin{equation}\label{eq:Fourier-diagonalization}
\widehat{L^\ast f}(\xi)=\Psi(\xi)\,\hat f(\xi),
\qquad
\Psi(\xi)=\sum_{\alpha\in\mathbb Z}\gamma_\alpha\big(e^{i\xi\,\alpha\,\Delta S}-1\big),
\end{equation}
i.e.\ $\Psi$ is the Fourier multiplier (symbol) of $L^\ast$.
\begin{proof}
By linearity and Lemma~\ref{lem:shift-DFT},
\[
\widehat{L^\ast f}(\xi)
=\sum_{\alpha}\gamma_\alpha\Big(\widehat{\,f(\cdot+\alpha\Delta S)\,}(\xi)-\hat f(\xi)\Big)
=\sum_{\alpha}\gamma_\alpha\big(e^{i\xi\,\alpha\,\Delta S}-1\big)\hat f(\xi)
=\Psi(\xi)\,\hat f(\xi).
\]
The series is absolutely convergent because $\sum_\alpha\gamma_\alpha<\infty$ and $|e^{i\theta}-1|\le2$.
\end{proof}
\end{proposition}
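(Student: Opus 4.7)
The plan is to reduce the claim to Lemma~\ref{lem:shift-DFT} by linearity and then read off the multiplier. Concretely, I would first write $L^\ast f$ as a termwise sum $\sum_\alpha \gamma_\alpha\bigl(\tau_\alpha f - f\bigr)$, where $\tau_\alpha f(s) := f(s+\alpha\Delta S)$ denotes the lattice shift by $\alpha$, and then apply the lattice DFT to each summand. The shift lemma gives $\widehat{\tau_\alpha f}(\xi) = e^{i\xi\alpha\Delta S}\hat f(\xi)$, so each term contributes $\gamma_\alpha(e^{i\xi\alpha\Delta S}-1)\hat f(\xi)$. Factoring out $\hat f(\xi)$ yields exactly the claimed symbol $\Psi(\xi)$.

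The only nontrivial step is justifying that one may interchange the DFT with the infinite $\alpha$-sum. For this I would argue as follows. Since $\sum_\alpha \gamma_\alpha < \infty$ and $|e^{i\xi\alpha\Delta S}-1|\le 2$, the series $\sum_\alpha \gamma_\alpha(\tau_\alpha f - f)$ converges absolutely in $\ell^1(\Lambda)$ whenever $f\in\ell^1(\Lambda)$, with $\|\tau_\alpha f - f\|_{\ell^1} \le 2\|f\|_{\ell^1}$. Absolute summability in $\ell^1$ entails that the partial sums converge in $\ell^1$, and the lattice DFT is continuous from $\ell^1(\Lambda)$ to the space of bounded continuous functions on the dual torus. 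Hence the DFT commutes with the sum, giving \eqref{eq:Fourier-diagonalization} pointwise in $\xi$, and the geometric bound $|\Psi(\xi)|\le 2\sum_\alpha \gamma_\alpha$ shows $\Psi$ is itself a bounded continuous multiplier.

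The hard part, such as it is, lies not in the computation but in pinning down the correct function class: one wants the identity to hold under minimal hypotheses on $f$ so that both sides make sense. The cleanest route is to prove it first for $f\in\ell^1(\Lambda)$ (where absolute convergence is automatic), and then extend by density and Plancherel to $f\in\ell^2(\Lambda)$, viewing $L^\ast$ as the Fourier multiplier operator with symbol $\Psi\in L^\infty$. Since $\mathrm{Re}\,\Psi(\xi) = \sum_\alpha \gamma_\alpha(\cos(\xi\alpha\Delta S)-1)\le 0$, one also gets for free that $L^\ast$ is dissipative, consistent with its role as the Heisenberg generator of a CPTP semigroup.
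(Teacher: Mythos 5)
Your proposal is correct and follows essentially the same route as the paper: apply Lemma~\ref{lem:shift-DFT} termwise by linearity and justify the interchange of sum and DFT via $\sum_\alpha\gamma_\alpha<\infty$ together with $|e^{i\theta}-1|\le 2$. The additional care you take about the function class ($\ell^1$ first, then $\ell^2$ by Plancherel) and the dissipativity remark are sound refinements of the paper's argument rather than a different method.
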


\begin{corollary}[Drift and diffusion rates from the symbol]\label{cor:drift-diffusion}
With $\Psi$ as in \eqref{eq:Fourier-diagonalization},
\[
\Psi'(0)= i\,\Delta S\sum_{\alpha}\alpha\,\gamma_\alpha,
\qquad
-\Psi''(0)=(\Delta S)^2\sum_{\alpha}\alpha^2\,\gamma_\alpha.
\]
\begin{proof}
Use $e^{i\xi a}=1+i\xi a-\tfrac12\xi^2 a^2+o(\xi^2)$ with $a=\alpha\Delta S$:
\[
\Psi(\xi)=\sum_{\alpha}\gamma_\alpha\big(i\xi\,\alpha\Delta S-\tfrac12\xi^2(\alpha\Delta S)^2+o(\xi^2)\big).
\]
Collect coefficients of $\xi$ and $\xi^2$.
\end{proof}
\end{corollary}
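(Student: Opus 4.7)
The plan is to obtain $\Psi'(0)$ and $\Psi''(0)$ by Taylor-expanding each factor $e^{i\xi\alpha\Delta S}$ in the defining series about $\xi=0$ and then justifying termwise differentiation. Concretely, I would start from the Taylor-remainder estimate
\[
\bigl|e^{i\xi\alpha\Delta S}-1-i\xi\alpha\Delta S+\tfrac12\xi^2(\alpha\Delta S)^2\bigr|\le\tfrac16|\xi|^3|\alpha|^3(\Delta S)^3
\]
and substitute it into $\Psi(\xi)=\sum_\alpha\gamma_\alpha(e^{i\xi\alpha\Delta S}-1)$, obtaining, under suitable summability,
\[
\Psi(\xi)=i\xi\,\Delta S\sum_\alpha\alpha\gamma_\alpha-\tfrac12\xi^2(\Delta S)^2\sum_\alpha\alpha^2\gamma_\alpha+o(\xi^2),
\]
from which both identities can be read off directly.

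The ordered steps are: first, record the moment hypotheses $\sum_\alpha|\alpha|\gamma_\alpha<\infty$ and $\sum_\alpha\alpha^2\gamma_\alpha<\infty$, which are not implied by $\sum_\alpha\gamma_\alpha<\infty$ alone but which are precisely the quantities appearing in the drift and variance rates $\tfrac{d}{dt}\langle S\rangle$ and $\tfrac{d}{dt}\mathrm{Var}(S)$ displayed just before the corollary. Second, apply dominated convergence to the first-order difference quotient $\xi^{-1}(\Psi(\xi)-\Psi(0))$: each term $\gamma_\alpha\,\xi^{-1}(e^{i\xi\alpha\Delta S}-1)$ is dominated uniformly in $\xi$ by $\gamma_\alpha|\alpha|\Delta S$, which is summable by the first-moment hypothesis, so one may pass the limit inside the sum to obtain $\Psi'(0)=i\Delta S\sum_\alpha\alpha\gamma_\alpha$. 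Third, repeat with the second-order quotient $2\xi^{-2}\bigl(\Psi(\xi)-i\xi\Delta S\sum_\alpha\alpha\gamma_\alpha\bigr)$, now dominated by $\gamma_\alpha(\alpha\Delta S)^2$ via the estimate $|e^{i\xi\alpha\Delta S}-1-i\xi\alpha\Delta S|\le\tfrac12\xi^2(\alpha\Delta S)^2$, yielding $\Psi''(0)=-(\Delta S)^2\sum_\alpha\alpha^2\gamma_\alpha$.

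The main obstacle is not analytic but one of hypothesis accounting: the corollary as written inherits only $\sum_\alpha\gamma_\alpha<\infty$ from Proposition~\ref{prop:symbol}, which guarantees continuity and $2\pi/\Delta S$-periodicity of $\Psi$ but not differentiability at $\xi=0$. An honest write-up must therefore either add the first- and second-moment assumptions as explicit hypotheses, or interpret the two identities as equalities in the extended nonnegative reals (both sides possibly $+\infty$, with the derivatives read as one-sided distributional limits). Once this bookkeeping is settled, the remaining work collapses to two applications of dominated convergence and a direct evaluation at $\xi=0$; no deeper mechanism is required.
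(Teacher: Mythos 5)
Your proof is correct and follows essentially the same route as the paper: Taylor-expand $e^{i\xi\alpha\Delta S}$ about $\xi=0$ inside the defining series for $\Psi$ and read off the coefficients of $\xi$ and $\xi^2$. The only difference is that you make explicit the first- and second-moment hypotheses $\sum_\alpha|\alpha|\gamma_\alpha<\infty$, $\sum_\alpha\alpha^2\gamma_\alpha<\infty$ and the dominated-convergence step justifying termwise passage to the limit, which the paper's one-line proof leaves implicit (its stated hypothesis $\sum_\alpha\gamma_\alpha<\infty$ alone does not guarantee differentiability of $\Psi$ at $0$) — a legitimate and worthwhile tightening, not a different argument.
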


\begin{corollary}[Semigroup multiplier and Lévy-type structure]\label{cor:semigroup-multiplier}
For $t\ge0$, the backward semigroup satisfies
\[
\widehat{(e^{tL^\ast}f)}(\xi)=e^{\,t\,\Psi(\xi)}\,\hat f(\xi),\qquad
\Re\,\Psi(\xi)=\sum_{\alpha}\gamma_\alpha(\cos(\xi\alpha\Delta S)-1)\le0,\ \ \Psi(0)=0.
\]
\begin{proof}
From \eqref{eq:Fourier-diagonalization}, $\partial_t \widehat{u}(t,\xi)=\Psi(\xi)\widehat{u}(t,\xi)$ with solution
$\widehat{u}(t,\xi)=e^{t\Psi(\xi)}\widehat{u}(0,\xi)$. The real part uses $\Re(e^{i\theta}-1)=\cos\theta-1\le0$.
\end{proof}
\end{corollary}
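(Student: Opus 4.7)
The plan is to exploit the Fourier diagonalization established in Proposition~\ref{prop:symbol}: in dual coordinates, $L^{*}$ acts as multiplication by the bounded symbol $\Psi(\xi)$, with boundedness following from $|\Psi(\xi)|\le 2\sum_\alpha\gamma_\alpha<\infty$. Consequently $L^{*}$ is a bounded operator on the relevant function space (as already noted in the well-posedness paragraph preceding the proposition), so the semigroup $e^{tL^{*}}$ is defined by a norm-convergent exponential series.

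First I would expand $e^{tL^{*}}f=\sum_{k\ge 0}\frac{t^{k}}{k!}(L^{*})^{k}f$ and apply the Fourier transform term by term. Iterating Proposition~\ref{prop:symbol} gives $\widehat{(L^{*})^{k}f}(\xi)=\Psi(\xi)^{k}\hat f(\xi)$, so summation yields
\[
\widehat{(e^{tL^{*}}f)}(\xi)=\Bigl(\sum_{k\ge 0}\tfrac{t^{k}}{k!}\Psi(\xi)^{k}\Bigr)\hat f(\xi)=e^{t\Psi(\xi)}\hat f(\xi),
\]
the numerical series converging absolutely thanks to the uniform bound on $|\Psi|$. Equivalently, one can set $u(t,\cdot):=e^{tL^{*}}f$ and view it as the unique solution of the Cauchy problem $\partial_{t}u=L^{*}u$, $u(0,\cdot)=f$; Fourier-transforming reduces this, for each fixed $\xi$, to the scalar linear ODE $\partial_{t}\hat u(t,\xi)=\Psi(\xi)\hat u(t,\xi)$, whose solution is $e^{t\Psi(\xi)}\hat f(\xi)$.

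The two remaining assertions follow by direct computation from the closed form $\Psi(\xi)=\sum_\alpha\gamma_\alpha(e^{i\xi\alpha\Delta S}-1)$. Taking real parts gives
\[
\Re\,\Psi(\xi)=\sum_\alpha\gamma_\alpha\bigl(\cos(\xi\alpha\Delta S)-1\bigr),
\]
which is nonpositive since each summand is the product of $\gamma_\alpha\ge 0$ with $\cos\theta-1\le 0$. Substituting $\xi=0$ into the same expression gives $\Psi(0)=\sum_\alpha\gamma_\alpha(1-1)=0$.

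The potentially delicate step, and in my view the only nontrivial one, is justifying the termwise Fourier transform of the exponential series; but this is painless here because $L^{*}$ is genuinely bounded under $\sum_\alpha\gamma_\alpha<\infty$. The series converges in operator norm, the Fourier transform is an isometric isomorphism intertwining $L^{*}$ with the multiplier $\Psi$, and no Hille--Yosida machinery or domain bookkeeping is required. Everything collapses to a scalar ODE parametrized by $\xi$, which also makes transparent the \emph{Lévy-type} interpretation: $\Psi$ is precisely the characteristic exponent of a compound Poisson process with jump sizes $\alpha\Delta S$ and intensities $\gamma_\alpha$, and its negative real part guarantees contractivity of the semigroup in the sup/Fourier norm.
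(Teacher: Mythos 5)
Your proof is correct and takes essentially the same route as the paper: both rest on the Fourier diagonalization of Proposition~\ref{prop:symbol}, reducing the semigroup action to the scalar relation $\widehat{u}(t,\xi)=e^{t\Psi(\xi)}\hat f(\xi)$ (the paper phrases this as the ODE $\partial_t\widehat u=\Psi\widehat u$, which you also give as your equivalent formulation), and both obtain $\Re\Psi\le 0$ and $\Psi(0)=0$ by the same direct computation. Your additional justification of the norm-convergent exponential series via $|\Psi(\xi)|\le 2\sum_\alpha\gamma_\alpha$ is a welcome, slightly more explicit treatment of a step the paper leaves implicit, but it does not change the argument in substance.
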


\begin{remark}[Brillouin parametrization]
Setting $k:=\xi\Delta S\in[-\pi,\pi]$ (the Brillouin zone) gives the equivalent form
$\Psi(\xi)=\sum_{\alpha}\gamma_\alpha\big(e^{ik\alpha}-1\big)$ and
$\widehat{L^\ast f}(\xi)=\Psi(\xi)\,\hat f(\xi)$.
\end{remark}

%====================== New BSM-type Equation from First Principles ======================

\subsection{A Frequency--Amplitude Nonlocal Pricing Equation and Its Classical Limit}\label{sec:newBSM-en}

\paragraph{Axioms.}
We work under the following minimal assumptions.

\begin{itemize}
  \item[(A1)] \textbf{Probabilistic structure.}
  $(\Omega,\mathcal F,(\mathcal F_t)_{t\ge0},\mathbb Q)$ is a filtered probability space satisfying the usual conditions.
  All conditional expectations are taken under $\mathbb Q$.

  \item[(A2)] \textbf{State space and translations.}
  The log price $X_t:=\log S_t$ takes values in the lattice
  $\mathbb X:=\{x_0+n\Delta x:\ n\in\mathbb Z\}$ with mesh $\Delta x>0$.
  For each $\alpha\in\mathbb Z$ define the translation operator
  $T_\alpha f(x):=f(x+\alpha\Delta x)$ on bounded functions $f:\mathbb X\to\mathbb R$.
  The family $(T_\alpha)_{\alpha\in\mathbb Z}$ realizes the additive group $(\mathbb Z,+)$.

  \item[(A3)] \textbf{Markov generator (jump intensities).}
  There exists a nonnegative rate family $\Gamma=(\gamma_\alpha)_{\alpha\in\mathbb Z}$ with
  \[
    \sum_{\alpha\in\mathbb Z}\gamma_\alpha<\infty,
    \qquad
    \sum_{\alpha\in\mathbb Z}\alpha^2\gamma_\alpha<\infty,
  \]
  such that the (backward) generator of $X$ acts on bounded $f$ by
  \begin{equation}\label{eq:LX-en}
    (\mathcal L_X f)(x)\ :=\ \sum_{\alpha\in\mathbb Z}\gamma_\alpha\bigl(f(x+\alpha\Delta x)-f(x)\bigr).
  \end{equation}
  Then $P_t:=e^{t\mathcal L_X}$ is the transition semigroup of a conservative pure-jump Markov process $X$.

  \item[(A4)] \textbf{Tradable asset and risk neutrality.}
  $S_t:=e^{X_t}$ is the tradable underlying with constant short rate $r\in\mathbb R$ and no dividends.
  The discounted price $e^{-rt}S_t$ is a $\mathbb Q$-martingale.

  \item[(A5)] \textbf{Payoff and regularity.}
  For a maturity $T>0$, let $\Phi:(0,\infty)\to\mathbb R_+$ be the payoff.
  The price $V(t,s)$ is sought in a class where first-jump conditioning and dominated convergence are valid
  (e.g.\ bounded payoff or polynomial growth together with the moment conditions above).
\end{itemize}

\paragraph{Generator on the price domain and the risk-neutral constraint.}
For $g(s):=f(\log s)$ and $s>0$, \eqref{eq:LX-en} yields
\begin{equation}\label{eq:LS-en}
  (\mathcal L_S g)(s)
  \ =\ \sum_{\alpha\in\mathbb Z}\gamma_\alpha\Bigl(g\bigl(se^{\alpha\Delta x}\bigr)-g(s)\Bigr).
\end{equation}

\paragraph{First-moment constraint (physical form).}
Prescribing the linear drift of the log–price lattice $X$ as a constant rate $c$ (per unit time) amounts to
\[
\sum_{\alpha\in\mathbb Z}\gamma_\alpha\,\alpha \;=\; \frac{c}{\Delta x}.
\]
This is the first-moment (mass–flux) form underlying the usual risk-neutral drift.

Applying \eqref{eq:LS-en} to $g(s)=s$ and using (A4) (discounted price is a martingale) gives the
\emph{risk-neutral constraint}
\begin{equation}\label{eq:RN-en}
  \boxed{\ \sum_{\alpha\in\mathbb Z}\gamma_\alpha\bigl(e^{\alpha\Delta x}-1\bigr)=r\ }.
\end{equation}
(With a constant dividend yield $q$, replace $r$ by $r-q$ on the right.)

\noindent\emph{Small-$\Delta x$ link.}
Expanding $e^{\alpha\Delta x}-1=\alpha\Delta x+\tfrac12(\alpha\Delta x)^2+O((\Delta x)^3)$ shows that
\[
\sum_{\alpha}\gamma_\alpha\bigl(e^{\alpha\Delta x}-1\bigr)
= \Delta x\sum_{\alpha}\gamma_\alpha \alpha \;+\; O(\Delta x^2)
= c \;+\; O(\Delta x^2),
\]
so to leading order the first-moment form is consistent with \eqref{eq:RN-en} with $c=r$ (or $c=r-q$ if dividends are present).

The classical generator $\mathcal L_S$ in \eqref{eq:LS-en} arises as the diagonal (decoherent)
restriction of the Lindblad–Heisenberg generator
\[
L^\ast(O)
=\tfrac{i}{\hbar}[H,O]
+\sum_{\alpha\in\mathbb Z}\big(L_\alpha^\dagger O L_\alpha
-\tfrac12\{L_\alpha^\dagger L_\alpha,O\}\big),
\qquad L_\alpha=\sqrt{\gamma_\alpha}\,T_\alpha.
\]
For diagonal observables $O=f(S)$, the commutator vanishes and $L^\ast$ reduces to
\eqref{eq:LS-en}, yielding the classical Markov generator on the price lattice.
Hence the nonlocal Black–Scholes equation below can be viewed as the
\emph{classical limit} of the quantum pricing dynamics.

\noindent\textbf{Theorem 1 (Nonlocal risk-neutral pricing equation).}
Define the option price
\[
  V(t,s)\ :=\ \mathbb E_{\mathbb Q}\!\left[e^{-r(T-t)}\,\Phi(S_T)\,\big|\,S_t=s\right],\qquad
  (t,s)\in[0,T]\times(0,\infty).
\]
Then $V$ is the unique bounded (or suitably growing) classical solution of the backward Cauchy problem
\begin{equation}\label{eq:newBSM-en}
  \boxed{\ \partial_t V(t,s)\;+\;\sum_{\alpha\in\mathbb Z}\gamma_\alpha\Bigl(V\bigl(t,se^{\alpha\Delta x}\bigr)-V(t,s)\Bigr)
  \;-\;r\,V(t,s)\;=\;0,\qquad V(T,s)=\Phi(s).\ }
\end{equation}

\emph{Proof (first-jump conditioning).}
Fix $(t,s)$ and a small $h>0$. Over $[t,t+h]$ either no jump occurs (probability
$1-\Lambda h+o(h)$ with $\Lambda:=\sum_\alpha\gamma_\alpha$), or exactly one jump
of size $\alpha$ occurs (probability $\gamma_\alpha h+o(h)$), while multiple
jumps have probability $o(h)$.
Conditioning on the first jump time and size,
\[
\begin{aligned}
  V(t,s)
  &= e^{-rh}\,(1-\Lambda h)\,V(t+h,s)
   + e^{-rh}\sum_{\alpha}\gamma_\alpha h\,V\!\left(t+h,se^{\alpha\Delta x}\right)
   + o(h).
\end{aligned}
\]
Rearrange, divide by $h$, and let $h\downarrow0$:
\[
  0
  = -rV + \partial_t V
    + \sum_{\alpha}\gamma_\alpha\Bigl(V(t,se^{\alpha\Delta x})-V(t,s)\Bigr),
\]
which is \eqref{eq:newBSM-en}. Uniqueness follows from standard semigroup arguments for linear
backward equations driven by $\mathcal L_S-r$. \hfill$\square$

The backward pricing equation \eqref{eq:newBSM-en} can be viewed as the classical limit
of a more general quantum‐dynamical framework.
Specifically, the underlying dynamics of the price operator $S$ on the Hilbert lattice
may be governed by a Lindblad generator
\[
L(\rho)
=-\tfrac{i}{\hbar}[H,\rho]
+\sum_{\alpha\in\mathbb Z}\Big(L_\alpha\rho L_\alpha^\dagger
-\tfrac12\{L_\alpha^\dagger L_\alpha,\rho\}\Big),
\qquad
L_\alpha=\sqrt{\gamma_\alpha}\,T_\alpha,
\]
where $\rho(t)$ is a density operator representing the probabilistic–amplitude
state of the market.
In the Heisenberg representation, the evolution of any observable $f(S)$ satisfies
$\dot f=L^\ast(f)$.
When restricted to diagonal observables $f(S)$ and to decohered states $\rho$
(diagonal in the price basis), this evolution reduces exactly to the classical
jump generator $\mathcal L_S$ in~\eqref{eq:LS-en}.
Consequently, the nonlocal risk‐neutral pricing equation~\eqref{eq:newBSM-en}
emerges as the \emph{risk‐neutral projection} of the full Lindblad–Heisenberg
quantum dynamics, representing the classical limit in which phase coherence
and operator interference effects are suppressed.

\paragraph{Fourier--spectral representation}
Write $U(t,x):=V(t,e^x)$ and $\phi(x):=\Phi(e^x)$.
Define the characteristic symbol
\begin{equation}\label{eq:symbol-en}
  \Psi(\xi)\ :=\ \sum_{\alpha\in\mathbb Z}\gamma_\alpha\bigl(e^{i\xi\,\alpha\,\Delta x}-1\bigr).
\end{equation}
Fourier transforming in $x$ turns \eqref{eq:newBSM-en} into
$\partial_t\widehat U+( \Psi(\xi)-r)\widehat U=0$ with terminal data $\widehat U(T,\xi)=\widehat\phi(\xi)$, hence
\begin{equation}\label{eq:spectral-en}
  \widehat U(t,\xi)\ =\ \exp\!\bigl((T-t)\,(\Psi(\xi)-r)\bigr)\,\widehat\phi(\xi),
  \qquad V(t,s)=U(t,\log s).
\end{equation}

\paragraph{Small-$\xi$ (diffusive) expansion.}
Let $m_1:=\sum_{\alpha}\gamma_\alpha\,\alpha$ and $m_2:=\sum_{\alpha}\gamma_\alpha\,\alpha^2$ (finite).
From \eqref{eq:symbol-en}, as $\xi\to0$,
\[
\Psi(\xi)= i\,\xi\,m_1\,\Delta x \;-\; \tfrac12\,\xi^2\,m_2\,(\Delta x)^2 \;+\; o(\xi^2).
\]
Equivalently, with $a_1(\Delta x):=m_1\,\Delta x$ and $a_2(\Delta x):=m_2\,(\Delta x)^2$,
the nonlocal generator on the $x$–axis contracts to the second–order diffusion operator
\[
\partial_t u \;=\; -\,a_1(\Delta x)\,\partial_x u \;+\; \tfrac12\,a_2(\Delta x)\,\partial_x^2 u \;+\; o(1),
\]
and if $a_1(\Delta x)\to a_1$, $a_2(\Delta x)\to\sigma^2$ as $\Delta x\downarrow0$, one obtains the classical diffusive limit.

\paragraph{Heavy-tail (fractional) scaling.}
If the jump rates have a power-law tail, e.g.\ $\gamma_\alpha \sim C\,|\alpha|^{-(1+\mu)}$ with $1<\mu<2$,
then the symbol exhibits a non-analytic small-$\xi$ behavior
\[
\Psi(\xi)\;\sim\; -\,c_\mu\,|\xi\,\Delta x|^\mu \qquad (\xi\to0),
\]
for a constant $c_\mu>0$ depending on $C$ and $\mu$.
If, moreover, $\kappa_\mu(\Delta x):=c_\mu(\Delta x)^\mu \to \kappa_\mu\in(0,\infty)$ as $\Delta x\downarrow0$,
the continuum limit on the $x$–axis becomes the Riesz–fractional generator
\[
\partial_t u \;=\; -\,\kappa_\mu\,(-\Delta)^{\mu/2} u.
\]

\noindent\textbf{Theorem 2 (Classical limit to Black--Scholes--Merton).}\label{thm:BSM}
Assume the moment limits
\begin{equation}\label{eq:moments-en}
  a_1(\Delta x):=\sum_{\alpha}\gamma_\alpha\,\alpha\,\Delta x\ \to\ a_1,
  \qquad
  a_2(\Delta x):=\sum_{\alpha}\gamma_\alpha\,(\alpha\,\Delta x)^2\ \to\ \sigma^2\in(0,\infty),
\end{equation}
and a third-order smallness
\begin{equation}\label{eq:third-en}
  a_3(\Delta x):=\sum_{\alpha}\gamma_\alpha\,|\alpha\,\Delta x|^3\ \to\ 0
  \qquad\text{as }\Delta x\downarrow0.
\end{equation}
If the risk-neutral constraint \eqref{eq:RN-en} holds for each $\Delta x$, then necessarily
$a_1+\tfrac12\sigma^2=r$.
Moreover, the generators converge on $C_c^3((0,\infty))$:
\[
  \mathcal L_S g \ \Longrightarrow\
  \mathcal L_{\mathrm{BSM}}g\ :=\ (r\,s)\,\partial_s g(s)\;+\;\tfrac12\,\sigma^2 s^2\,\partial_{ss}g(s).
\]
Consequently, the solutions $V^{(\Delta x)}$ to \eqref{eq:newBSM-en} converge (pointwise, locally uniformly in $s$)
to the unique classical solution $V$ of the Black--Scholes--Merton PDE
\begin{equation}\label{eq:BSM-en}
  \partial_t V+(r\,s)\,\partial_s V+\tfrac12\,\sigma^2 s^2\,\partial_{ss}V-rV=0,\qquad V(T,s)=\Phi(s).
\end{equation}

\emph{Proof.}
Set $y:=\alpha\Delta x$. For $g\in C_c^3((0,\infty))$,
a third-order Taylor expansion around $y=0$ gives
\[
  g(se^{y})-g(s)
  = y\,(s g_s)
    + \tfrac{y^2}{2}\,\bigl(s g_s+s^2 g_{ss}\bigr)
    + R_3(s,y),
\quad
  |R_3(s,y)|\le C(s)\,|y|^3,
\]
where $C(s)$ depends on third derivatives of $g$ on a compact set.
Insert into \eqref{eq:LS-en}:
\[
\begin{aligned}
  (\mathcal L_S g)(s)
   &= \Bigl(\sum_\alpha \gamma_\alpha y\Bigr)\,(s g_s)
    + \frac12\Bigl(\sum_\alpha \gamma_\alpha y^2\Bigr)\,(s g_s+s^2 g_{ss})
    + \sum_\alpha \gamma_\alpha R_3(s,y).
\end{aligned}
\]
By \eqref{eq:moments-en} and \eqref{eq:third-en}, the right-hand side converges to
$a_1\,s g_s + \tfrac12\,\sigma^2\,(s g_s+s^2 g_{ss})$.
Now expand the risk-neutral identity
$\sum_\alpha\gamma_\alpha(e^{y}-1)=r$ as $y+y^2/2+o(y^2)$ and sum to obtain $a_1+\tfrac12\sigma^2=r$.
Hence the limit generator equals
$(r\,s)g_s+\tfrac12\sigma^2 s^2 g_{ss}$.
Standard semigroup convergence (e.g.\ Trotter--Kato) implies
$V^{(\Delta x)}\to V$ with $V$ solving \eqref{eq:BSM-en}. \hfill$\square$

% --- replace the informal "Remarks" block with the following formal statements ---

\begin{proposition}[Continuous dividends]\label{prop:dividends}
Assume a constant dividend yield $q\in\mathbb R$. Then the risk–neutral martingale condition for the cum–dividend underlying implies the modified constraint
\[
\sum_{\alpha\in\mathbb Z}\gamma_\alpha\bigl(e^{\alpha\Delta x}-1\bigr)=r-q,
\]
and the nonlocal pricing equation \eqref{eq:newBSM-en} holds with $r$ replaced by $r-q$. Consequently, the diffusive small–mesh limit recovers the dividend Black–Scholes equation.
\begin{proof}
With dividends, $e^{-(r-q)t}S_t$ is a $\mathbb Q$–martingale. Apply $\mathcal L_S$ in \eqref{eq:LS-en} to $g(s)=s$ to obtain
$(\mathcal L_S s)(s)=s\sum_\alpha\gamma_\alpha(e^{\alpha\Delta x}-1)$.
Martingality yields $\frac{d}{dt}\mathbb E[S_t]=(r-q)\mathbb E[S_t]$, hence the stated constraint.
Replacing $r$ by $r-q$ in \eqref{eq:newBSM-en} is immediate; the diffusive limit follows as in the dividend–free case.
\end{proof}
\end{proposition}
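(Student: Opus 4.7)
The plan is to mirror the dividend-free derivation with a single modification to axiom (A4): with constant dividend yield $q$, the discounted cum-dividend process $e^{-(r-q)t}S_t$ is the $\mathbb{Q}$-martingale (equivalently, a self-financing strategy that reinvests dividends is the tradable whose discounted value is driftless). All other axioms (A1)--(A3), (A5) and the resulting Markov structure on the log-lattice are unchanged, so the algebraic steps carry over verbatim and only the drift-level input differs.

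First I would apply $\mathcal{L}_S$ to the identity observable $g(s)=s$; equation (eq:LS-en) gives $(\mathcal{L}_S\, g)(s)=s\sum_{\alpha}\gamma_\alpha(e^{\alpha\Delta x}-1)$. Dynkin's formula, combined with the summability assumptions in (A3), then yields $\tfrac{d}{dt}\mathbb{E}[S_t]=\bigl(\sum_{\alpha}\gamma_\alpha(e^{\alpha\Delta x}-1)\bigr)\mathbb{E}[S_t]$. Matching the right-hand drift against the cum-dividend martingale requirement $\tfrac{d}{dt}\mathbb{E}[S_t]=(r-q)\mathbb{E}[S_t]$ forces the modified constraint $\sum_{\alpha}\gamma_\alpha(e^{\alpha\Delta x}-1)=r-q$, which is the first claim.

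Next I would observe that the first-jump conditioning argument of Theorem~1 is structurally insensitive to the numerical value of the risk-neutral constraint: only the jump intensities $\gamma_\alpha$ and the bank-account discount $e^{-r(T-t)}$ appear in the derivation. Hence the nonlocal backward equation retains the form (eq:newBSM-en) with the jump rates now obeying the new constraint; this is what is meant by ``replacing $r$ by $r-q$'' in the equation. For the diffusive limit I would re-run the third-order Taylor expansion of Theorem~2 unchanged and then substitute the new constraint into the identity $a_1+\tfrac12\sigma^2=r-q$. The limiting generator becomes $(r-q)s\,\partial_s+\tfrac12\sigma^2 s^2\,\partial_{ss}$, and combined with the unchanged discount term $-rV$ this reproduces the dividend Black--Scholes PDE $\partial_t V+(r-q)s\,\partial_s V+\tfrac12\sigma^2 s^2\,\partial_{ss}V-rV=0$.

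The only non-bookkeeping obstacle is that the test function $g(s)=s$ is unbounded, so Dynkin's formula is not directly in the bounded-$f$ class of (A3). I would handle this by a truncation argument: apply the generator to $g_N(s)=s\wedge e^{N\Delta x}$, take expectations, and pass to the limit $N\to\infty$ using dominated convergence, which requires $\sum_\alpha\gamma_\alpha e^{\alpha\Delta x}<\infty$. This exponential-moment condition is in fact equivalent to the very constraint being proved (finite left-hand side), so once the constraint is verified on bounded truncations the full identity closes on itself, and the second-moment assumption in (A3) guarantees the Taylor remainder control needed in the diffusive limit.
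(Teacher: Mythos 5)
Your proposal is correct and follows essentially the same route as the paper's proof: apply $\mathcal L_S$ to $g(s)=s$, match the resulting drift against the cum--dividend martingale requirement $\tfrac{d}{dt}\mathbb E[S_t]=(r-q)\mathbb E[S_t]$ to get the modified constraint, and then observe that the backward equation and the diffusive limit (now with $a_1+\tfrac12\sigma^2=r-q$, discounting still at $r$) carry over unchanged. Your extra care with Dynkin's formula and the truncation of the unbounded test function $g(s)=s$ is a refinement of a point the paper leaves implicit in its regularity assumption (A5), not a genuinely different argument.
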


\begin{proposition}[Heisenberg–Schr\"odinger consistency and picture independence]\label{prop:picture-independence}
Let $L_\alpha=\sqrt{\gamma_\alpha}T_\alpha$ with $\sum_\alpha\gamma_\alpha<\infty$. For every Borel $f$ one has, in the Heisenberg picture,
\[
L^\ast\!\big(f(X)\big)=\sum_{\alpha\in\mathbb Z}\gamma_\alpha\big(f(X+\alpha\Delta x)-f(X)\big)=\mathcal L_X f(X),
\]

hence $L^\ast$ coincides with the classical jump generator on diagonal observables. In the Schr\"odinger picture, the diagonal of the GKSL master equation yields the Kolmogorov forward equation with $Q$–matrix determined by $\Gamma=(\gamma_\alpha)$; therefore the induced pricing semigroup on functions $g(s)=f(\log s)$ is $e^{t\mathcal L_S}$, and the backward pricing equation \eqref{eq:newBSM-en} is independent of the picture.
\begin{proof}
Use $T_{-\alpha}f(X)T_\alpha=f(X+\alpha\Delta x)$ to compute $L^\ast$ on $f(X)$ and compare with \eqref{eq:LX-en}.
For Schr\"odinger, taking diagonal entries of $\dot\rho=L(\rho)$ with $L_\alpha=\sqrt{\gamma_\alpha}T_\alpha$ gives
$\dot p_m=\sum_\alpha\gamma_\alpha p_{m-\alpha}-(\sum_\alpha\gamma_\alpha)p_m$.
Duality between the forward semigroup $e^{tQ}$ and the backward semigroup $e^{t\mathcal L_S}$ yields the claim.
\end{proof}
\end{proposition}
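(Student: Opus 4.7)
The proposition bundles two essentially bookkeeping claims — the action of $L^{\ast}$ on diagonal observables, and the diagonal projection of the GKSL master equation — and then ties them together by a standard duality. My plan is to treat the two pictures separately and close with the observation that both compute the same expectation.

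For the Heisenberg side my first step is to note that the canonical generator from Corollary \ref{cor:canonical} has $H=h(S)$, hence $H$ commutes with any Borel function $f(X)=f(\log S)$; the Hamiltonian commutator in $L^{\ast}$ therefore drops out when tested against diagonal observables. Next I would compute the dissipative term with $L_{\alpha}=\sqrt{\gamma_{\alpha}}\,T_{\alpha}$. Since $T_{\alpha}$ is unitary, $L_{\alpha}^{\dagger}L_{\alpha}=\gamma_{\alpha}I$, so the anticommutator collapses to $2\gamma_{\alpha}f(X)$. For the conjugation, I would verify by a direct computation on basis kets that $T_{-\alpha}f(X)T_{\alpha}=f(X+\alpha\Delta x)$ (this is the lattice shift of the spectral calculus). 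Assembling the sum gives
\[
L^{\ast}\!\big(f(X)\big)=\sum_{\alpha\in\mathbb Z}\gamma_{\alpha}\big(f(X+\alpha\Delta x)-f(X)\big),
\]
which is exactly $\mathcal L_X f(X)$ by \eqref{eq:LX-en}. The summability assumption $\sum_{\alpha}\gamma_{\alpha}<\infty$ together with boundedness of $f$ justifies the termwise manipulation.

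For the Schr\"odinger side I would set $p_m(t):=\langle m|\rho(t)|m\rangle$ and take diagonal matrix elements of $\dot\rho=L(\rho)$. The Hamiltonian commutator again contributes nothing on the diagonal because $H=h(S)$ is diagonal in the pointer basis (so $\langle m|[H,\rho]|m\rangle=0$). The jump term $\sum_{\alpha}\gamma_{\alpha}\,T_{\alpha}\rho T_{\alpha}^{\dagger}$ evaluated on diagonal entries yields $\sum_{\alpha}\gamma_{\alpha}\,p_{m-\alpha}$, by the identity $\langle m|T_{\alpha}\rho T_{\alpha}^{\dagger}|m\rangle=\langle m-\alpha|\rho|m-\alpha\rangle=p_{m-\alpha}$. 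Combined with the loss term $-\big(\sum_{\alpha}\gamma_{\alpha}\big)p_m$ from the anticommutator, this is the Kolmogorov forward equation $\dot p=Qp$ with the $Q$–matrix written in \eqref{eq:master}. Note that this step only requires that the initial state be diagonal (or that one projects onto the diagonal); otherwise off-diagonal coherences couple in through $H$ but do not feed back into the diagonal because $[H,\cdot]$ is diagonal-preserving on diagonal states.

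Finally, to obtain picture independence for the backward pricing PDE, I would pair the two halves by duality. For the test class $g(s)=f(\log s)$ the expectation $\mathbb E[g(S_t)\mid S_0=s]=\sum_{n}p_n(t)\,g(S_n)=\mathrm{Tr}\big(\rho(t)\,g(S)\big)$ can be differentiated in either picture: the Schr\"odinger computation gives $\langle \dot\rho,g(S)\rangle_{\mathrm{HS}}=\langle\rho,L^{\ast}g(S)\rangle_{\mathrm{HS}}$, while the Heisenberg computation gives $\mathrm{Tr}(\rho(0)\,\partial_t g(S,t))$ with $\partial_t g=L^{\ast}g$; both reduce to $\mathcal L_X f$ by the first step. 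Discounting at rate $r$ and imposing the terminal payoff then produces \eqref{eq:newBSM-en} regardless of the picture. I expect the main (modest) obstacle to be spelling out the diagonal closure cleanly — i.e.\ explaining why the commutator with $H=h(S)$ plays no role on the diagonal observables and diagonal states used in pricing — and justifying the interchange of infinite sums with $L^{\ast}$ under $\sum_{\alpha}\gamma_{\alpha}<\infty$ and mild growth of $f$; everything else is algebraic bookkeeping via the lattice shift identity $T_{-\alpha}f(X)T_{\alpha}=f(X+\alpha\Delta x)$.
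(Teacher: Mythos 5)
Your proposal is correct and follows essentially the same route as the paper's own (much terser) proof: the lattice shift identity $T_{-\alpha}f(X)T_{\alpha}=f(X+\alpha\Delta x)$ for the Heisenberg side, diagonal matrix elements of the GKSL equation giving the classical master equation for the Schr\"odinger side, and forward/backward semigroup duality to conclude picture independence. The extra details you supply (collapse of the anticommutator via $L_{\alpha}^{\dagger}L_{\alpha}=\gamma_{\alpha}I$, vanishing of the commutator with $H=h(S)$ on the diagonal, and the summability justification) are exactly the steps the paper leaves implicit.
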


\begin{proposition}[Spectral representation and FFT/FRFT implementation]\label{prop:spectral}
Let $U(t,x)$ solve $\partial_t U+\mathcal L_X U=0$ on $[0,T)\times\mathbb X$ with terminal data $U(T,\cdot)=\phi$, where $\phi(x)=\Phi(e^x)$. Then, writing the lattice DFT in $x$ and using the symbol $\Psi$ in \eqref{eq:symbol-en},
\begin{equation}
\widehat{U}(t,\xi)=\exp\!\big((T-t)\,\Psi(\xi)\big)\,\widehat{\phi}(\xi),
\qquad \xi\in[-\pi/\Delta x,\pi/\Delta x],
\end{equation}\ref{eq:spectral-en}
and the option price is $V(t,s)=e^{-r(T-t)}\,U\!\big(t,\log s\big)$. In particular, $\Re\,\Psi(\xi)\le0$ implies $\|\widehat{U}(t,\cdot)\|_\infty\le \|\widehat{\phi}\|_\infty$, which underpins a stable FFT/FRFT scheme for European claims.
\begin{proof}
By Proposition~\ref{prop:symbol} (Fourier multiplier form), $\partial_t\widehat{U}(t,\xi)=\Psi(\xi)\widehat{U}(t,\xi)$ with terminal condition $\widehat{U}(T,\xi)=\widehat{\phi}(\xi)$. Solve the scalar ODE to obtain \eqref{eq:spectral-en}.
The inequality follows from $\Re\,\Psi(\xi)=\sum_\alpha\gamma_\alpha(\cos(\xi\alpha\Delta x)-1)\le0$.
Finally, $V(t,s)=e^{-r(T-t)}U(t,\log s)$ is the Feynman–Kac representation for \eqref{eq:newBSM-en}.
\end{proof}
\end{proposition}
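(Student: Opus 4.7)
The plan is to diagonalise the nonlocal generator $\mathcal L_X$ via the lattice DFT so that the spatial problem becomes pointwise multiplication by the symbol $\Psi(\xi)$, reducing the backward equation to a parameterised family of scalar linear ODEs whose solutions deliver the exponential multiplier in \eqref{eq:spectral-en}; the $L^\infty$ stability bound then falls out of the sign of $\Re\Psi$.

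First, I would invoke Proposition~\ref{prop:symbol} to write $\widehat{\mathcal L_X f}(\xi)=\Psi(\xi)\,\widehat f(\xi)$ and apply it with $f=U(t,\cdot)$. Under the regularity in axiom (A5) (or after a Carr--Madan exponential dampening in $x$ for payoffs of linear growth), the DFT commutes with $\partial_t$ by dominated convergence, so $\partial_t U+\mathcal L_X U=0$ becomes $\partial_t\widehat U(t,\xi)=-\Psi(\xi)\widehat U(t,\xi)$ at each fixed $\xi\in[-\pi/\Delta x,\pi/\Delta x]$. Integrating this scalar ODE backward from $T$ with terminal data $\widehat U(T,\xi)=\widehat\phi(\xi)$ produces \eqref{eq:spectral-en} immediately. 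To close the pricing identification $V(t,s)=e^{-r(T-t)}U(t,\log s)$, I would set $U(t,x):=e^{r(T-t)}V(t,e^x)$ and check by the chain rule that the $+rU$ term generated by $\partial_t$ cancels the $-rV$ term in \eqref{eq:newBSM-en}, so $U$ indeed satisfies the undiscounted backward equation with payoff $\phi(x)=\Phi(e^x)$; the inverse DFT then recovers $U$ pointwise and hence $V$.

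For the stability inequality, a direct expansion of $\Psi(\xi)=\sum_\alpha\gamma_\alpha(e^{i\xi\alpha\Delta x}-1)$ gives $\Re\Psi(\xi)=\sum_\alpha\gamma_\alpha(\cos(\xi\alpha\Delta x)-1)\le 0$, hence $|e^{(T-t)\Psi(\xi)}|\le 1$ for $t\le T$; pointwise multiplication then yields $|\widehat U(t,\xi)|\le|\widehat\phi(\xi)|$, and taking the supremum in $\xi$ delivers the claimed bound that underwrites FFT/FRFT stability.

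The main obstacle I anticipate is not the ODE step or the symbol estimate but the integrability of $\widehat\phi$ for non-integrable payoffs: the European call, where $\phi(x)=(e^x-K)_+\notin L^1(\mathbb R,dx)$, makes the raw DFT only distributional. I would address this Carr--Madan-style by dampening with $e^{-\eta x}$ for a suitable $\eta>0$, shifting the inversion contour to $\xi+i\eta$ and replacing $\Psi(\xi)$ by $\Psi(\xi+i\eta)$; boundedness of the shifted symbol requires $\sum_\alpha\gamma_\alpha e^{|\eta\alpha\Delta x|}<\infty$, a mild exponential-moment condition on the jump intensities that is compatible with the risk-neutral constraint \eqref{eq:RN-en}. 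With this technical adjustment, all remaining steps (Fubini for the swap of sum and integral, the scalar ODE solution, and the real-part bound) are routine.
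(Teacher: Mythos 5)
Your proof is correct and follows essentially the same route as the paper: diagonalize $\mathcal L_X$ via Proposition~\ref{prop:symbol}, solve the resulting scalar ODE backward from the terminal data $\widehat{U}(T,\cdot)=\widehat{\phi}$, and use $\Re\,\Psi(\xi)=\sum_\alpha\gamma_\alpha(\cos(\xi\alpha\Delta x)-1)\le0$ for the $L^\infty$ bound. Your additions---verifying $V(t,s)=e^{-r(T-t)}U(t,\log s)$ by direct substitution into \eqref{eq:newBSM-en} rather than citing Feynman--Kac, writing the correct sign $\partial_t\widehat{U}=-\Psi\,\widehat{U}$ (the paper's proof has a harmless sign slip there), and flagging the exponential-damping requirement when $\phi$ is not summable---are refinements of the same argument, not a different one.
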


\subsection{Spectral--unitary construction of the one-step transition matrix}
\label{sec:unitary-P}

We work on a price lattice and construct the one-step transition probabilities
purely from operator/spectral axioms, without any counting of empirical
frequencies. The construction is ``quantum-compatible'' in the sense that
probabilities are obtained as squared moduli of a unitary propagator.

\paragraph{State space and notations.}
Let the (log) price be sampled on a lattice of states $\{\,|n\rangle:\ n\in\mathbb Z\,\}$
(infinite integer lattice) or on a finite cyclic lattice
$\{\,|n\rangle:\ n=0,1,\dots,N-1\,\}$ with periodic boundary conditions
$|n+N\rangle\equiv|n\rangle$. 
We write $\Delta t>0$ for the sampling interval.
Indices $n,m$ always denote lattice sites and $\alpha:=n-m$ denotes an increment.
All vectors are in $\ell^2$ of the underlying lattice.

\paragraph{Axioms.}
We postulate the following.
\begin{itemize}
\item[(A1)] \emph{Unitarity (one-step dynamics).} There exists a unitary operator
$U(\Delta t)$ on $\ell^2$ that advances the state by one time-step $\Delta t$:
$\;|\psi(t+\Delta t)\rangle=U(\Delta t)|\psi(t)\rangle$.
\item[(A2)] \emph{Homogeneity (space-translation invariance).}
One-step amplitudes depend only on increments: $U_{nm}(\Delta t)=u_{n-m}$.
Thus $U(\Delta t)$ is a circulant/convolution operator.
\item[(A3)] \emph{Born rule for one-step probabilities.}
The one-step transition probabilities are $P_{nm}(\Delta t):=|U_{nm}(\Delta t)|^2$.
\item[(A4)] \emph{Spectral phase (Hamiltonian spectrum).}
There exists a real-valued measurable function $H(\vartheta)$
--- the spectral representation (eigenvalue function) of the Hamiltonian operator ---
such that the frequency response
of $U(\Delta t)$ has unit modulus
\begin{equation}
\label{eq:unit-mod}
\mathcal U(\vartheta;\Delta t):=\sum_{\alpha\in\mathbb Z} u_\alpha\,e^{-i\vartheta\alpha}
\quad\text{satisfies}\quad
\mathcal U(\vartheta;\Delta t)=e^{-iH(\vartheta)\,\Delta t},\qquad
|\,\mathcal U(\vartheta;\Delta t)\,|=1.
\end{equation}
On the finite cyclic lattice we work at the discrete frequencies
$\vartheta_r:=2\pi r/N$, $r=0,1,\dots,N-1$, with
$\mathcal U(\vartheta_r;\Delta t)=e^{-iH_r\Delta t}$ for some real $\{H_r\}$.
\end{itemize}

\paragraph{Explicit amplitudes and probabilities (integer lattice).}
By inverse Fourier synthesis of \eqref{eq:unit-mod} we obtain the one-step
amplitude kernel
\begin{equation}
\label{eq:u-alpha}
u_\alpha
=\frac{1}{2\pi}\int_{-\pi}^{\pi} e^{-iH(\vartheta)\,\Delta t}\,e^{\,i\vartheta\alpha}\,d\vartheta,
\qquad \alpha\in\mathbb Z,
\end{equation}
and hence, by (A3), the one-step increment law and transition matrix
\begin{equation}
\label{eq:p-alpha}
p_\alpha(\Delta t):=|u_\alpha|^2,\qquad
P_{nm}(\Delta t)=p_{n-m}(\Delta t)=\bigl|\,u_{\,n-m}\,\bigr|^2.
\end{equation}

\paragraph{Explicit amplitudes and probabilities (finite cyclic lattice).}
Let $F$ be the $N\times N$ discrete Fourier matrix
\(
F_{nr}:=N^{-1/2}\,e^{\,i\frac{2\pi}{N}nr}\).
Let $D:=\mathrm{diag}(e^{-iH_0\Delta t},\dots,e^{-iH_{N-1}\Delta t})$.
Define the circulant unitary
\begin{equation}
\label{eq:U-circulant}
U(\Delta t)\ :=\ F^\ast D\,F.
\end{equation}
Then its matrix elements are
\begin{equation}
\label{eq:U-nm-finite}
U_{nm}(\Delta t)
=\frac{1}{N}\sum_{r=0}^{N-1} e^{-iH_r\Delta t}\,e^{\,i\frac{2\pi}{N}r(n-m)}
\ =:\ u_{n-m},
\end{equation}
and the one-step transition probabilities are
\begin{equation}
\label{eq:P-nm-finite}
P_{nm}(\Delta t)=\bigl|U_{nm}(\Delta t)\bigr|^2
=\left|\frac{1}{N}\sum_{r=0}^{N-1} e^{-iH_r\Delta t}\,e^{\,i\frac{2\pi}{N}r(n-m)}\right|^2
=:\ p_{n-m}(\Delta t).
\end{equation}

\paragraph{explicit matrix elements.}
On the integer lattice:
\begin{equation}
\boxed{\
P_{nm}(\Delta t)=\left|\frac{1}{2\pi}\int_{-\pi}^{\pi}
e^{-iH(\vartheta)\,\Delta t}\,e^{\,i\vartheta(n-m)}\,d\vartheta\right|^2,\
}
\label{eq:boxed-Z}
\end{equation}
and on the finite cyclic lattice of size $N$:
\begin{equation}
\boxed{\
P_{nm}(\Delta t)=\left|\frac{1}{N}\sum_{r=0}^{N-1}
e^{-iH_r\,\Delta t}\,e^{\,i\frac{2\pi}{N}r(n-m)}\right|^2,\
}
\label{eq:boxed-N}
\end{equation}
where $H(\vartheta)$ (or $\{H_r\}$) is the Hamiltonian spectrum.
Both formulas deliver a nonnegative, (doubly) stochastic, homogeneous transition matrix,
derived from first principles (A1)--(A4). 

\paragraph*{Transition and scope.}
In Sections 2.1–2.6 we adopt the convolutional (translation–invariant) parametrization of the Hamiltonian, which yields a circulant propagator \(U(\Delta t)\) and, via the Born rule, an explicit one–step transition matrix \(P(\Delta t)\) without any counting. 
For completeness—and to preempt any concern that an alternative functional–calculus construction \(H=\hbar f(S)\) might lead to a different dynamics—we record below a minimal equivalence statement. 
It shows that, under homogeneity (so that the relevant observable is diagonalized by the DFT), both parametrizations produce the \emph{same} propagator and hence the same transition probabilities; the only freedom is a \(2\pi\)-phase aliasing induced by the sampling step \(\Delta t\). 
We also note a degenerate boundary case: choosing \(S\) to commute with the price projections (e.g., the price operator itself) collapses the kernel to the identity and yields no transitions. 
A full spectral proof and extensions are deferred to the Appendix.\ref{app:spectral-fS}

% =========================
% Minimal equivalence statement for the main text
% =========================
\subsection*{Equivalence of the convolutional and functional-calculus Hamiltonians}

\begin{proposition}[Same propagator, hence same transition matrix]
\label{prop:equivalence}
Let $\Delta t>0$ and work on the finite cyclic lattice $\ell^2(\mathbb Z/N\mathbb Z)$
with price basis $\{|n\rangle\}_{n=0}^{N-1}$ and DFT $F$.
Suppose the dynamics are homogeneous (translation-invariant), so that a self-adjoint
observable $S$ is diagonalized by $F$:
\[
S \;=\; F^{\!*}\,\mathrm{diag}(s_0,\dots,s_{N-1})\,F,\qquad s_r\in\mathbb R.
\]
Let $f:\mathbb R\to\mathbb R$ be any real Borel function and define the Hamiltonian
$H_S:=\hbar f(S)$. Then the one-step propagator satisfies
\[
U_S(\Delta t):=e^{-iH_S\Delta t}
\;=\;
F^{\!*}\,\mathrm{diag}\!\big(e^{-i\,\hbar f(s_0)\Delta t},\dots,e^{-i\,\hbar f(s_{N-1})\Delta t}\big)\,F.
\]
Consequently, if one writes the convolutional (Fourier) construction with symbol
$H_F(\vartheta_r):=\hbar f(s_r)$ at the discrete frequencies $\vartheta_r=2\pi r/N$, then
\[
U_F(\Delta t):=
F^{\!*}\,\mathrm{diag}\!\big(e^{-iH_F(\vartheta_0)\Delta t},\dots,e^{-iH_F(\vartheta_{N-1})\Delta t}\big)\,F
\;=\;U_S(\Delta t),
\]
and the transition matrices coincide entrywise:
\[
P_{nm}(\Delta t)=\big|\,\langle n|U_S(\Delta t)|m\rangle\,\big|^2
=\big|\,\langle n|U_F(\Delta t)|m\rangle\,\big|^2,\qquad n,m=0,\dots,N-1.
\]
\end{proposition}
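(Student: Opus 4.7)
The plan is to reduce both constructions to the same diagonal–unitary sandwich $F^{\!*}\,\mathrm{diag}(\cdot)\,F$ by applying the Borel functional calculus to $S$ in its Fourier eigenbasis, and then to read off equality of matrix elements (and hence of Born-rule probabilities) directly.

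First, I would invoke the spectral calculus for a finite-dimensional self-adjoint operator. Since $S=F^{\!*} D_s F$ with $D_s=\mathrm{diag}(s_0,\ldots,s_{N-1})$ and $F$ unitary, the spectral projectors of $S$ are $P_r := F^{\!*}|r\rangle\!\langle r|F$, and for any real Borel $f$ one has
\begin{equation*}
  f(S) \;=\; \sum_{r=0}^{N-1} f(s_r)\,P_r \;=\; F^{\!*}\,\mathrm{diag}\!\bigl(f(s_0),\ldots,f(s_{N-1})\bigr)\,F.
\end{equation*}
Consequently $H_S=\hbar f(S)$ is diagonalized by the same $F$ with eigenvalues $\hbar f(s_r)$.

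Second, I would exponentiate. Because $-iH_S\Delta t$ is simultaneously Fourier-diagonal, the analytic (or equivalently the Borel) functional calculus gives
\begin{equation*}
  U_S(\Delta t) \;=\; e^{-iH_S\Delta t} \;=\; F^{\!*}\,\mathrm{diag}\!\bigl(e^{-i\hbar f(s_r)\Delta t}\bigr)_{r=0}^{N-1}\,F.
\end{equation*}
By the stated choice of symbol $H_F(\vartheta_r):=\hbar f(s_r)$, the convolutional propagator $U_F(\Delta t)$ is, \emph{by definition}, the same $F^{\!*}$-conjugate of the same diagonal unitary. Hence $U_S(\Delta t)=U_F(\Delta t)$ as operators on $\ell^2(\mathbb Z/N\mathbb Z)$, and taking matrix elements in the price basis and squaring moduli yields the claimed entrywise identity of the transition matrices $P_{nm}(\Delta t)$.

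The hard part is essentially bookkeeping rather than analysis: the proposition is a direct consequence of the fact that functional calculus commutes with unitary conjugation, together with the homogeneity hypothesis that $S$ is Fourier-diagonal. The only subtleties worth flagging are, first, the pairing $\vartheta_r\leftrightarrow s_r$ — without homogeneity, $S$ need not share the Fourier eigenbasis and the equivalence breaks — and, second, a residual $2\pi$-phase freedom: replacing $\hbar f(s_r)$ by $\hbar f(s_r)+2\pi k_r/\Delta t$ for any integers $k_r$ leaves each diagonal entry $e^{-i\hbar f(s_r)\Delta t}$ invariant, and hence preserves $U$ and $P_{nm}$. This is exactly the aliasing mentioned in the paragraph preceding the proposition, and it exhausts the non-uniqueness of the identification between the two parametrizations.
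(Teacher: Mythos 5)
Your proposal is correct and follows essentially the same route as the paper: both reduce $H_S=\hbar f(S)$ to the Fourier-diagonal form via the (finite-dimensional) spectral calculus, note that exponentiation commutes with the unitary conjugation by $F$, and conclude $U_S=U_F$ and hence entrywise equality of $P_{nm}$ by the Born rule, with the same $2\pi$-aliasing caveat the paper records in Corollary~\ref{cor:identifiability}. No gaps.
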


\begin{proposition}[Characterization and identifiability of circulant one-step propagators]\label{prop:identifiability}
Let $U\in\mathbb C^{N\times N}$ be a circulant unitary on $\ell^2(\mathbb Z/N\mathbb Z)$,
so that $U=F^{\!*}\,\mathrm{diag}(e^{-i\phi_0},\dots,e^{-i\phi_{N-1}})\,F$ for some real phases
$\phi_r\in\mathbb R$. Fix a sampling step $\Delta t>0$. Then there exist a self-adjoint observable
$S$ diagonalized by $F$ with spectrum $\{s_r\}_{r=0}^{N-1}$ and a real Borel function
$f:\mathbb R\to\mathbb R$ (defined on $\sigma(S)=\{s_r\}$) such that
\[
U \;=\; e^{-i\,\hbar f(S)\,\Delta t}.
\]
In particular, one may take $S=F^{\!*}\,\mathrm{diag}(s_0,\dots,s_{N-1})\,F$ with $s_r=\vartheta_r:=2\pi r/N$ and choose
\begin{equation}\label{eq:branch-choice}
\hbar f(s_r)\,\Delta t \;\equiv\; \phi_r \ \ (\mathrm{mod}\ 2\pi),\qquad r=0,\dots,N-1,
\end{equation}
then extend $f$ arbitrarily (e.g.\ by real interpolation) off $\{s_r\}$.

Conversely, for any self-adjoint $S$ diagonalized by $F$ and any real Borel $f$,
the unitary $U=e^{-i\,\hbar f(S)\,\Delta t}$ is circulant and has discrete symbol
$e^{-i\,\hbar f(s_r)\,\Delta t}$ at the frequencies $\vartheta_r$.
\end{proposition}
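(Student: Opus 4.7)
\smallskip
\noindent\textbf{Proof plan.}
The plan is to exploit the fact that a matrix on $\ell^2(\mathbb Z/N\mathbb Z)$ is circulant if and only if it is diagonalized by the DFT $F$, so that the assertion reduces to matching the given unitary's Fourier symbol against values of $e^{-\mathrm i\hbar f(\cdot)\Delta t}$ on a chosen spectrum. For the forward direction I would start from the spectral form $U=F^{*}\,\mathrm{diag}(e^{-\mathrm i\phi_0},\dots,e^{-\mathrm i\phi_{N-1}})\,F$, which is guaranteed by the hypothesis that $U$ is a circulant unitary. Next, I would pick any self-adjoint $S$ diagonalized by $F$ with distinct eigenvalues $\{s_r\}_{r=0}^{N-1}$; the canonical choice suggested by the statement is $s_r=\vartheta_r=2\pi r/N$, so that
\[
S \;=\; F^{*}\,\mathrm{diag}(s_0,\dots,s_{N-1})\,F.
\]

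I would then define a real-valued function on the finite set $\sigma(S)=\{s_r\}$ by solving the branch congruence
$\hbar f(s_r)\,\Delta t\equiv \phi_r\pmod{2\pi}$, for instance by taking $f(s_r):=\phi_r/(\hbar\Delta t)$ (any other representative of the same coset works equally well, a reflection of the phase-aliasing freedom). Because $\sigma(S)$ is a finite set, extending $f$ to a real Borel function on $\mathbb R$ is trivial: linear interpolation, or simply declaring $f$ to vanish off $\sigma(S)$, both yield admissible Borel extensions. Applying the Borel functional calculus for operators diagonalized by $F$, I then compute
\[
e^{-\mathrm i\hbar f(S)\Delta t}
\;=\; F^{*}\,\mathrm{diag}\!\bigl(e^{-\mathrm i\hbar f(s_r)\Delta t}\bigr)\,F
\;=\; F^{*}\,\mathrm{diag}(e^{-\mathrm i\phi_r})\,F
\;=\; U,
\]
which proves the existence statement and shows the displayed branch condition \eqref{eq:branch-choice} is precisely what is needed.

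For the converse, the same functional calculus does everything: if $S=F^{*}\,\mathrm{diag}(s_r)\,F$ and $f$ is real Borel, then $f(S)=F^{*}\,\mathrm{diag}(f(s_r))\,F$, so
$e^{-\mathrm i\hbar f(S)\Delta t}=F^{*}\,\mathrm{diag}(e^{-\mathrm i\hbar f(s_r)\Delta t})\,F$
is $F$-diagonalized, hence circulant, with discrete symbol $e^{-\mathrm i\hbar f(s_r)\Delta t}$ at the frequency $\vartheta_r$. The reality of $f$ ensures that the symbol has unit modulus, consistent with unitarity.

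\smallskip
\noindent\textbf{Expected main obstacle.}
None of the steps is technically heavy. The only subtle point is bookkeeping around non-uniqueness: the $2\pi$-periodicity of the exponential means that $f$ is determined only modulo $2\pi/(\hbar\Delta t)$ on $\sigma(S)$, and the spectrum $\{s_r\}$ itself may be chosen freely among finite sets of $N$ distinct reals. Both ambiguities are genuinely present in the statement (``one may take $s_r=\vartheta_r$ \ldots then extend $f$ arbitrarily''), and should be emphasized in the write-up via the $\pmod{2\pi}$ congruence in \eqref{eq:branch-choice}; they do not obstruct the existence claim. A second minor subtlety is the case of repeated eigenvalues of $S$: if one insisted on having some $s_r$ coincide, the branch condition would impose a matching consistency on the corresponding $\phi_r$. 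This is avoided entirely by the canonical distinct choice $s_r=2\pi r/N$, which is the route I would take in the write-up.
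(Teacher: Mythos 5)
Your proposal is correct and follows essentially the same route as the paper's proof: diagonalize the circulant unitary by the DFT, take $S=F^{*}\mathrm{diag}(\vartheta_r)F$, define $f$ on the spectrum via the branch congruence \eqref{eq:branch-choice}, and invoke the (finite-dimensional) functional calculus for both directions. Your added remarks on phase aliasing and distinctness of the $s_r$ match what the paper relegates to Corollary~\ref{cor:identifiability}, so nothing is missing or superfluous.
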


\begin{proof}
The circulant structure implies $U$ is diagonalized by $F$, hence
$U=F^{\!*}\mathrm{diag}(e^{-i\phi_r})F$ with $\phi_r\in\mathbb R$.
Let $S=F^{\!*}\mathrm{diag}(s_r)F$ with $s_r=\vartheta_r$.
Define $f$ on $\sigma(S)$ by \eqref{eq:branch-choice}. Then
$e^{-i\,\hbar f(S)\,\Delta t}=F^{\!*}\mathrm{diag}(e^{-i\hbar f(s_r)\Delta t})F
=F^{\!*}\mathrm{diag}(e^{-i\phi_r})F=U$. The converse follows by the spectral
calculus: if $S=F^{\!*}\mathrm{diag}(s_r)F$ and $f$ is real,
then $e^{-i\,\hbar f(S)\,\Delta t}=F^{\!*}\mathrm{diag}(e^{-i\hbar f(s_r)\Delta t})F$
is circulant. 
\end{proof}

\begin{corollary}[Identifiability up to phase aliasing and spectral relabeling]\label{cor:identifiability}
Let $U$ and $\Delta t>0$ be as above. The representation $U=e^{-i\,\hbar f(S)\,\Delta t}$ with
$S$ diagonalized by $F$ is unique up to:
\begin{itemize}
\item[(i)] \emph{phase aliasing}: for each $r$, the assignment $f(s_r)\mapsto f(s_r)+\tfrac{2\pi}{\hbar\Delta t}k_r$ with $k_r\in\mathbb Z$ leaves $U$ unchanged;
\item[(ii)] \emph{spectral relabeling}: simultaneous permutation of $\{s_r\}$ and $\{\phi_r\}$ by the same permutation $\pi$.
\end{itemize}
If, in addition, one fixes $S$ and restricts $f$ to a prescribed principal branch,
e.g.\ $\,\hbar f(s_r)\Delta t\in(-\pi,\pi]$ for all $r$, then $f$ is uniquely determined on $\sigma(S)$.
\end{corollary}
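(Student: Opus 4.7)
The strategy is to exploit the common diagonalization: under the circulant hypothesis, $U$, $S$, and $e^{-i\hbar f(S)\Delta t}$ are all simultaneously diagonalized by the DFT matrix $F$, so the identifiability question reduces entirely to pointwise spectral data. The plan is to first extract the unique spectral phases of $U$, then match them against any admissible pair $(S,f)$, and finally read off both ambiguities.

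First I would write the spectral decomposition $U = F^{\!*}\,\mathrm{diag}(e^{-i\phi_r})_{r=0}^{N-1}\,F$, noting that the eigenvalues $e^{-i\phi_r}$ of $U$ are uniquely determined while the phases $\phi_r$ themselves are only determined modulo $2\pi$ --- this is the only place aliasing can enter. Next, for any representation $U = e^{-i\hbar f(S)\Delta t}$ with $S$ diagonalized by $F$, the spectral calculus applied to $\hbar f(S)$ yields $U = F^{\!*}\,\mathrm{diag}(e^{-i\hbar f(s_r)\Delta t})\,F$. Matching diagonals entrywise gives $\hbar f(s_r)\Delta t \equiv \phi_r \pmod{2\pi}$ for each $r$, which is exactly the phase-aliasing clause (i): the shifts $f(s_r)\mapsto f(s_r)+\tfrac{2\pi}{\hbar\Delta t}k_r$ with $k_r\in\mathbb Z$ leave the eigenvalue matching intact, and conversely every admissible $f$ differs from a fixed choice only by such shifts.

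For clause (ii), I would observe that the labeling $r\mapsto(s_r,\phi_r)$ along the columns of $F$ is merely conventional: any simultaneous permutation $\pi$ acting on $\{s_r\}$ and $\{\phi_r\}$ permutes the diagonal but leaves the operator unchanged, so $(S,f)$ and $(S\circ\pi,\,f\circ\pi^{-1})$ realize the same $U$. For the final uniqueness statement, I fix $S$ --- so the tuple $(s_r)_{r=0}^{N-1}$ is pinned --- and impose the principal-branch condition $\hbar f(s_r)\Delta t\in(-\pi,\pi]$. Since each unit complex number $e^{-i\phi_r}$ has a unique argument in $(-\pi,\pi]$, this normalization selects a unique representative of each equivalence class under (i), determining $f$ uniquely on $\sigma(S)$.

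The one point requiring care, and the main obstacle I anticipate, is bookkeeping around spectral degeneracies: if $S$ has coincident eigenvalues $s_r=s_{r'}$, then $f$ must assign them the same value, while the permutation freedom within that eigenspace is larger than in the simple-spectrum case. I expect this to dissolve by noting that any well-defined $f:\sigma(S)\to\mathbb R$ automatically forces $\phi_r=\phi_{r'}$ on a degenerate block (otherwise $U$ fails to equal $e^{-i\hbar f(S)\Delta t}$), after which intra-block permutations act trivially on the diagonal and are absorbed into the global relabeling freedom of (ii).
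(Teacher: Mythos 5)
Your proposal is correct and follows essentially the same route the paper takes: the corollary is read off from the entrywise phase matching $\hbar f(s_r)\Delta t\equiv\phi_r\ (\mathrm{mod}\ 2\pi)$ established in the proof of Proposition~\ref{prop:identifiability}, with the principal branch selecting the unique representative. Your extra bookkeeping for degenerate eigenvalues of $S$ is a sound refinement the paper does not spell out, but it does not change the argument.
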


\begin{corollary}[Degenerate choice of $S$]
\label{cor:degenerate}
If $S$ is chosen as the price operator (diagonal in $\{|n\rangle\}$), then
$U(\Delta t)$ is diagonal and $P_{nm}(\Delta t)=\delta_{nm}$ (no transitions).
Hence nontrivial transition kernels require an $S$ that does not commute with the
price projections (e.g.\ the discrete momentum/shift generator).
\end{corollary}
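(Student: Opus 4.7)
The plan is a direct application of the spectral theorem followed by the Born rule, in two short steps.

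\textbf{Forward direction.} I would start from the hypothesis that the observable in the functional calculus is identified with the price operator, $\hat S\ket{n}=S_n\ket{n}$ on the pointer basis. By the Borel functional calculus, any real measurable $f$ satisfies $f(\hat S)\ket{n}=f(S_n)\ket{n}$, so $f(\hat S)$ is diagonal in $\{\ket{n}\}$. Exponentiation preserves the eigenbasis, hence
\[
U(\Delta t)=e^{-i\hbar f(\hat S)\Delta t}=\sum_{n}e^{-i\hbar f(S_n)\Delta t}\,\ket{n}\!\bra{n},
\]
so $\bra{n}U(\Delta t)\ket{m}=e^{-i\hbar f(S_n)\Delta t}\,\delta_{nm}$. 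Taking squared moduli via axiom (A3) of Section~\ref{sec:unitary-P} yields $P_{nm}(\Delta t)=\delta_{nm}$, which is the identity kernel.

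\textbf{Converse (contrapositive).} For the second sentence, I would argue the contrapositive. If $S$ commutes with every pointer projection $\ket{n}\!\bra{n}$, then a standard simultaneous-diagonalization argument shows $S$ is diagonal in $\{\ket{n}\}$, and the forward step above again forces $P_{nm}=\delta_{nm}$. Therefore any parametrization producing a nontrivial transition matrix must have $[S,\ket{n}\!\bra{n}]\neq 0$ for at least one $n$. The canonical example is the shift/momentum generator $\hat P$, whose spectral resolution is through the Fourier basis rather than the pointer basis; this is exactly the circulant regime covered by Proposition~\ref{prop:identifiability}.

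\textbf{Expected obstacle.} There is essentially none—the result is a one-line consequence of the spectral theorem once one recognizes that the construction of Section~\ref{sec:unitary-P} must place the measurement basis and the generator in noncommuting positions in order to yield any off-diagonal amplitude. The only point requiring care is notational: the symbol $S$ in $H=\hbar f(S)$ denotes an abstract self-adjoint observable used for the functional calculus, while $\hat S$ denotes the specific price operator diagonal in $\{\ket{n}\}$. The corollary simply records that if one naively identifies the two, the off-diagonal structure of $U$ is annihilated and no transition dynamics survives.
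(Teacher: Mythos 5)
Your proposal is correct and follows essentially the same route as the paper (Appendix~\ref{app:spectral-fS}, paragraph on nontriviality): diagonality of $f(S)$ in the pointer basis makes $U(\Delta t)$ diagonal with unit-modulus entries, so the Born rule gives $P_{nm}=\delta_{nm}$, and nontrivial kernels therefore require $S$ not to commute with the price projections. Your added remark that commuting with every projection $\ket{n}\!\bra{n}$ forces $S$ to be diagonal is a small elaboration the paper leaves implicit, but it does not change the argument.
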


\subsection{Risk–neutralization by Esscher tilt (martingalization)}
Let the one–step log–return lattice be $\Delta X\in\{\alpha h:\alpha=-M,\dots,M\}$ with empirical kernel $p_\alpha$ . For $\lambda\in\mathbb R$, define the exponentially tilted kernel
\[
p^{(\lambda)}_\alpha
:= \frac{p_\alpha\,e^{-\lambda h\alpha}}{Z(\lambda)},
\qquad
Z(\lambda)=\sum_{\beta=-M}^M p_\beta e^{-\lambda h\beta}.
\]
Write $\Psi_{p^{(\lambda)}}(s):=\sum_\alpha p^{(\lambda)}_\alpha e^{s h\alpha}$ for the one–step exponential moment under $p^{(\lambda)}$.

\begin{proposition}[Existence and uniqueness of the martingale tilt]
\label{prop:martingale-tilt}
Assume the empirical support is nondegenerate, i.e.\ there exist
$\alpha_-,\alpha_+$ with $\alpha_-<0<\alpha_+$ and $p_{\alpha_\pm}>0$.
Then there exists a unique $\lambda^\ast\in\mathbb{R}$ such that
\begin{equation}
\label{eq:martingale-cond}
\Psi_{\,p^{(\lambda^\ast)}}(1)
  = \sum_{\alpha=-M}^{M} p^{(\lambda^\ast)}_{\alpha}\, e^{\,h\alpha}
  = 1 .
\end{equation}
Under $p^{(\lambda^\ast)}$ the \emph{one-step price} is a martingale:
\[
  \mathbb{E}\!\left[S_{t+1}\mid S_t\right] = S_t,
  \qquad S_{t+1}= S_t\, e^{\,\Delta X_{t+1}} .
\]
\end{proposition}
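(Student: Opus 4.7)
The plan is to reduce the martingale condition to a single scalar equation in $\lambda$ and then prove that this equation has a unique real solution by monotonicity plus an intermediate-value argument. Writing the empirical moment generating function as $M(u):=\sum_{\alpha=-M}^{M}p_\alpha e^{uh\alpha}$, a direct substitution in the definition of $p^{(\lambda)}_\alpha$ gives
\[
\Psi_{p^{(\lambda)}}(1)\;=\;\frac{M(1-\lambda)}{M(-\lambda)},
\]
so condition \eqref{eq:martingale-cond} is equivalent to $M(1-\lambda^\ast)=M(-\lambda^\ast)$. Because the lattice is finite, $M$ is real-analytic and strictly positive on $\mathbb R$, so no integrability issue arises and this reformulation is clean.

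Next I would prove uniqueness by showing that $F(\lambda):=\Psi_{p^{(\lambda)}}(1)$ is strictly decreasing in $\lambda$. Differentiating the Esscher weights gives $\partial_\lambda p^{(\lambda)}_\alpha=-h\,p^{(\lambda)}_\alpha(\alpha-\mathbb E_{\lambda}[\alpha])$, where $\mathbb E_{\lambda}$ denotes expectation under $p^{(\lambda)}$. Summing against $e^{h\alpha}$ yields the clean identity
\[
F'(\lambda)\;=\;-h\,\mathrm{Cov}_{p^{(\lambda)}}\!\bigl(\alpha,\,e^{h\alpha}\bigr).
\]
Since $\alpha\mapsto e^{h\alpha}$ is strictly increasing and the tilted law $p^{(\lambda)}$ inherits the nondegenerate support of $p$ (both $\alpha_-$ and $\alpha_+$ keep positive mass under any Esscher tilt), this covariance is strictly positive, hence $F'(\lambda)<0$ for all $\lambda$. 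This immediately rules out multiple zeros.

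For existence I would analyze the boundary behaviour. As $\lambda\to-\infty$ the tilt $e^{-\lambda h\alpha}$ favours the largest $\alpha$, so $p^{(\lambda)}$ concentrates on $\alpha_+>0$ and $F(\lambda)\to e^{h\alpha_+}>1$; as $\lambda\to+\infty$ the tilt concentrates on $\alpha_-<0$ and $F(\lambda)\to e^{h\alpha_-}<1$. Both statements are elementary once one bounds all non-maximizing terms by geometrically small factors against the leading one, which is where the assumption $\alpha_-<0<\alpha_+$ with $p_{\alpha_\pm}>0$ is essentially used. Continuity of $F$ (again by real-analyticity of $M$ and positivity of $M(-\lambda)$) together with strict monotonicity and the two-sided values straddling $1$ yields, by the intermediate value theorem, exactly one $\lambda^\ast\in\mathbb R$ with $F(\lambda^\ast)=1$.

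Finally, the martingale conclusion follows mechanically: under $p^{(\lambda^\ast)}$,
\[
\mathbb E\!\left[S_{t+1}\mid S_t\right]\;=\;S_t\sum_{\alpha}p^{(\lambda^\ast)}_\alpha e^{h\alpha}\;=\;S_t\,\Psi_{p^{(\lambda^\ast)}}(1)\;=\;S_t.
\]
The only subtle step is the boundary-limit analysis, since it is where the nondegeneracy hypothesis is genuinely needed; everything else is algebraic rewriting or standard calculus. If the support were one-sided, $F$ would remain monotone but its range would not straddle $1$, and no tilt could enforce the martingale property—this is why the assumption $\alpha_-<0<\alpha_+$ is the key quantitative input.
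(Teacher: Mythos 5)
Your proof is correct and follows the same overall skeleton as the paper's: establish that $F(\lambda):=\Psi_{p^{(\lambda)}}(1)$ is continuous and strictly decreasing, show the limits as $\lambda\to\mp\infty$ straddle $1$, invoke the intermediate value theorem, and finish with the one-line martingale computation. The difference lies in how strict monotonicity is obtained. The paper argues via the monotone likelihood ratio property of the Esscher family (the tilted laws are stochastically decreasing in $\lambda$, and $e^{h\alpha}$ is increasing, so its expectation decreases); you instead differentiate and derive the explicit identity $F'(\lambda)=-h\,\mathrm{Cov}_{p^{(\lambda)}}\bigl(\alpha,e^{h\alpha}\bigr)<0$, with strict positivity of the covariance coming from the nondegenerate two-point support surviving every tilt. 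Your route is more computational and self-contained (no appeal to stochastic-ordering facts), and your rewriting $F(\lambda)=M(1-\lambda)/M(-\lambda)$ makes continuity and real-analyticity immediate; it would even yield uniqueness directly from strict convexity of $\log M$. One cosmetic point: as $\lambda\to-\infty$ the tilted law concentrates on the \emph{maximal} support point, which may exceed $\alpha_+$ (and symmetrically the minimal support point may lie below $\alpha_-$); since these extreme points are respectively $\ge\alpha_+>0$ and $\le\alpha_-<0$, your limiting bounds $>1$ and $<1$ hold verbatim, so this does not affect the argument.
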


\begin{proof}[Proof sketch]
Consider the exponential family $\{\,p^{(\lambda)}\,\}_{\lambda\in\mathbb{R}}$.
For $\lambda_2>\lambda_1$ the likelihood ratio
\[
  \frac{p^{(\lambda_2)}_\alpha}{p^{(\lambda_1)}_\alpha}\;\propto\;
  e^{-(\lambda_2-\lambda_1)\,h\alpha}
\]
has the monotone likelihood ratio property in $\alpha$; hence the family
is stochastically decreasing in $\lambda$. Because $e^{h\alpha}$ is
increasing in $\alpha$, the map
\[
  \lambda \longmapsto \Psi_{\,p^{(\lambda)}}(1)
  \;=\; \mathbb{E}_{p^{(\lambda)}}\!\left[e^{h\alpha}\right]
\]
is continuous and strictly decreasing, with
\[
  \lim_{\lambda\to -\infty}\Psi_{\,p^{(\lambda)}}(1) > 1,
  \qquad
  \lim_{\lambda\to +\infty}\Psi_{\,p^{(\lambda)}}(1) < 1
\]
by the nondegenerate support. By the intermediate value theorem there
is a unique $\lambda^\ast$ solving \eqref{eq:martingale-cond}.
Finally,
\[
  \mathbb{E}\!\left[S_{t+1}\mid S_t\right]
  = S_t\,\mathbb{E}_{p^{(\lambda^\ast)}}\!\left[e^{h\alpha}\right]
  = S_t,
\]
so under $p^{(\lambda^\ast)}$ the one–step price is a martingale.
\end{proof}

\begin{remark}[Fourier/spectral form]
Let $\widehat\varphi_{\mathbb P}(\theta)=\sum_\alpha p_\alpha e^{\mathrm i\theta h\alpha}$
be the one–step characteristic function under $\mathbb P$.
Then under the tilted kernel $p^{(\lambda)}$ the characteristic function is
\[
\widehat\varphi_{\mathbb Q}(\theta)
=\frac{\widehat\varphi_{\mathbb P}(\theta-\mathrm i\lambda)}
      {\widehat\varphi_{\mathbb P}(-\mathrm i\lambda)}.
\]
Thus the tilt corresponds to an imaginary shift in frequency.
\end{remark}

\begin{corollary}[Structure preservation on the finite ring]
\label{cor:circulant}
Define the $N\times N$ circulant transition matrix $P$ on the periodic lattice by
\[
  P_{i,j} \;=\; p^{(\lambda^\ast)}_{\,((j-i)\bmod N)} .
\]
Then $P$ is row–stochastic and diagonalized by the discrete Fourier basis;
its eigenvalues are
\[
  \lambda_k \;=\; \sum_{m=0}^{N-1} p^{(\lambda^\ast)}_m\, e^{\,\mathrm{i}\theta_k m},
  \qquad \theta_k := \frac{2\pi k}{N},\quad k=0,\dots,N-1 .
\]
\end{corollary}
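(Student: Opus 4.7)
The plan is to establish the two claims separately in short order, treating row-stochasticity first and then the spectral decomposition by a direct eigenvalue calculation.

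For row-stochasticity, I would fix $i\in\mathbb Z/N\mathbb Z$ and note that $j\mapsto (j-i)\bmod N$ is a bijection on $\mathbb Z/N\mathbb Z$. Consequently
\[
\sum_{j=0}^{N-1} P_{i,j}
=\sum_{j=0}^{N-1} p^{(\lambda^\ast)}_{(j-i)\bmod N}
=\sum_{m=0}^{N-1} p^{(\lambda^\ast)}_m .
\]
The right-hand side equals $1$: by construction $p^{(\lambda^\ast)}_\alpha\ge 0$ and $\sum_{\alpha=-M}^{M} p^{(\lambda^\ast)}_\alpha=1$ owing to the normalization by $Z(\lambda^\ast)$ in Proposition~\ref{prop:martingale-tilt}. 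In the (usual) non-aliasing regime $2M+1\le N$ the two sums coincide after the canonical embedding of $\{-M,\dots,M\}$ into $\mathbb Z/N\mathbb Z$; otherwise one works with the wrapped kernel $\tilde p_m:=\sum_{\ell\in\mathbb Z} p^{(\lambda^\ast)}_{m+\ell N}$, which inherits nonnegativity and total mass $1$.

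For the spectral part I would verify directly that the candidate vectors $v^{(k)}_n:=e^{\mathrm i\theta_k n}$ (the columns of $F^\ast$ with $F_{nr}=N^{-1/2}e^{\mathrm i\theta_r n}$ from \eqref{eq:U-circulant}) are eigenvectors of $P$. Writing $m:=(j-i)\bmod N$, so that $j\equiv i+m\pmod N$ and $e^{\mathrm i\theta_k j}=e^{\mathrm i\theta_k i}e^{\mathrm i\theta_k m}$ (the second factor is unambiguous because $\theta_k N=2\pi k\in 2\pi\mathbb Z$), one computes
\[
(Pv^{(k)})_i
=\sum_{j=0}^{N-1} p^{(\lambda^\ast)}_{(j-i)\bmod N}\,e^{\mathrm i\theta_k j}
= e^{\mathrm i\theta_k i}\sum_{m=0}^{N-1} p^{(\lambda^\ast)}_m\,e^{\mathrm i\theta_k m}
= \lambda_k\,v^{(k)}_i,
\]
with $\lambda_k$ exactly as stated. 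Since $\{v^{(k)}\}_{k=0}^{N-1}$ is an orthonormal basis, this yields the full diagonalization $P=F^\ast\,\mathrm{diag}(\lambda_0,\dots,\lambda_{N-1})\,F$.

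There is no real obstacle here; the argument is entirely structural, exploiting the fact that the shift group $\mathbb Z/N\mathbb Z$ is abelian and its irreducible characters are the DFT exponentials. The only point requiring a small amount of care is the modular reduction $(j-i)\bmod N$ when the support $\{-M,\dots,M\}$ does not fit inside a single fundamental domain of $\mathbb Z/N\mathbb Z$; in that case one passes to the aliased kernel $\tilde p$ and both the stochasticity and the eigenvalue formula go through unchanged with $\tilde p$ in place of $p^{(\lambda^\ast)}$.
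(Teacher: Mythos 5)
Your proof is correct: the row sums reduce to $\sum_m p^{(\lambda^\ast)}_m=1$ by the normalization in Proposition~\ref{prop:martingale-tilt}, and the direct computation showing that the DFT characters $v^{(k)}_n=e^{\mathrm i\theta_k n}$ are eigenvectors with eigenvalue $\lambda_k=\sum_m p^{(\lambda^\ast)}_m e^{\mathrm i\theta_k m}$ is exactly the standard circulant-matrix argument that the paper relies on (it states the corollary without proof, as a classical fact in the spirit of the $U=F^\ast DF$ construction of Section~\ref{sec:unitary-P}). Your extra care about embedding the support $\{-M,\dots,M\}$ into $\mathbb Z/N\mathbb Z$ via the wrapped kernel when $2M+1>N$ is a legitimate refinement that the paper leaves implicit.
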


\begin{remark}[Minimal-entropy interpretation]
Among all distributions $q$ absolutely continuous w.r.t.\ $p$ that satisfy
\[
  \sum_{\alpha} q_\alpha\, e^{h\alpha}=1 ,
\]
the minimizer of the relative entropy
\[
  \min_{q}\ \sum_{\alpha} q_\alpha \log\!\frac{q_\alpha}{p_\alpha}
  \quad\text{s.t.}\quad \sum_{\alpha} q_\alpha\, e^{h\alpha}=1
\]
is the exponential tilt
\[
  q_\alpha \;\propto\; p_\alpha\, e^{-\lambda^\ast h\alpha}.
\]
Hence $p^{(\lambda^\ast)}$ is the minimum–relative–entropy risk–neutralization of $p$.
\end{remark}

\section{Synthetic, Reproducible Experiments: Spectral (FFT/DFT) Solution of the Nonlocal Pricing Equation}
\label{sec:synth-fft}

\subsection{Translation–invariant jump generator on a log–price lattice}
Fix a log–price lattice $x\in[-L,L)$ with $N$ equispaced nodes and spacing $\Delta x := 2L/N$.
Consider the translation–invariant (jump) generator acting on $f:\,[-L,L)\to\mathbb R$,
\begin{equation}\label{eq:gen}
(\mathcal L f)(x)\;=\;\sum_{\alpha\in\mathbb Z\setminus\{0\}}\gamma_\alpha\Big(f(x+\alpha\Delta x)-f(x)\Big),
\qquad \gamma_\alpha\ge 0,
\end{equation}
and the risk–neutral backward equation for the (undiscounted) European price $u$:
\begin{equation}\label{eq:backward}
\partial_\tau u(x,\tau)\;=\;(\mathcal L u)(x,\tau)\;-\;r\,u(x,\tau),\qquad 
u(x,0)\;=\;g(x).
\end{equation}
Here $r\ge 0$ is the risk–free rate and $g$ is the payoff as a function of log–moneyness.
Risk–neutrality of the \emph{discounted} stock requires
\begin{equation}\label{eq:RN-constraint}
\sum_{\alpha\ne 0}\gamma_\alpha\Big(e^{\alpha\Delta x}-1\Big)\;=\;r.
\end{equation}
Indeed, for $s(x)=e^x$ we have $(\mathcal L s)(x)=e^x\sum_\alpha\gamma_\alpha(e^{\alpha\Delta x}-1)$, so
$\frac{d}{d\tau}\mathbb E[s(X_\tau)]=r\,\mathbb E[s(X_\tau)]$ iff \eqref{eq:RN-constraint} holds.

\begin{remark}[Risk–neutral scaling from a shape family]
Choose nonnegative ``shape'' weights $w_\alpha\ge 0$ (supported on $|\alpha|\le A$). Setting
\[
\lambda \;:=\;\frac{r}{\sum_{\alpha\ne 0}w_\alpha\,(e^{\alpha\Delta x}-1)}\,,\qquad
\gamma_\alpha\;:=\;\lambda\,w_\alpha
\]
enforces \eqref{eq:RN-constraint}. This separates economics (martingale) from physics (shape and range of jumps).
\end{remark}

\subsection{Exponential damping and spectral diagonalization}
Direct FFT of \eqref{eq:backward} is hampered by the nonperiodic call payoff.
Introduce exponential damping $v(x,\tau):=e^{-\eta x}u(x,\tau)$ with $\eta>1$.
Then $v$ solves
\begin{equation}\label{eq:damped}
\partial_\tau v \;=\;(\mathcal L^{(\eta)}v)\;-\;r\,v,\qquad 
(\mathcal L^{(\eta)}f)(x):=\sum_{\alpha\ne 0}\gamma_\alpha\Big(e^{\eta\alpha\Delta x}f(x+\alpha\Delta x)-f(x)\Big).
\end{equation}
Let $\widehat{f}(m)$ denote the $N$–point DFT on $[-L,L)$, with discrete frequencies $\omega_m:=2\pi m/N$.
Because $\mathcal L^{(\eta)}$ is convolutional, it is diagonalized by the DFT:
\begin{equation}\label{eq:symbol2}
(\mathcal F\,\mathcal L^{(\eta)}f)(m)\;=\;\psi^{(\eta)}(m)\,\widehat f(m),\qquad
\psi^{(\eta)}(m)\;=\;\sum_{\alpha\ne 0}\gamma_\alpha\Big(e^{\eta\alpha\Delta x}e^{\,i\,\omega_m\alpha}-1\Big).
\end{equation}
Hence the damped solution admits the closed spectral form
\begin{equation}\label{eq:spec-sol}
\widehat v(\tau,m)\;=\;e^{\,\tau\big(\psi^{(\eta)}(m)-r\big)}\,\widehat v(0,m),
\qquad v(\cdot,\tau)=\mathcal F^{-1}\widehat v(\tau,\cdot),\qquad
u(x,\tau)=e^{\eta x}v(x,\tau).
\end{equation}

\subsection{Single–FFT identity for all strikes}
We use the baseline call payoff at unit strike
\[
g_0(x)\;:=\;[e^x-1]_+,
\qquad h(y;\tau)\;:=\;\big(e^{\tau\mathcal L}g_0\big)(y)\,e^{-r\tau}.
\]
By translation–invariance and homogeneity on log–price, for any spot $S_0$ and strike $K$ we have
\begin{equation}\label{eq:single-FFT}
C(S_0,K,\tau)\;=\;K\,h\!\Big(-\ln\frac{K}{S_0}\,;\,\tau\Big).
\end{equation}

\begin{proposition}[Single–FFT pricing for the entire strike grid]
\label{prop:singleFFT}
Fix $\tau>0$. Compute $h(\cdot;\tau)$ once via \eqref{eq:spec-sol} with $g_0$.
Then \eqref{eq:single-FFT} delivers $C(S_0,K,\tau)$ for \emph{all} $K$ by reading $h$ at $-\ln(K/S_0)$.
\end{proposition}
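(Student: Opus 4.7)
The plan is to exploit two structural symmetries of the problem: (i) the \emph{strike–homogeneity} of the call payoff, which on the log–price axis becomes a pure translation combined with a multiplicative factor in $K$, and (ii) the \emph{translation invariance} of the jump generator $\mathcal L$ in \eqref{eq:gen}. Together they collapse the family of backward problems indexed by $K$ into a single reference evolution applied to $g_0$.

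\textbf{Key steps.} First I would rewrite the strike–$K$ call payoff on the log–price axis as
\[
G_K(x):=(e^x-K)_+ \;=\; K\,(e^{x-\ln K}-1)_+ \;=\; K\,g_0(x-\ln K),
\]
isolating the strike dependence into a log–shift by $\ln K$ and a prefactor $K$. Next I would invoke the Feynman--Kac representation for \eqref{eq:backward} (which holds under the risk–neutral constraint \eqref{eq:RN-constraint}): writing $P_\tau:=e^{\tau\mathcal L}$ and $x_0:=\ln S_0$,
\[
C(S_0,K,\tau)\;=\;e^{-r\tau}\,(P_\tau G_K)(x_0)\;=\;K\,e^{-r\tau}\,\bigl(P_\tau\,g_0(\cdot-\ln K)\bigr)(x_0).
\]
Because $\mathcal L$ is convolutional on the log–price lattice, the semigroup $P_\tau$ commutes with translations: $\bigl(P_\tau f(\cdot-a)\bigr)(y)=(P_\tau f)(y-a)$ for all admissible $a$ and $f$ in the semigroup's domain. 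Applying this with $a=\ln K$ and $y=\ln S_0$ yields $(P_\tau g_0)(\ln S_0-\ln K)=(P_\tau g_0)(-\ln(K/S_0))$, and substituting the definition $h(y;\tau)=e^{-r\tau}(P_\tau g_0)(y)$ gives the claimed identity \eqref{eq:single-FFT}.

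\textbf{Main obstacle.} The algebraic reduction is immediate; the delicate point is the functional–analytic justification of the translation step for a payoff with exponential growth, since $g_0(x)=(e^x-1)_+$ is neither bounded nor $L^2$ on the log–axis. This is precisely where the exponential damping of \eqref{eq:damped}--\eqref{eq:spec-sol} enters: after the substitution $v=e^{-\eta x}u$ with $\eta>1$, the damped payoff $e^{-\eta x}g_0(x)$ is bounded and integrable, the damped generator $\mathcal L^{(\eta)}$ inherits translation invariance from $\mathcal L$, and the translated payoff $e^{-\eta x}g_0(x-\ln K)=K^{-\eta}\cdot e^{-\eta(x-\ln K)}g_0(x-\ln K)$ differs from a translate of the damped $g_0$ only by the multiplicative constant $K^{-\eta}$, which is restored on inversion $u=e^{\eta x}v$. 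Hence the commutation with translation transports rigorously through the damping, and a single DFT inversion in the damped variable produces $h(\cdot;\tau)$, from which $C(S_0,K,\tau)$ is read off by evaluation (with interpolation between lattice nodes) at $y=-\ln(K/S_0)$ for every $K$.
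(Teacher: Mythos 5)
Your proof is correct and follows essentially the same route as the paper's own sketch: factor the strike out of the payoff via $(e^x-K)_+=K\,g_0(x-\ln K)$, then use translation invariance of the convolution semigroup $e^{\tau\mathcal L}$ and evaluate at $\ln S_0-\ln K=-\ln(K/S_0)$. Your additional remarks on the exponential growth of $g_0$ and how the damping $e^{-\eta x}$ (with the factor $K^{-\eta}$ restored on undamping) rigorously justifies applying the semigroup are a welcome strengthening of a point the paper's sketch leaves implicit, but they do not change the underlying argument.
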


\begin{proof}[Proof (sketch)]
Write $g_K(x)=[e^x-e^{\kappa}]_+=e^{\kappa}[e^{x-\kappa}-1]_+=e^{\kappa}g_0(x-\kappa)$ with $\kappa:=\ln K$.
By translation–invariance, $e^{\tau\mathcal L}g_0(\,\cdot-\kappa)=\big(e^{\tau\mathcal L}g_0\big)(\,\cdot-\kappa)$.
Evaluating at $y=-\ln(K/S_0)$ and discounting gives \eqref{eq:single-FFT}.
\end{proof}

\subsection{Kernel families used for illustration}
We consider two shape families, scaled by the risk–neutral factor in the Remark above:
\begin{itemize}
\item \textbf{Near–field exponential (symmetric):}\quad $w_\alpha=\exp(-|\alpha|/\xi)$, $\xi>0$.
\item \textbf{Skewed heavy tail:}\quad $w_\alpha=(|\alpha|+1)^{-(1+p)}\big(1+\mathrm{skew}\cdot\mathrm{sign}\,\alpha\big)$,
with $p>0$ and $|\mathrm{skew}|<1$ to keep $w_\alpha\ge 0$.
\end{itemize}
These yield, respectively, diffusion–like near–field behavior and fat–tailed, skewed wings.

\subsection{Numerical pipeline and implied volatilities}
Let $g_0(x)=[e^x-1]_+$ and choose a damping $\eta\in(1,2)$ so that $e^{-\eta x}g_0(x)$ decays at both ends.
For prescribed maturities $\tau\in\{\tau_1,\ldots,\tau_J\}$, we compute $h(y;\tau)$ by one FFT per $\tau$,
then read prices for a whole strike grid from \eqref{eq:single-FFT}.
Implied volatilities $\sigma_{\mathrm{impl}}(K,\tau)$ solve 
$C_{\mathrm{BS}}(S_0,K,r,\tau,\sigma)=C(S_0,K,\tau)$ and are recovered by a robust Newton–bisection
update with vega $\mathrm{Vega}=S_0\phi(d_1)\sqrt{\tau}$, where
\[
d_1=\frac{\ln(S_0/K)+(r+\tfrac12\sigma^2)\tau}{\sigma\sqrt{\tau}},\qquad
\phi(z)=\frac{1}{\sqrt{2\pi}}e^{-z^2/2}.
\]

% ---------- Algorithm: Single-FFT nonlocal pricing ----------
\begin{algorithm}[htbp]
\caption{Single-FFT nonlocal pricing and IV extraction (per maturity $\tau$)}
\label{alg:fft-pricer}

\SetKwInput{KwData}{Input}
\SetKwInput{KwResult}{Output}

\KwData{Grid size $N=2^n$, half-width $L$, spacing $\Delta x=2L/N$; damping $\eta>1$; shape weights $w_\alpha$ on $|\alpha|\le A$; spot $S_0$, rate $r$, strike list $\{K_j\}$.}
\KwResult{Arrays $\{C_j\}$ and $\{\sigma_{\mathrm{impl}}(K_j,\tau)\}$ for the whole strike list.}

\BlankLine

\textbf{Risk-neutral scaling:} $\lambda \leftarrow r\big/\sum_{\alpha\ne 0} w_\alpha \big(e^{\alpha \Delta x}-1\big)$; set $\gamma_\alpha \leftarrow \lambda\, w_\alpha$.\\

\textbf{Initialize (damped) payoff:} On grid $x_k=-L+k\Delta x$, $k=0,\dots,N-1$, set $g_0(x_k)=[e^{x_k}-1]_+$ and $v_0(x_k)\leftarrow e^{-\eta x_k}\, g_0(x_k)$.\\

\textbf{Forward FFT:} $V_0 \leftarrow \mathrm{FFT}(v_0)$.\\

\textbf{Build symbol:} For $m=0,\dots,N-1$ with $\omega_m=2\pi m/N$, compute
\[
\psi^{(\eta)}(m)\leftarrow \sum_{\alpha\ne 0}\gamma_\alpha \Big(e^{\eta \alpha \Delta x} e^{i\omega_m \alpha} - 1\Big).
\]\\

\textbf{Spectral propagation:} $V_\tau(m) \leftarrow \exp\!\big(\tau(\psi^{(\eta)}(m)-r)\big)\, V_0(m)$; then $v(\cdot,\tau)\leftarrow \mathrm{IFFT}(V_\tau)$.\\

\textbf{Undo damping:} $h(x_k;\tau) \leftarrow e^{\eta x_k}\, v(x_k,\tau)$.\\

\textbf{Price \& IV for all strikes:} For each $K_j$, set $y_j \leftarrow -\ln(K_j/S_0)$; read $h(y_j;\tau)$ by periodic interpolation on $\{x_k\}$; set $C_j \leftarrow K_j\, h(y_j;\tau)$; invert to $\sigma_{\mathrm{impl}}(K_j,\tau)$ by Newton--bisection.\\

\end{algorithm}
% ---------- End Algorithm ----------

\subsection{Parameter choices and figures}
Unless otherwise stated we use
\begin{center}
\begin{tabular}{@{}llllll@{}}
\toprule
$S_0$ & $r$ & $N$ & $L$ & $\eta$ & maturities \\
\midrule
$1.0$ & $0.05$ & $4096$ & $5.0$ & $1.5$ & $\tau\in\{0.5,\,2.0\}$ \\
\bottomrule
\end{tabular}
\end{center}
Two kernel families are reported:
\begin{itemize}
\item \textbf{NearFieldExp\_sym:} $w_\alpha=\exp(-|\alpha|/\xi)$ with $\xi=6.0$ (symmetric near–field);
\item \textbf{HeavyTail\_skew:} $w_\alpha=(|\alpha|+1)^{-(1+p)}(1+\mathrm{skew}\cdot\mathrm{sign}\,\alpha)$ with
$p=1.2$ and $\mathrm{skew}=0.35$ (fat tails with positive skew).
\end{itemize}
Figure~\ref{fig:iv-smiles} displays the resulting implied–volatility smiles at two maturities.
Short–maturity smiles are more curved and more sensitive to near–field vs. tail behavior; as the
kernel becomes more near–field and $\Delta x\to 0$, smiles flatten towards the Black–Scholes limit.

\begin{figure}[t]
\centering
\begin{minipage}{0.48\linewidth}
\centering
\includegraphics[width=\linewidth]{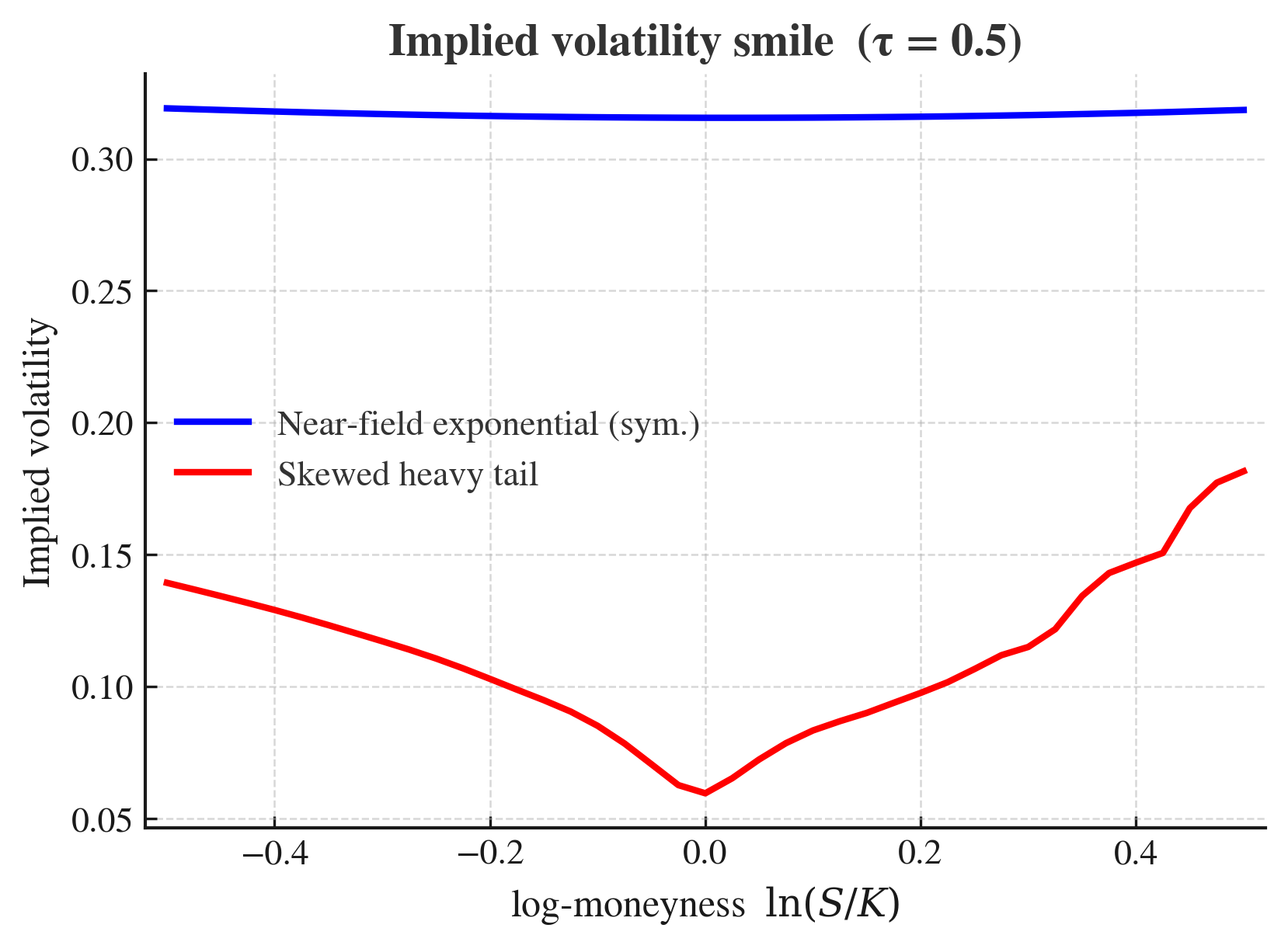}\\[-0.5ex]
{\small $\tau=0.5$}
\end{minipage}\hfill
\begin{minipage}{0.48\linewidth}
\centering
\includegraphics[width=\linewidth]{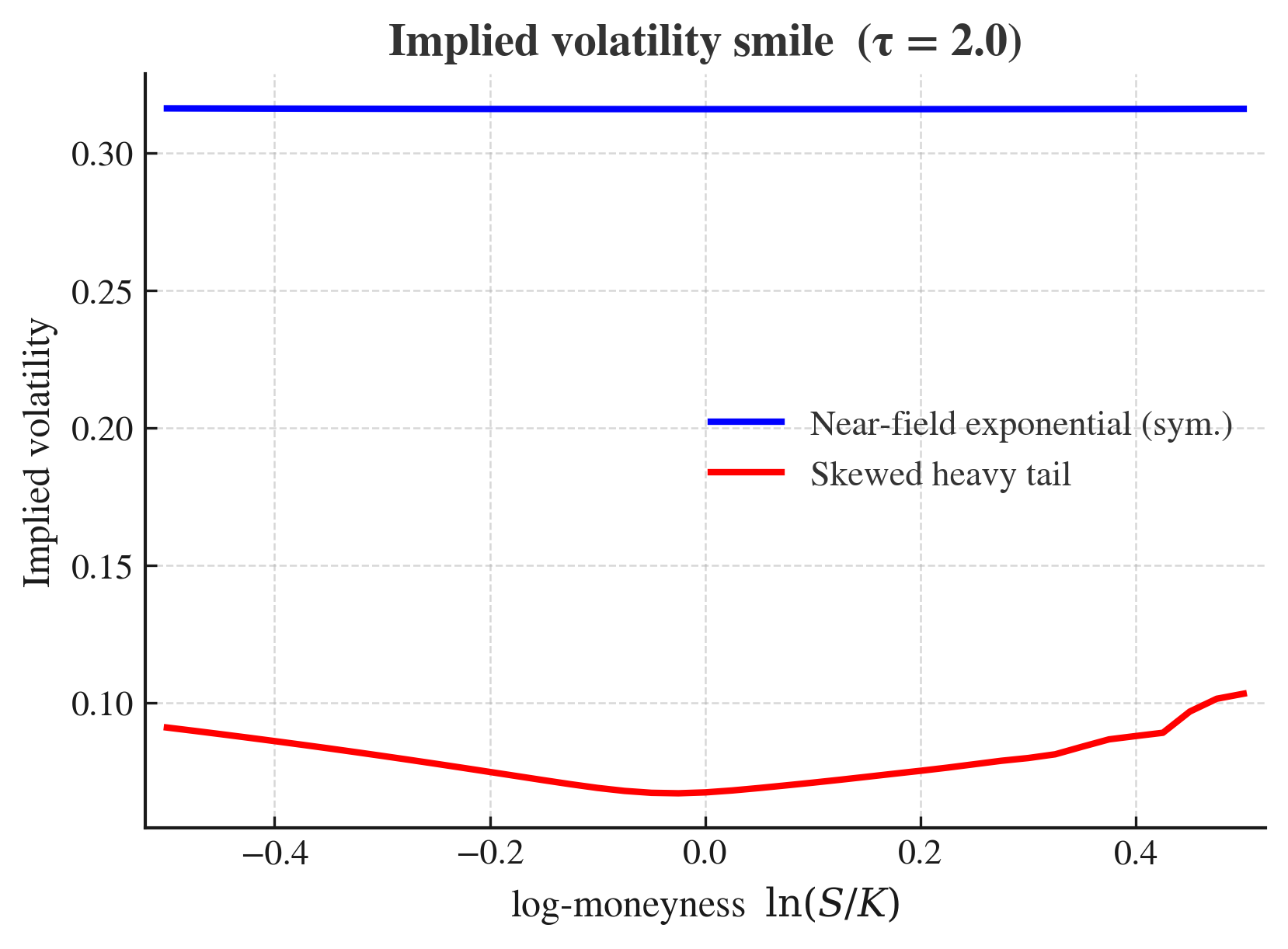}\\[-0.5ex]
{\small $\tau=2.0$}
\end{minipage}
\caption{Implied–volatility smiles from the spectral (FFT/DFT) solver for two kernel families:
NearFieldExp\_sym (symmetric near–field exponential) and HeavyTail\_skew (skewed power–law tail).
Parameters as in the text; each smile is obtained by a \emph{single} FFT per maturity.}
\label{fig:iv-smiles}
\end{figure}

\subsection{Discretization notes and classical limit}
\begin{itemize}
\item \emph{Aliasing/periodicity.} The exponential damping $\eta>1$ makes the damped payoff $e^{-\eta x}g_0(x)$
square–summable and compatible with periodic FFT on $[-L,L)$; $L$ should be large enough so that $h(\cdot;\tau)$ is
flat near the boundaries (empirically checked).
\item \emph{Interpolation.} Reading $h(y;\tau)$ at $y=-\ln(K/S_0)$ uses periodic interpolation on the uniform grid.
\item \emph{Diffusive limit.} As the kernel concentrates near $\alpha=0$ (e.g.\ increasing $\xi$) and $\Delta x\to 0$
with second–moment fixed, the symbol admits a quadratic expansion
$\psi^{(\eta)}(m)\approx \tfrac12\sigma_{\!*}^2(\eta)\,(-\omega_m^2)+\cdots$,
and the prices converge to Black–Scholes; smiles flatten accordingly.
\end{itemize}

% =========================
%  Axiomatic spectral construction of the one-step transition matrix
%  (Hamiltonian H = \hbar f(S))
% =========================
\section{Appendix: Axiomatic spectral construction of the one-step transition matrix}
\label{app:spectral-fS}

\paragraph{State space and price basis.}
Let $\mathcal H$ be the Hilbert space $\ell^2(\mathbb Z)$ (infinite price lattice) or
$\ell^2(\mathbb Z/N\mathbb Z)$ (finite cyclic lattice of size $N$). 
We denote by $\{|n\rangle\}$ the \emph{price basis} (price eigenstates), so that
vectors $\psi\in\mathcal H$ are square-summable sequences $\psi(n)=\langle n|\psi\rangle$.

\paragraph{Axioms.}
Fix a time step $\Delta t>0$. We postulate:
\begin{itemize}
\item[(A1)] \textbf{Dynamics.} The one-step evolution is unitary:
$U(\Delta t):\mathcal H\to\mathcal H$ with $U(\Delta t)^\ast U(\Delta t)=I$.
\item[(A2)] \textbf{Hamiltonian form.} There exists a self-adjoint observable $S$ on $\mathcal H$
and a real Borel function $f:\mathbb R\to\mathbb R$ such that the Hamiltonian is
\[
H\ :=\ \hbar\,f(S),\qquad \hbar>0,
\]
and the propagator is the unitary exponential
\[
U(\Delta t)\ =\ e^{-\,i\,H\,\Delta t}\ =\ e^{-\,i\,\hbar\,f(S)\,\Delta t}.
\]
\item[(A3)] \textbf{Born rule in the price basis.} The one-step transition probabilities are
\begin{equation}\label{eq:Born}
P_{nm}(\Delta t)\ :=\ \bigl|\langle n|U(\Delta t)|m\rangle\bigr|^2.
\end{equation}
\end{itemize}

\paragraph{Spectral calculus for $H=\hbar f(S)$.}
By the spectral theorem, there exists a projection-valued measure $E_S(d\lambda)$ such that
\[
S=\int_{\mathbb R}\lambda\,E_S(d\lambda),\qquad
f(S)=\int_{\mathbb R}f(\lambda)\,E_S(d\lambda).
\]
Hence the unitary $U(\Delta t)$ is the spectral multiplier
\begin{equation}\label{eq:U-spectral}
U(\Delta t)\ =\ \int_{\mathbb R} e^{-\,i\,\hbar\,f(\lambda)\,\Delta t}\,E_S(d\lambda).
\end{equation}
For $n,m$ in the price basis, introduce the complex spectral measures
\[
d\mu_{nm}(\lambda)\ :=\ \langle n|E_S(d\lambda)|m\rangle.
\]
Then the matrix elements of $U(\Delta t)$ admit the \emph{explicit spectral-integral formula}
\begin{equation}\label{eq:U-nm}
U_{nm}(\Delta t)
=\langle n|U(\Delta t)|m\rangle
=\int_{\mathbb R} e^{-\,i\,\hbar\,f(\lambda)\,\Delta t}\,d\mu_{nm}(\lambda).
\end{equation}

\paragraph{Explicit transition matrix elements (general form).}
Combining \eqref{eq:Born} and \eqref{eq:U-nm} we obtain
\begin{equation}\label{eq:P-nm-spectral}
\boxed{\quad
P_{nm}(\Delta t)\ =\ \Bigl|\;\int_{\mathbb R} e^{-\,i\,\hbar\,f(\lambda)\,\Delta t}\,
d\mu_{nm}(\lambda)\;\Bigr|^2,\qquad n,m\ \text{in the price basis.}\quad}
\end{equation}
By unitarity of $U(\Delta t)$,
\[
\sum_{n}P_{nm}(\Delta t)=\sum_m P_{nm}(\Delta t)=1,
\]
so $P(\Delta t)$ is (doubly) stochastic.

\paragraph{Translation-invariant (homogeneous) case and Fourier formulas.}
Assume the dynamics are homogeneous on the lattice, i.e.\ $S$ commutes with lattice
translations; equivalently, $S$ is diagonal in the Fourier (plane-wave) basis.
On the infinite lattice, let $|\vartheta\rangle$ ($\vartheta\in[-\pi,\pi)$) be the
generalized eigenstates of the shift, with
\[
\langle n|\vartheta\rangle=(2\pi)^{-1/2}\,e^{\,i\,\vartheta n},\qquad
\langle \vartheta|\vartheta'\rangle=\delta(\vartheta-\vartheta').
\]
If $s(\vartheta)$ denotes the spectral function of $S$ in this basis, then the
\emph{Hamiltonian spectrum (the spectrum of the Hamiltonian)} is
\[
H(\vartheta)\ =\ \hbar\,f\bigl(s(\vartheta)\bigr)\ \in\ \mathbb R,
\]
and the unitary is diagonal in $|\vartheta\rangle$ with eigenvalues
$e^{-iH(\vartheta)\Delta t}$. Transforming back to the price basis yields the
convolution kernel
\begin{equation}\label{eq:U-alpha-Fourier}
U_{nm}(\Delta t)\ =\ u_{n-m}
=\frac{1}{2\pi}\int_{-\pi}^{\pi}
e^{-\,i\,H(\vartheta)\,\Delta t}\,e^{\,i\,\vartheta(n-m)}\,d\vartheta
=\frac{1}{2\pi}\int_{-\pi}^{\pi}
e^{-\,i\,\hbar\,f(s(\vartheta))\,\Delta t}\,e^{\,i\,\vartheta(n-m)}\,d\vartheta.
\end{equation}
Therefore the one-step increment law and transition matrix are
\begin{equation}\label{eq:P-alpha-Fourier}
\boxed{\quad
P_{nm}(\Delta t)
=\bigl|U_{nm}(\Delta t)\bigr|^2
=\left|\frac{1}{2\pi}\int_{-\pi}^{\pi}
e^{-\,i\,\hbar\,f(s(\vartheta))\,\Delta t}\,e^{\,i\,\vartheta(n-m)}\,d\vartheta\right|^2,
\quad\text{(infinite lattice)}\quad}
\end{equation}
which depends only on the increment $\alpha=n-m$.

On the finite cyclic lattice ($n,m\in\{0,\dots,N-1\}$), with discrete frequencies
$\vartheta_r:=2\pi r/N$ and $s_r:=s(\vartheta_r)$, one obtains the discrete Fourier
representation
\begin{equation}\label{eq:P-finite}
\boxed{\quad
P_{nm}(\Delta t)
=\left|\frac{1}{N}\sum_{r=0}^{N-1}
\exp\!\bigl\{-\,i\,\hbar\,f(s_r)\,\Delta t\bigr\}\,e^{\,i\frac{2\pi}{N}r(n-m)}\right|^{\!2},
\quad\text{(finite $N$)}\quad}
\end{equation}
equivalently $U(\Delta t)=F^\ast \mathrm{diag}\!\big(e^{-i\hbar f(s_r)\Delta t}\big)F$ with
$F$ the $N$-point DFT, and $P=|U|^{\circ 2}$ (entrywise squared modulus).

\paragraph{Nontriviality and canonical choice of $S$.}
If $S$ is taken to be the \emph{price operator} (diagonal in $|n\rangle$),
then $U(\Delta t)$ is diagonal with entries $e^{-i\hbar f(s_n)\Delta t}$ and
$P_{nm}(\Delta t)=\delta_{nm}$ (no transitions). To obtain a nontrivial kernel,
$S$ must not commute with the price projections. A canonical choice is the
(discrete) \emph{momentum/shift generator} $\Pi$, for which $s(\vartheta)=\vartheta$
(mod $2\pi$); then $H(\vartheta)=\hbar f(\vartheta)$ and
\eqref{eq:P-alpha-Fourier}--\eqref{eq:P-finite} reduce to the standard
circulant/Fourier formulas used earlier.

\paragraph{Small-time expansion and classical calibration (optional).}
If $f\circ s$ is $C^2$ near $\vartheta=0$,
\[
\hbar f\bigl(s(\vartheta)\bigr)
=\mu\,\vartheta+\tfrac{1}{2}\sigma^2\,\vartheta^2+O(\vartheta^3),
\]
then the curvature $\partial_\vartheta^2(\hbar f\circ s)(0)=\sigma^2$ controls the
one-step increment variance, matching the volatility scale; in a diffusive scaling
limit this recovers the classical (GBM/BSM-type) behavior.

\appendix

\subsection{Appendix. Proof Of Uncertainty Principle}
\label{app:UP}

Assume $\hat{A}$ and $\hat{B}$ are two Hermitian operators, we functional
construct a non-negative integral by introducing a complex parameter
$\xi$:
\begin{equation}
\int_{\Omega}|\xi\hat{A}\Psi+i\hat{B}\Psi|^{2}d\Omega\geq0,\label{eq:integral_inequality}
\end{equation}
and expand this as follows:
\begin{align}
\int_{\Omega}|\xi\hat{A}\Psi+i\hat{B}\Psi|^{2}d\Omega & =\int_{\Omega}(\xi\hat{A}\Psi+i\hat{B}\Psi)^{*}(\xi\hat{A}\Psi+i\hat{B}\Psi)d\Omega\nonumber \\
 & =\int_{\Omega}[\xi^{2}(\hat{A}\Psi)^{*}(\hat{A}\Psi)+i\xi(\hat{A}\Psi)^{*}(\hat{B}\Psi)\nonumber \\
 & -i\xi(\hat{B}\Psi)^{*}(\hat{A}\Psi)+(\hat{B}\Psi)^{*}(\hat{B}\Psi)]d\Omega.\label{eq:expanded_integral}
\end{align}

For the term $\int_{\Omega}\xi^{2}(\hat{A}\Psi)^{*}(\hat{A}\Psi)d\Omega$,
since ${\hat{A}}$ is Hermitian, we obtain

\begin{align}
\int_{\Omega}\xi^{2}(\hat{A}\Psi)^{*}(\hat{A}\Psi)d\Omega & =\xi^{2}(\hat{A}\Psi,\hat{A}\Psi)=\xi^{2}(\hat{A}^{2}\Psi,\Psi)=\xi^{2}(\Psi,\hat{A}^{2}\Psi)\nonumber \\
 & =\xi^{2}\int_{\Omega}\Psi^{*}\hat{A}^{2}\Psi d\Omega,\label{eq:A_term}
\end{align}
and similarly for the term $\int_{\Omega}(\hat{B}\Psi)^{*}(\hat{B}\Psi)d\Omega$.
Note that

\begin{align}
(\Psi,\hat{A}\hat{B}\Psi)-(\Psi,\hat{B}\hat{A}\Psi) & =\int_{\Omega}\Psi^{*}(\hat{A}\hat{B}\Psi)d\Omega-\int_{\Omega}\Psi^{*}(\hat{B}\hat{A}\Psi)d\Omega\nonumber \\
 & =\int_{\Omega}\Psi^{*}[\hat{A},\hat{B}]\Psi d\Omega=i\int_{\Omega}\Psi^{*}\hat{C}\Psi d\Omega=i\overline{C_{\Psi}},\label{eq:commutator_result}
\end{align}
with $i\hat{C}=[\hat{A},\hat{B}]$. Therefore

\begin{align}
\int_{\Omega}|\xi\hat{A}\Psi+i\hat{B}\Psi|^{2}d\Omega & =\xi^{2}(\Psi,\hat{A}^{2}\Psi)+(\Psi,\hat{B}^{2}\Psi)+i\xi(i\overline{C_{\Psi}})\nonumber \\
 & =\xi^{2}\overline{A_{\Psi}^{2}}+\overline{B_{\Psi}^{2}}-\xi\overline{C_{\Psi}}=F(\xi)\geq0,\label{eq:final_inequality}
\end{align}
where

\begin{equation}
\begin{aligned}\overline{A_{\Psi}^{2}} & =(\Psi,\hat{A}^{2}\Psi)=(\hat{A}\Psi,\hat{A}\Psi)=\int_{\Omega}(\hat{A}\Psi)^{*}(\hat{A}\Psi)d\Omega.\end{aligned}
\label{eq:A_squared_expectation}
\end{equation}

Given $F(\xi)\geq0$ for all $\xi$, we can find its minimum:

\begin{equation}
\arg\min_{\xi}F(\xi)=\frac{\overline{C_{\Psi}}}{2\overline{A_{\Psi}^{2}}},\label{eq:min_xi}
\end{equation}
therefore

\begin{align}
\min F(\xi) & =\left(\frac{\overline{C_{\Psi}}}{2\overline{A_{\Psi}^{2}}}\right)^{2}\overline{A_{\Psi}^{2}}+\overline{B_{\Psi}^{2}}-\left(\frac{\overline{C_{\Psi}}}{2\overline{A_{\Psi}^{2}}}\right)\overline{C_{\Psi}}\nonumber \\
 & =\frac{\overline{C_{\Psi}}^{2}}{4(\overline{A_{\Psi}^{2}})^{2}}\overline{A_{\Psi}^{2}}+\overline{B_{\Psi}^{2}}-\frac{\overline{C_{\Psi}}^{2}}{2\overline{A_{\Psi}^{2}}}=\overline{B_{\Psi}^{2}}-\frac{1}{4}\frac{\overline{C_{\Psi}}^{2}}{\overline{A_{\Psi}^{2}}}\geq0,\label{eq:min_F_xi}
\end{align}
this implies

\begin{equation}
\overline{B_{\Psi}^{2}}\geq\frac{1}{4}\frac{\overline{C_{\Psi}}^{2}}{\overline{A_{\Psi}^{2}}}\Rightarrow\overline{A_{\Psi}^{2}}\cdot\overline{B_{\Psi}^{2}}\geq\frac{1}{4}(\overline{C_{\Psi}})^{2}.\label{eq:uncertainty_inequality}
\end{equation}

Let's introduce $\tilde{A}=(\hat{A}-\overline{A_{\Psi}})\hat{I}$
and $\tilde{B}=(\hat{B}-\overline{B_{\Psi}})\hat{I}$, then

\begin{equation}
[\tilde{A},\tilde{B}]=[(\hat{A}-\overline{A_{\Psi}})\hat{I},(\hat{B}-\overline{B_{\Psi}})\hat{I}]=[\hat{A},\hat{B}]=i\hat{C},\label{eq:commutator_tilde}
\end{equation}
we've already shown: $[\hat{A},\hat{B}]=i\hat{C}\Rightarrow\overline{A_{\Psi}^{2}}\cdot\overline{B_{\Psi}^{2}}\geq\frac{1}{4}(\overline{C_{\Psi}})^{2}$,
therefore

\begin{equation}
\overline{\tilde{A}_{\Psi}^{2}}\cdot\overline{\tilde{B}_{\Psi}^{2}}\geq\frac{1}{4}(\overline{C_{\Psi}})^{2}\Rightarrow(\hat{A}-\overline{A_{\Psi}})_{\Psi}^{2}\cdot(\hat{B}-\overline{B_{\Psi}})_{\Psi}^{2}\geq\frac{1}{4}\overline{C_{\Psi}}^{2}\neq0.
\end{equation}
\appendix

\end{document}